\def\fullversion
\useunder{\uline}{\ul}{}
\newcommand{\oref}[1]{\hyperref[#1]{#1}}
\newcommand{\ifconference}[1]{{{\ifx\fullversion\undefined{#1}\fi}\xspace}}
\newcommand{\iffullversion}[1]{{{\ifx\conference\undefined{#1}\fi}\xspace}}
\newcommand{\hide}[1]{} 
\setlist{topsep=0.3em,itemsep=0.2em,parsep=0.1em,leftmargin=*}
\newcolumntype{L}[1]{>{\raggedright\let\newline\\\arraybackslash\hspace{0pt}}m{#1}}
\newcolumntype{C}[1]{>{\centering\let\newline\\\arraybackslash\hspace{0pt}}m{#1}}
\newcolumntype{R}[1]{>{\raggedleft\let\newline\\\arraybackslash\hspace{0pt}}m{#1}}
\newcolumntype{B}{>{\bf}c}
\titleformat{\subsection}{\normalfont\large\bfseries}{\thesubsection}{1em}{}
\titlespacing{\section}{0pt}{0.3em}{0.2em} 
\titlespacing{\subsection}{0pt}{0.3em}{0.2em} 
\titlespacing{\subsubsection}{0pt}{0.1em}{1em} 
\newcommand{\mysubsubsection}[1]{\underline{#1}.}
\titleformat{\subsubsection}[runin]
{\normalfont\normalsize\bfseries}{\thesubsubsection}{1em}{\mysubsubsection}
\newcommand{\myparagraph}[1]{\vspace{.03in}\noindent {\boldmath\bf #1\unboldmath}}
\newcommand{\rmv}[1]{\textcolor{red}{\sout{}}}
\newcommand{\nosemic}{\renewcommand{\@endalgocfline}{\relax}}
\newcommand{\dosemic}{\renewcommand{\@endalgocfline}{\algocf@endline}}
\definecolor{dpcol}{RGB}{0,160,240}
\newcommand{\yihan}[1]{{\color{blue}{\bf Yihan:} #1}}
\newcommand{\ziyang}[1]{{\color{BlueGreen}{\bf Ziyang:} #1}}
\newcommand{\mb}[1]{{\mbox{\emph{#1}}}}
\definecolor{framelinecolor}{RGB}{68,114,196}
\newdimen\zzsize
\newdimen\kwsize
\newcommand{\basicstyle}{\fontsize{\zzsize}{1\zzsize}\ttfamily}
\newcommand{\keywordstyle}{\fontsize{\kwsize}{1\kwsize}\ttfamily\bf}
\newdimen\zzlstwidth
\crefname{appendix}{Appendix}{Appendix}
\crefname{theorem}{Thm.}{Thm.}
\crefname{lemma}{Lem.}{Lem.}
\crefname{corollary}{Col.}{Col.}
\crefname{table}{Tab.}{Tab.}
\crefname{algorithm}{Alg.}{Alg.}
\crefname{figure}{Fig.}{Fig.}
\crefname{fact}{Fact}{Fact}
\Crefname{table}{Tab.}{Tab.}
\crefname{problem}{Problem}{Problem}
\renewcommand\footnotetextcopyrightpermission[1]{} 
\newcommand*{\rom}[1]{\expandafter\@slowromancap\romannumeral #1@}
\newcommand{\type}[1]{\texttt{TYPE} \rom{#1}\xspace}
\newcommand{\kdtree}{$k$d-tree\xspace}
\newcommand{\ourtree}{\mbox{\textsf{Pkd-tree}}\xspace}
\newcommand{\ours}{\mbox{\textsf{Pkd}}\xspace}
\newcommand{\bhltree}{\mbox{\textsf{BHL-tree}}\xspace}
\newcommand{\logtree}{\mbox{\textsf{Log-tree}}\xspace}
\newcommand{\oursbb}{\mbox{\textsf{Pkd-bb}}\xspace}
\newcommand{\pargeo}{\textsf{ParGeo}\xspace}
\newcommand{\rtree}{\textsf{R-tree}\xspace}
\newcommand{\zdtree}{\textsf{Zd-tree}\xspace}
\newcommand{\cgal}{\textsf{CGAL}\xspace}
\newcommand{\ourlib}{\textsf{Pkd-tree}\xspace}
\newcommand{\balpara}{\alpha}
\newcommand{\varden}{\texttt{Varden}\xspace}
\newcommand{\uniform}{\texttt{Uniform}\xspace}
\newcommand{\lson}{\ell}
\newcommand{\rson}{\textit{r}}
\newcommand{\buildTreeSkeleton}{\textsc{BuildTreeSkeleton}}
\newcommand{\build}{\textsc{BuildTree}}
\newcommand{\fetchTreeSkeleton}{\textsc{FetchTreeSkeleton}}
\newcommand{\batchinsert}{\textsc{BatchInsert}}
\newcommand{\deleteTree}{\textsc{DeleteTree}}
\newcommand{\batchdelete}{\textsc{BatchDelete}}
\newcommand{\leafwrap}{\phi}
\newcommand{\skheight}{\lambda}
\newcommand{\samplerate}{\sigma}
\newcommand{\knn}{$k$-NN}
\newcommand{\os}{\sigma\xspace}
\newcommand{\levels}{\lambda\xspace}
\newcommand{\ourkdtree}{\ourtree}
\newcommand{\lc}{\mathit{lc}}
\newcommand{\rc}{\mathit{rc}}
\newcommand{\mtext}[1]{{\mbox{{#1}}}} 
\newcommand{\emp}[1]{\emph{\textbf{\boldmath #1\unboldmath}}} 
\newcommand{\ip}[2]{\langle#1,#2\rangle}
\newcommand{\bdita}[1]{{\boldmath\textbf{\textit{#1}}\unboldmath}}
\newcommand{\R}{\mathbb{R}}
\newcommand{\N}{\mathbb{N}}
\newcommand{\matht}{\mathcal{T}\xspace}
\newcommand{\varmatht}{\mathscr{T}\xspace}
\newcommand{\whp}{\emph{whp}\xspace}
\DeclareMathOperator*{\polylog}{polylog}
\newtheorem{theorem}{Theorem}[section]
\newtheorem{lemma}[theorem]{Lemma}
\let \originalleft \left
\let\originalright\right
\renewcommand{\left}{\mathopen{}\mathclose\bgroup\originalleft}
\renewcommand{\right}{\aftergroup\egroup\originalright}
\newtheoremstyle{exampstyle}
{.5em} 
{1em} 
{\it} 
{.5em} 
{\it \bfseries} 
{.} 
{.5em} 
{} 
\renewenvironment{proof}[1][\proofname]{\par
	\vspace{-2\topsep}
	\pushQED{\qed}%
	\normalfont
	\topsep0pt \partopsep0pt 
	\trivlist
	\item[\hskip\labelsep
	\itshape
	#1\@addpunct{.}]\ignorespaces
}{%
	\popQED\endtrivlist\@endpefalse
}
\crefname{section}{Sec.}{Sec.}
\crefname{table}{Tab.}{Tab.}
\crefname{figure}{Fig.}{Fig.}
\newcommand{\pushright}[1]{\ifmeasuring@#1\else\omit\hfill$\displaystyle#1$\fi\ignorespaces}
\newcommand{\pushleft}[1]{\ifmeasuring@#1\else\omit$\displaystyle#1$\hfill\fi\ignorespaces}
\begin{document}

\makeatletter
\gdef\@copyrightpermission{
	\begin{minipage}{0.2\columnwidth}
		\href{https://creativecommons.org/licenses/by/4.0/}{\includegraphics[width=0.90\textwidth]{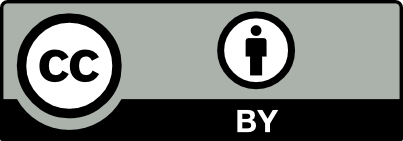}}
	\end{minipage}\hfill
	\begin{minipage}{0.8\columnwidth}
		\href{https://creativecommons.org/licenses/by/4.0/}{This work is licensed under a Creative Commons Attribution International 4.0 License.}
	\end{minipage}
	\vspace{5pt}
}
\makeatother

\title{Parallel \kdtree{} with Batch Updates}


\settopmatter{authorsperrow=4}

\author{Ziyang Men}
\email{zmen002@ucr.edu}
\affiliation{\institution{UC Riverside}\country{}}
\author{Zheqi Shen}
\email{zshen055@ucr.edu}
\affiliation{\institution{UC Riverside}\country{}}
\author{Yan Gu}
\email{ygu@cs.ucr.edu}
\affiliation{\institution{UC Riverside}\country{}}
\author{Yihan Sun}
\email{yihans@cs.ucr.edu}
\affiliation{\institution{UC Riverside}\country{}}


\begin{abstract}
	The \kdtree{} is one of the most widely used data structures to manage multi-dimensional data.
	Due to the ever-growing data volume, it is imperative to consider parallelism in \kdtree{s}.
	However, we observed challenges in existing parallel \kdtree{} implementations, for both constructions and updates.

	The goal of this paper is to develop efficient in-memory \kdtree{s} by supporting high parallelism and cache-efficiency.
	We propose the \ourtree{} (Parallel \kdtree{}), a parallel \kdtree{} that is efficient both in theory and in practice.
	The \ourtree{} supports parallel tree construction, batch update (insertion and deletion), and various queries including $k$-nearest neighbor search, range query, and range count.
	We proved that our algorithms have strong theoretical bounds in work (sequential time complexity), span (parallelism), and cache complexity.
	Our key techniques include 1) an efficient construction algorithm that optimizes work, span, and cache complexity simultaneously, and 2) reconstruction-based update algorithms that guarantee the tree to be weight-balanced.
	With the new algorithmic insights and careful engineering effort, we achieved a highly optimized implementation of the \ourtree{}.

	We tested \ourtree{} with various synthetic and real-world datasets, including both uniform and highly skewed data.
	We compare the \ourtree{} with state-of-the-art parallel \kdtree{} implementations.
	In all tests, with better or competitive query performance,
	\ourtree{} is much faster in construction and updates consistently than all baselines.
	We released our code.
	\let\thefootnote\relax\footnote{To appear: International Conference on Management of Data (SIGMOD), 2025}
	\setcounter{footnote}{0}
\end{abstract}

\hide{
	We present a theoretically efficient parallel tree construction and batch-update algorithm for \kdtree{} with a high-performance parallel implementation. Although the use of \kdtree{} is common in many scenarios such as database, clustering, machine learning, etc, current implementation of \kdtree{} either falls short of supporting efficient batch update or integrates limited parallelism as a software library. This gap significantly restricts the usage of \kdtree{} especially when dealing with large volumes of data. We propose a parallel \kdtree{} construction algorithm that achieves optimal $O(n\log n)$ work, $O(\log n\log_M n)$ span, and $O(\text{Sort}(n))$ I/O bound for point sets of size $n$. This is motivated by emerging the widely used techniques in \kdtree{} community, such as sampling and partial rebuilding, with the I/O-friendly bucket-distribution from recent parallel sorting algorithm. The tree has height $\log n+O(1)$ and thus preserves most query bounds. Batch update of size $m$ requires $O(\log n\log_M n)$ span, and amortized $O(\log^2 n)$ work and $O(\log(n/m)+(\log n\log_Mn/B))$ I/O cost per element. All bounds above are held with high probability.

	With a combination of new algorithmic insights and careful engineering effort, we provide a highly optimized implementation of parallel \kdtree{}, named \ourlib, that supports parallel tree construction, batch-update and various query types. The \ourlib outperforms other state-of-the-art work in almost every benchmark we tested. In addition to a comprehensive experimental study, we also investigate the effect of applied techniques in tree construction quantitatively. Our experiments show that \kdtree{} is reasonably consistent with minor imbalance for different types of queries.}


\begin{CCSXML}
	<ccs2012>
	<concept>
	<concept_id>10010520.10010553.10010562</concept_id>
	<concept_desc>Computer systems organization~Embedded systems</concept_desc>
	<concept_significance>500</concept_significance>
	</concept>
	<concept>
	<concept_id>10010520.10010575.10010755</concept_id>
	<concept_desc>Computer systems organization~Redundancy</concept_desc>
	<concept_significance>300</concept_significance>
	</concept>
	<concept>
	<concept_id>10010520.10010553.10010554</concept_id>
	<concept_desc>Computer systems organization~Robotics</concept_desc>
	<concept_significance>100</concept_significance>
	</concept>
	<concept>
	<concept_id>10003033.10003083.10003095</concept_id>
	<concept_desc>Networks~Network reliability</concept_desc>
	<concept_significance>100</concept_significance>
	</concept>
	</ccs2012>
\end{CCSXML}

\ccsdesc[500]{Computer systems organization~Embedded systems}
\ccsdesc[300]{Computer systems organization~Redundancy}
\ccsdesc{Computer systems organization~Robotics}
\ccsdesc[100]{Networks~Network reliability}


\renewcommand\footnotetextcopyrightpermission[1]{} 
\fancyhead{} 

\maketitle
\setcounter{page}{1}
\section{Introduction}\label{sec:intro}
The \kdtree{} is one of the most widely-used data structures for managing multi-dimensional data.
A \kdtree{} maintains a set of points in $D$ dimensions\footnote{Based on the original terminology, \kdtree{} deals with $k$-dimensional data. To avoid overloading $k$ in different scenarios such as the ``\knn{}'' query (i.e., finding $k$ nearest neighbors of a given point), we use $D$ as the number of dimensions in this paper.}, and supports various queries such as $k$-nearest neighbor (\knn{}), orthogonal range count and range report.
Compared to other counterparts,
the \kdtree{} has its unique advantages, such as linear space,
simple algorithms,
being comparison-based (and thus resistant to skewed data),
scaling to reasonably-large dimensions (being efficient up to $D\approx 10$)
and supporting a wide range of query types.
Due to these advantages, the \kdtree{} is the choice of data structure in many applications.
Indeed, after its invention by Bentley in 1975~\cite{bentley1975multidimensional}, \kdtree{} has been widely used and cited by over ten thousand times across multiple areas such as databases~\cite{guting1994introduction,Graefe93,bohm2001searching,ma2018query}, data science~\cite{tang2016visualizing, yue2016healthcare,muja2014scalable,deng2016efficient}, machine learning~\cite{li2018so,shakhnarovich2005nearest,li2019graph,chen2016gene}, clustering algorithms~\cite{schubert2017dbscan,silva2013data,likas2003global,kanungo2002efficient,mcinnes2017accelerated}, and computational geometry~\cite{malkov2018efficient,smith2016structure,guo2013rotational,blonder2018new}.

Due to the ever-growing data volume, it is imperative to consider parallelism in \kdtree{s}.
For instance, the North American region of
OpenStreetMap~\cite{haklay2008openstreetmap} contains 1.29 billion nodes, and
building a \kdtree{} for this dataset on a single core using \cgal{}~\cite{cgal51}, a widely-adopted geometry library,
takes over 2000 seconds.
However, we observe a \emph{significant gap} between the \emph{wide usage} of \kdtree{s},
and a \emph{lack of high-performance parallel implementation} of \kdtree{s} for all three aspects of construction, updates, and queries.
Some existing parallel implementations (e.g.~\cite{reif2022scalable}) are static and do not support updates.
The two parallel libraries for dynamic \kdtree{s} that we are aware of,
\cgal{}~\cite{cgal51}, and ParGeo~\cite{wang2022pargeo} (which includes two \kdtree{} implementations \mbox{\logtree} and {\bhltree}),
both have difficulties scaling to today's large-scale data size (see a summary of results in \cref{tab:introTable}).
\cgal{} and the \bhltree do not support parallel updates.
Even in the sequential updates, they fully rebuild the tree for rebalancing, which is inefficient.
The \logtree parallelizes updates using the classic \emph{logarithmic method}~\cite{bentley1975multidimensional,procopiuc2003bkd,agarwal2003cache}.
The logarithmic method avoids fully rebuilding the tree upon update by maintaining $O(\log n)$ perfectly balanced trees with different sizes, such that an update reorganizes the trees by merging some of them in parallel.
Accordingly, a query processes all $O(\log n)$ trees and combines the results, which can be significantly more expensive than it on a single \kdtree{}.
As shown in \cref{tab:introTable}, the \logtree, despite being faster on updates, can be up to an order of magnitude slower than the \bhltree or \cgal{} 
on \knn{} queries.
In addition, the construction for these \kdtree{s} is also much slower than the time reported in recent works of other parallel tree structures, such as binary search trees~\cite{sun2018pam,dhulipala2022pac} and quad/octrees~\cite{blelloch2022parallel}\footnote{For example, the construction time of the parallel binary search tree reported in~\cite{sun2018pam} is 28s on $n=10^{10}$ elements on a similar machine, which is faster than all previous \kdtree{} implementations on $n=10^{9}$ shown in \cref{tab:introTable}.}, indicating significant space for improvements on the construction algorithm.

\begin{table*}[htbp]
	\centering
	\small
	\setlength\tabcolsep{6pt} 
	\renewcommand{\arraystretch}{0.8} 

	\begin{tabular}{cc|c|cccc|cccc|c|c}
		\toprule
		\textbf{Benchmark}                    & \multirow{2}[2]{*}{\textbf{Baselines}} & \multirow{2}[2]{*}{\textbf{Build}} & \multicolumn{4}{c|}{\textbf{Batch Insert}} & \multicolumn{4}{c|}{\textbf{Batch Delete}} & \textbf{10-NN }  & \textbf{Range Report}                                                                                                                                 \\
		\textbf{\boldmath($10^9$-2D)\unboldmath}                  &                                        &                                    & \textbf{0.01\%}                            & \textbf{0.1\%}                             & \textbf{1\%}     & \textbf{10\%}         & \textbf{0.01\%}  & \textbf{0.1\%}   & \textbf{1\%}     & \textbf{10\%}    & \textbf{\boldmath$10^7$ \unboldmath queries} & \textbf{\boldmath$10^4$ \unboldmath queries} \\
		\midrule
		\multirow{4}[2]{*}{\textbf{\uniform}} & Ours                                   & \underline{3.15}                   & \underline{.004}                           & \underline{.020}                           & \underline{.104} & \underline{.495}      & \underline{.004} & \underline{.022} & \underline{.121} & \underline{.526} & \underline{.381}        & \underline{.391}        \\
		                                      & \logtree                               & 37.9                               & .008                                       & .059                                       & 2.16             & 30.7                  & .436             & .168             & .396             & 3.01             & 2.96                    & 2.62                    \\
		                                      & \bhltree                               & 31.7                               & 31.0                                       & 31.2                                       & 31.4             & 39.9                  & 30.8             & 31.1             & 30.9             & 30.6             & .487                    & 2.06                    \\
		                                      & \cgal                                  & 1147                               & 1614                                       & 1562                                       & 1631             & 1660                  & .400             & 3.89             & 41.2             & 427              & 1.04                    & 311                     \\
		\midrule
		\multirow{4}[2]{*}{\textbf{\varden}}  & Ours                                   & \underline{3.66}                   & \underline{.002}                           & \underline{.007}                           & \underline{.055} & \underline{.473}      & \underline{.002} & \underline{.006} & \underline{.049} & \underline{.477} & \underline{.172}        & \underline{.382}        \\
		                                      & \logtree                               & 34.2                               & .008                                       & .057                                       & 2.01             & 28.0                  & .799             & .848             & 1.06             & 3.47             & 2.05                    & 2.63                    \\
		                                      & \bhltree                               & 30.2                               & 29.1                                       & 29.2                                       & 29.4             & 37.3                  & 29.3             & 29.2             & 29.0             & 28.0             & .239                    & 1.95                    \\
		                                      & \cgal                                  & 429                                & 867                                        & 867                                        & 849              & 836                   & .113             & 1.06             & 13.0             & 153              & .511                    & 296                     \\
		\bottomrule
	\end{tabular}%

	\vspace{.2em}
	\caption{\textbf{Running time (in seconds) for \ourtree{} and other baselines on $10^9$ points in 2 dimensions. Lower is better.} \normalfont
		``\logtree'': the parallel \kdtree using logarithmic method from the \pargeo{} library~\cite{wang2022pargeo}. 
        ``\bhltree'': the single parallel \kdtree from \pargeo{} library~\cite{wang2022pargeo}.
        ``\cgal'': the \kdtree{} from the \cgal{} library~\cite{cgal51}.        
        ``Varden'': a skewed distribution from~\cite{gan2017hardness}.
		``10-NN'': 10-nearest-neighbor queries on $10^7$ points.
		``Range report'': orthogonal range report queries on $10^4$ rectangles, with output sizes in $10^4$--$10^6$.
		Experiments are run on a 96-core machine. More details are in \cref{sec:exp}.
		The fastest time for each test is underlined. 
		\vspace{-1em}	}
	\label{tab:introTable}%
\end{table*}%

\textbf{In this paper, we overcome the above challenges in existing work by proposing the \emp{\ourtree{}} (\boldmath Parallel \kdtree{}\unboldmath), a parallel in-memory \kdtree{} that is efficient both in theory and in practice. }
\ourtree{} supports efficient construction, batch-update, and various query types. 
Our algorithms have strong theoretical bounds in work (sequential time complexity), span (parallelism), and cache complexity.
Our key techniques include 1) an efficient construction algorithm that optimizes work, span, and cache complexity simultaneously, and 2) reconstruction-based update algorithms that guarantee the tree to be weight-balanced.
With the new algorithmic insights and careful engineering effort, we achieved a highly optimized implementation of the \ourtree{}.



\myparagraph{Construction.}
Our first contribution is a new parallel algorithm to construct weight-balanced \kdtree{s}, given in \cref{sec:constr}.
To the best of our knowledge, this is the first \kdtree{} construction algorithm with optimal $O(n\log n)$ work and $O((n/B)\log_M n)$ cache complexity, and polylogarithmic span, all with high probability\footnote{The \emph{work} of a parallel algorithm is the total number of operations (i.e., sequential time complexity), and its \emph{span} is the longest dependence chain. All the terms here are formally defined in \cref{sec:prelim}.}, where $n$ is the tree size, $M$ is the cache size, and $B$ is the cacheline size.
To achieve good bounds on all three metrics simultaneously, the algorithmic highlight here
is to 1) determine the splitting hyperplane using a carefully designed sampling scheme, and 2) a \emp{sieving algorithm} to partition all points into subspaces of $\skheight$ levels in the \kdtree{} by \emph{one round of data movement}.
By picking $\skheight=\Theta(\log M)$, we achieve strong theoretical bounds for the construction algorithm and good performance in practice due to the saving of memory accesses.

\myparagraph{Updates.}
The \kdtree{} differs from other classic trees
and does not support rebalancing primitives for updates, such as overflow/underflow (as in B-trees),
or rotations (as in binary search trees).
Our idea is to keep the tree weight-balanced, and use
a \emp{lazy strategy} that tolerates the difference of sibling subtree sizes by a predefined and controllable weight-balancing factor of $\balpara$ before invoking rebalancing by \emp{locally reconstructing} the affected subtrees.
The idea of rebalancing via local reconstruction was originally proposed by Overmars from early 80s, and has been studied in the sequential setting ~\cite{overmars1983design,blelloch2018geometry,overmars1981maintenance,galperin1993scapegoat,andersson1989improving}.
However, it remained previously unknown about the efficiency on parallelism and cache complexity.
Interestingly, the efficiency of our update algorithm is achieved by making use of our new construction algorithm---first, rebalancing the tree relies on efficient reconstruction; second, both insertion and deletion use the \emph{sieving process} in construction as a subroutine to also achieve good cache complexity and parallelism. We present our update algorithms and the cost analysis in \cref{sec:update}.

\myparagraph{Queries.}
Since the \ourtree{} remains as a single \kdtree{}, the same query algorithms for static \kdtree{s} can directly work on \ourtree{s} without any modifications.
In \cref{sec:exp}, we will show that the query performance for \ourtree{s} is faster or as fast as existing solutions.

\medskip

We implemented the \ourtree{} and conducted extensive experiments in \cref{sec:exp}.
A short summary is given in \cref{tab:introTable} on two datasets with $10^9$ 2D points.
In a nutshell, \ourtree{s} are much faster in all aspects. 
For construction, the performance gain is from better cache complexity---data movement can be greatly saved by constructing multiple levels in one round.
Compared to the logarithmic method (\logtree{}), the \ourtree{} is 2.02--62.0$\times$ faster on insertion, 3.27--400$\times$ faster on deletion, and 6.71--11.9$\times$ faster on queries by avoiding keeping $O(\log n)$ trees.
Compared to full reconstruction on updates (\bhltree{} and \cgal{}), the \ourtree{} is orders of magnitude faster on updates and has better query performance.
We show running time on real-world datasets, and in-depth experiments with varying dimensions, query types and parameters, individual techniques, scalability, and more, in \cref{sec:exp}.
We believe that the \ourtree{} is the first \kdtree{} that is highly performant, parallel, and dynamic.
We release our code at \cite{pkdCode}.
\ifconference{More experiments and analysis are provided in the full paper in the supplementary material. }

\hide{

	===========

	We experimentally show that even a reasonably large $\balpara$
	only mildly affects the query performance, while leading to much better update time.

	we provide \emp{high-performance parallel implementation for \kdtree{s} that supports efficient construction, updates, and queries.}
	To do this, we have to deal with several challenges, which are also
	what make state-of-the-art solutions fall short of achieving high performance.

	The first challenge is to support efficient dynamic updates on \kdtree{s}.
	\kdtree{s} differ from other classic trees
	and do not support rebalancing primitives for updates, such as overflow/underflow (as in B-trees),
	or rotations (as in binary search trees).
	Sequentially, a few solutions have been proposed (see \cref{sec:related}), but it is not clear how they incorporate parallelism.
	The recent attempt in ParGeo adopted the logarithmic method with strong bounds,
	but incurs significant overhead in queries.
	It is therefore worth asking 1) what is the best way to achieve dynamism for \kdtree{s} in parallel,
	and 2) how to achieve a tradeoff to give overall best performance and theoretical bounds in both updates and queries.

	The second challenge comes from the inherent complication of the \emph{design space} for \kdtree{s}.
	As a classic data structure, the \kdtree{} is widely-used in various applications with different design goals.
	The performance evaluation is thus monolithic,
	encompassing construction, updates, and a \emph{variety} of queries. 
	To achieve the best performance, one also has to consider factors such as work-efficiency (i.e., low time complexity), I/O-efficiency, and parallelism.
	Most existing work focused on one or a subset of such goals,
	and thus overlooked the other factors and did not incorporate them in their implementation collectively.
	A good implementation should require a co-design of all these factors, but it is highly challenging to do so.
	It is therefore worth asking 1) how to incorporate all these factors (work, parallelism, I/O) to design a full-featured interface (construction, update, queries) of \kdtree{s},
	and 2) how much each factor affects the performance.

	In this paper, we answer these questions by proposing \emp{\ourtree{} (\boldmath Parallel \kdtree{}\unboldmath)},
	a parallel \kdtree{} implementation supporting efficient construction, batch-update and multiple query types.
	This requires both new algorithmic insights to maintain good theoretical guarantees,
	and careful engineering effort to achieve high performance.
	One key question we investigated is, whether being perfectly balanced is needed in \kdtree (i.e., $\log_2n+O(1)$ tree height), or whether a reasonably relaxed criterion is good enough (e.g., $O(\log n)$ tree height).
	We conducted experimental studies and showed that the query performance of \kdtree{} is reasonably consistent with minor imbalance (see \cref{sec:balancingParameterRevisted}).
	Motivated by this observation, we adopted a \emph{randomized weight-balanced} approach, and such relaxation allows for much better performance in \emph{construction, update and query} compared to state-of-the-art implementations.
	We also provide careful \emph{theoretical analysis} to guarantee that under our relaxation, the algorithms are \emph{efficient in} \emph{work} (i.e., low time complexity),
	\emph{span} (i.e., good parallelism)\footnote{The \emph{work} of a parallel algorithm is the total number of operations (i.e., sequential time complexity),
		and its \emph{span} is the longest dependence chain. They are formally defined in \cref{sec:prelim}.}, and \emph{I/O cost}.

	To enable efficient parallel \kdtree{} algorithms both in theory and in practice,
	\ourtree{} is a careful co-design of construction and update algorithms.
	For construction, to reduce the I/O cost to find splitting hyperplane and move all data points per level,
	\ourtree{} employs a careful \emp{sampling scheme} to determine the hyperplanes for $\skheight$ levels,
	and applies an efficient parallel algorithm to \emp{sieve all points down $\skheight$ levels by one round of data movement}.
	In experiments, we observe that sampling and multi-level construction improve performance by 1.6$\times$ and 2.4$\times$, respectively.

	For batch updates (insertion and deletion),
	we design a \emp{lazy strategy} that tolerates the difference of sibling subtree sizes by a controllable factor of $\balpara$ before invoking the rebalancing scheme,
	which identifies the unbalanced substructures and performs a \emp{local reconstruction} on the affected subtrees.
	The parameter $\balpara$ controls the degree of balance, enabling a tradeoff between update and query costs---a stronger balance condition allows for better query performance (due to shallower tree height),
	at the cost of more expensive updates (due to more frequent rebalancing).
	We experimentally show that even a reasonably large $\balpara$
	only mildly affects the query performance, while leading to much better update time.

	While the high-level ideas of sampling~\cite{bentley1990k,bentley1990experiments,al2000parallel,agarwal2016parallel}, multi-level construction~\cite{reif2022scalable,agarwal2016parallel} and reconstruction-based balancing scheme~\cite{overmars1983design,galperin1993scapegoat,cai2021ikd,jo2017progressive} have
	all been studied in previous work (for both \kdtree{s} and other data structures),
	the unique challenge here is to integrate these elements cohesively in theory,
	such that they facilitate a unified implementation for \kdtree{}.
	Indeed, we are not aware of any existing work on \kdtree{s} that are theoretically-efficient in all measurements of work, span and I/O
	for either construction \emp{or} update.
	We present a more detailed description of the existing work in \cref{sec:related}.
	Such a synergy of construction and update algorithms is crucial in \ourtree{}, since its update relies on efficient reconstruction.
	\ourtree{} achieves this by 1)
	designing the sieving step as a building block by borrowing ideas from recent parallel sorting algorithms~\cite{dong2023high},
	which provides work, span and I/O bounds for both construction and updates,
	and 2) configuring the parameters for sampling, multi-level construction and rebalancing collectively.
	In theory, we present the parameter configuration
	and the cost bounds based on the parameters
	in \cref{lem:sampling,lem:tree-height,thm:update,thm:constr}.
	In practice, we provide a highly-optimized implementation that supports parallel construction, update, and three query types: $k$-nearest neighbor query (\knn{}), range count and range search.
	\hide{We also carefully surveyed existing optimizations in previous theoretical and experimental work, across different settings (sequential, parallel, distributed),
		and apply what we believe are the most relevant ones to \ourtree{s} (overviewed in \cref{sec:impl}).}

	As illustrated in \cref{tab:introTable}, \ourlib{} achieves much better performance than previous \kdtree{s} due to the newly proposed techniques. 
	On 100M points in 2D, \ourlib{} is at least 11.6$\times$ faster than all \kdtree{} baselines on construction, 13.4$\times$ on batch insertion, 4.1$\times$ on batch deletion, 1.7$\times$ on $k$-NN, and $1.5\times$ on range query.
	\ourlib{} is also faster than \zdtree{} in construction, queries, and most of batch update operations. 
	Results in higher dimension and larger data are summarized in \cref{table:summary}.
	We also design experiments to understand the performance gain of the proposed techniques in both construction and updates, discussed in \cref{sec:tech:analysis}.
	On our machine,
	\ourtree{} can process large-scale data. It constructs one billion points in 3D in about 4 seconds, and in 9D in about 10s.
	Even in 3D, none of the baseline \kdtree{s} were able to perform all point 10-NN query on 1 billion points due to space- and/or time-inefficiency, 
	while \ourtree{} only uses 11 seconds.
	We believe \ourtree{} is the first parallel dynamic \kdtree{} implementation that scales to billions of points with high performance.
	Our code is
	publicly available in~\cite{pkdCode}.
	For page limit, some proofs are deferred to the
	full version of this paper~\cite{pkdPaperFull}.

	\hide{
		In this paper, we answer these questions by proposing \emp{\ourtree{} (\boldmath Parallel \kdtree{}\unboldmath)},
		a parallel \kdtree{} implementation supporting efficient construction, batch-update and multiple query types.
		This requires both new algorithmic insights to maintain good theoretical guarantees,
		and careful engineering effort to achieve high performance.
		Instead of the strong condition of perfect balancing as in logarithmic methods, we allow the tree to be \emph{weight-balanced} with \emph{randomization}
		by re-designing both the construction and update algorithms.
		Such relaxation allows for much higher performance in \emph{both update and query} compared to all existing implementations.
		In addition, we provide careful \emp{theoretical analysis} to guarantee that under our relaxation, the algorithms are still \emph{efficient in} \emp{work} (i.e., low time complexity),
		\emp{span} (i.e., good parallelism)\footnote{The \emph{work} of a parallel algorithm is the total number of operations (i.e., sequential time complexity),
			and its \emph{span} is the longest dependence chain. They are formally defined in \cref{sec:prelim}.}, and \emp{I/O cost}.

		Our first contribution is to support efficient parallel batch update (insertion and deletion)  on \kdtree{s}.
		As mentioned, instead of making the sibling subtrees perfectly balanced,
		we allow their sizes to be off by a factor of $\balpara$.
		When the weight-balance criteria is broken, we identify the unbalanced substructure and perform a
		local rebuild on the affected subtree.
		While this local rebuild idea has been explored in various data structures sequentially (e.g., the scapegoat tree~\cite{galperin1993scapegoat}),
		applying it to \kdtree{s} in the parallel setting is highly non-trivial, let alone other desired properties such as I/O efficiency.
		We carefully design the update algorithms to be highly-parallel and I/O-efficient, while keeping the tree reasonably balanced.
		The parameter $\balpara$ captures extend of balance, which enables a tradeoff between update and query performance.
		A stronger balance condition allows for shallower tree height (and thus better query performance),
		at the cost of more expensive updates (due to more aggressive rebalancing).
		With $O(xx)$ cost to update (insertion or deletion) a batch of $m$ points, one can guarantee $O(?)$ tree height.
		Spending $O(xx)$ cost to update allows for $\log_2 n + O(1)$ tree height,
		which retains all asymptotical bounds as the perfectly-balanced \kdtree{}.

		Our second contribution is an efficient parallel construction algorithm on \kdtree{s}.
		While the original \kdtree{} is a simple algorithmic idea, existing \kdtree{} algorithms
		employ a combination of \emph{some} optimizations, such as leaf wrapping~\cite{robinson1981kdb,wang2022pargeo,wald2009state}, constructing multiple levels at a time~\cite{agarwal2016parallel,procopiuc2003bkd,agarwal2003cache,blelloch2018geometry,wald2009state},
		finding medians by sampling~\cite{agarwal2016parallel,blelloch2018geometry,wald2009state}, etc.
		Unfortunately, they feel short in achieving work-efficiency~\cite{robinson1981kdb}, I/O-efficiency~\cite{agarwal2016parallel,blelloch2018geometry,wald2009state,wang2022pargeo}, high parallel bound~\cite{agarwal2016parallel,wald2009state,procopiuc2003bkd}; some are only theoretical and do not have implementations~\cite{agarwal2016parallel,blelloch2018geometry}.
		Another unique challenge here is to incorporate the construction algorithm to work with our new reconstruction-based update algorithm.
		Since our update algorithms relies on reconstruction upon imbalancing,
		the performance and theoretical guarantee both rely on the efficiency of the construction algorithm. 
		We borrow the idea from parallel sorting algorithms~\cite{blelloch2010low} with both theoretical and practical efficiency.
		However, we need non-trivial adaptations to multi-dimensional \kdtree{s} to construct $\skheight$ levels in the tree in one round with a sampling scheme with rate $\samplerate$ to select the splitting hyperplane to reduce I/Os.
		To make construction and update algorithms compatible, we need to set the parameters $\skheight$, $\samplerate$ and the balancing parameter $\balpara$ synergistically.

		With the new proposed algorithms, our implementation is a careful co-design of all operations on \kdtree{}.
		In addition to construction and update, \ourtree{} provides three query types: kNN, range count and range report. 
		We carefully surveyed existing optimizations in previous theoretical and experimental work, across different settings (sequential, parallel, distributed), and apply what we believe are the most relevant ones to \ourtree{}.
		They all give significant improvement in performance.
		As a result, \ourtree{} outperforms all existing parallel \kdtree{} implementations by xx times [describe].
		Compared to alternative data structures supporting similar queries, \kdtree{} outperforms \zdtree{} by xx times on uniform distributed points,
		while having the benefit of being more resistant to skewed data (xx times faster), and scaling to higher dimensions (xxx).
		\ourtree{} can process very-large scale data. It constructs 1 billion 3D points in xx seconds,
		performs 10-NN query for all points in the dataset in xx seconds,
		while the other parallel \kdtree{} libraries fails to do it due to space- and time-inefficiency.
		We believe \ourtree{} is the first parallel dynamic \kdtree{} implementation that scales to billions of points with high performance.
	}
	Our contributions include:

	\begin{itemize}
		\item Parallel construction and update algorithms for parallel \kdtree{} with theoretical efficiency;
		\item Parallel library (open-source) with high performance; and
		\item In-depth experimental study on parallel \kdtree{s}.
	\end{itemize}
}

\hide{
\section{Introduction - old}

The \kdtree is one of the most widely used data structure in managing multi-dimensional data.
It can be considered as a binary search tree (BST) but in multiple dimensions.
After its invention by Jon Louis Bentley in 1975~\cite{bentley1975multidimensional}, \kdtree has been widely used in real-world applications and cited by over ten thousand times.
Compared to the counterparts such as quadtrees~\cite{finkel1974quad} and R-trees~\cite{guttman1984r,beckmann1990r,manolopoulos2006r,kamel1993packing}, \kdtree{s} are purely comparison-based (similar to regular BSTs), and support strong theoretical guarantees for construction, updates, and queries.
We will review the \kdtree more in \cref{sec:related}.

Due to the ever-growing data volume and the prevalence of parallel processors, in this paper, we consider a high-performance parallel solution for \kdtree{s} that supports efficient construction, updates, and queries.
Among these three operations, supporting efficient updates has the most challenging task with no doubt---\kdtree{s} cannot apply ``rotation-based'' rebalancing as in BSTs~\cite{sun2018pam,blelloch2016just,blelloch2022joinable}, since the \kdtree{} is a space-partitioning data structure.
Some existing algorithms~\cite{wang2020theoretically,cgal51} do not rebalance the tree, but obviously it can cause degeneracy for query performance.
Most existing solutions~\cite{robinson1981kdb, procopiuc2003bkd, agarwal2003cache, wang2022pargeo} are based on the ``logarithmic method'' proposed by Bentley and Saxe~\cite{bentley1979decomposable}---an up to $\lfloor\log_2 n\rfloor$ trees are maintained for a set of $n$ points, all with sizes as powers of 2.
The decomposition is based on the binary representation of $n$.
When a batch of insertions of size $m$ arrives, the algorithm acts as adding $n$ and $m$ as binary numbers.
When both $m$ and $n$ contains batches of size $2^i$, a new batch of $2^{i+1}$ is constructed, which may cascade for larger batches.
While the logarithmic method is conceptually simple and has some theoretical guarantees (e.g., $O(\log^2 n)$ cost per inserted elements), it has several major issues.
First, maintaining a total of $O(\log n)$ trees significantly complicates the queries on \kdtree{s} and slows down the query performance.
Second, the logarithmic method does not directly support deletions, and existing solutions usually mark tombs for deletions but it will cause overheads.
Lastly, the construction algorithm used existing solutions cannot achieve parallelism and cache efficiency simultaneously.
Combining all reasons together, in the setting as shown in \cref{tab:teaser}, the SOTA \kdtree{} from~\cite{wang2022pargeo} is slower than the SOTA quad/octree~\cite{blelloch2022parallel} by 3$\times$ on construction, 3$\times$ on insertion, 10$\times$ on deletion, and 3$\times$ $k$-nn queries---currently the \kdtree is not a competitive solution in almost all scenarios.

To achieve good performance in \kdtree{s}, 
we would like our algorithms to achieve work-efficiency (i.e., using no asymptotically more operations than the best sequential algorithm), high parallelism, and good cache performance, both in theory and in practice.
To achieve all these goals and overcome the issues in existing solutions, we need several synergic techniques.

The first goal in our \kdtree design is to avoid multiple trees as in the logarithmic method.
Our idea is to keep a single tree, and rebuild the subtrees that are imbalanced due to insertions or deletions.
This idea is known---Overmars mentioned in his PhD thesis in 1983~\cite{overmars1983design} that this is a possible solution for dynamic \kdtree.
However, Overmars only gave some preliminary thoughts and analysis, with no implementations, nor to mention parallelism or I/O efficiency.
Unfortunately, to the best of our knowledge, no later work followed up on this idea.

To fulfill this idea of reconstruction-based rebalancing efficiently, we need to overcome two major challenges.
The first challenge is an efficient algorithm for \kdtree construction that rebuilds the subtrees.
Theoretically, our goal is to achieve 1) work-efficiency (using $O(n\log n)$ operations), 2) highly parallel, and 3) good cache-efficiency (having $O(\mb{Sort}(n))$ cache bound).
We want to keep our solution simple so it should yield good practical performance and be general to different hardware platforms.
The common approach for \kdtree construction builds one level at a time, which leads to $O(\log n)$ rounds of global data movement and is thus inefficient.
Our key insight is to build multiple level of trees in one round.
Again this approach is known (cite PODS paper), but previous work did not have implementations or show the I/O and span bounds.
Here we borrow the idea from parallel sorting algorithms~\cite{blelloch2010low,dong2023high}.
This solution has two main parts---a careful but lightweight sampling scheme to balance the subproblem sizes, and a highly parallel and I/O-efficient ``transpose'' process to distribute the objects to the recursive subproblems.
However, adapting this idea to \kdtree is non-trivial since \kdtree maintains multi-dimensional data and has unique properties to support query and update efficiency.
We give more details of our approach in \cref{sec:construct}.
We believe our algorithm is conceptually easy to understand, but by setting up the parameters carefully, it can achieve strong theoretical guarantees---$O(n\log n)$ work, $O(\log^2 n)$ span, and $O(\mb{Sort}(n))=O((n/B)\log_M n)$ I/O cost---and significantly outperform any existing implementations.

\hide{
The theoretical study of dynamic \kdtree{s} dated back to at least 40 years ago, and the key idea is to use the construction algorithm for dynamizing \kdtree{s}.
The mainly adopted solution is referred to as the ``logarithmic method'' proposed by Bentley and Saxe (cite)---an up to $\lfloor\log_2 n\rfloor$ trees are maintained for a set of $n$ points, all with sizes as powers of 2.
The decomposition is based on the binary representation of $n$.
When a batch of insertions of size $m$ arrives, the algorithm acts as adding $n$ and $m$ as binary numbers.
When both $m$ and $n$ contains batches of size $2^i$, a new batch of $2^{i+1}$ is constructed, which may cascade for larger batches.
One can show that the amortized cost per element is $O(\log^2 n)$ (check), and this logarithmic method is widely studied in the literature (check and cite all relavant papers in [1]-[8] from Julian's paper).
Despite the theoretical efficiency, the logarithmic method, unfortunately, does not perform well in practice.
The main issue is the existence of $O(\log n)$ trees, which significantly complicates the queries.
(Evidence needs to be supplemented here.  I read Julian's paper again and they said query time is the same as zd-tree, which I don't quite convinced.)

Therefore, this paper seeks for a single-tree solution and efficient update algorithms on top of it.
Since only one \kdtree is kept, the query cost will naturally remains efficient as for the static \kdtree{} version.
Our key idea is to use the ``reconstruction-based'' rebalancing---once a subtree becomes ``sufficiently'' unbalanced, we will simply rebuild this subtree into a perfectly balanced shape.
As far as we know, this idea was first proposed by Lueker~\cite{lueker1982data} on B-trees, and later reinvented multiple times~\cite{overmars1982dynamic,galperin1993scapegoat,andersson1989improving}.
The only known work on applying this to \kdtree{s} was \cite{overmars1982dynamic}, but only very briefly with some theoretical anlaysis---no implementation and experiments were given, nor to mention parallelism and batch updates.
}

~

~

The rest of the story: to make it work, we need efficient construction.
We will use I/O efficient distribution, but finding the partition hyperplanes are hard.
We then proposed sampling based solution.
Analyze I/O, work, span, and space.

Using them, we can then design our dynamic version.
Give theoretical analysis.

\ziyang{No parallel partial and batch, no theoretical analysis (work, dpeth), no high performance implementation}

Experimentally, ...

}

\section{Preliminaries}
\label{sec:prelim}

We present a table of notations used in this paper in \cref{tab:notations}.
We use \bdita{with high probability} (\whp{}) in terms of $n$ to mean probability at least $1-n^{-c}$ for any constant $c>0$.
With clear context, we omit ``in terms of $n$''.
We use $\log n$ as a short term of $\log_2 (1+n)$.


\myparagraph{Computational Model.}
We analyze our algorithms using the \bdita{work-span model} in the classic fork-join paradigm with {binary-forking}~\cite{BL98,arora2001thread,blelloch2020optimal}. We assume multiple threads that share memory. Each thread is a sequential RAM augmented with a \textsf{fork} instruction, which spawns two child threads that run in parallel. The parent thread resumes execution upon the completion of both child threads. A parallel for-loop can be simulated by a logarithmic number of steps of forking.
A computation can be viewed as a directed acyclic graph (DAG). The \bdita{work $(W)$} of a parallel algorithm is the total number of operations within its DAG (aka.\ time complexity in the sequential setting), and the \bdita{span $(S)$} depicts the longest path in the DAG.
Using a randomized work-stealing scheduler, a computation with work $W$ and span $S$ can be executed in $W/\rho+O(S)$ time \whp{} (in~$W$) with $\rho$ processors~\cite{BL98,arora2001thread,gu2022analysis}.


We use the classic \bdita{ideal-cache model}~\cite{Frigo99} to measure the cost of memory accesses, which is
widely used to analyze the cache complexity of algorithms~\cite{BG2020,arge2004cache,dong2024parallel,blelloch2010low}.
In this model, the memory is divided into two levels.
The CPU is connected to the small-memory (aka.\ the cache) of finite size $M$, which is connected to a large-memory (the main memory) of infinite size. Both memories are organized as \bdita{blocks} with size $B$. 
The CPU can only access the data in small-memory with free cost, and there is a unit cost to transfer one block from large-memory to small-memory, assuming the optimal offline cache replacement policy.
The cache complexity of an algorithm is number of block transfers during the algorithm.

The ideal-cache model assumes optimal eviction strategy for theoretical analysis. It has been shown that
practical cache policies (e.g., LRU) enables the same asymptotic cache I/Os as the optimal strategy~\cite{Frigo99,Sleator85}.
The ideal-cache model is only used for theoretical analysis.
In our implementation, we do not control the cache, so the optimal eviction strategy is guaranteed.
In reality, the eviction strategy is usually a combination of multiple strategies, considering more complicated components such as set associativity, parallelism, and a few optimizations.
However, the theoretical model is still extremely widely used as a good estimation of the practice.

\myparagraph{The \boldmath $k$d-Tree. \unboldmath}
We study points in Euclidean space in $D$ dimensions,
and the distance between two points is their Euclidean distance.
A partition hyperplane can be represented by a pair $\ip{d}{x}$,  where $d$ ($1\le d\le D$) is the \emp{splitting dimension} and $x\in \R$ is the \emp{splitting coordinate}.
We refer to such a pair $s$ as a \bdita{splitter}.

The \bdita{$k$d-tree} ($k$-dimensional tree), is a spatial-partitioning binary tree data structure.
To avoid overloading the commonly-used parameter $k$ in \knn{} query, we use $D$ to refer to the number of dimensions of the dataset.
Each interior (non-leaf) node in a \kdtree{} signifies an axis-aligned splitting hyperplane $\ip{d}{x}$.
Points to the left of the hyperplane (those with the $d$-th dimension coordinates smaller than $x$)
are stored in the left subtree and the remaining are in the right subtree.
Each subtree is split recursively until the number of points drops below a \emph{leaf wrap} threshold $\leafwrap$ (a small constant),
where all the points are directly stored in a leaf.
Common approaches for choosing the dimension of the splitting hyperplane include cycling among the $D$ dimensions~\cite{bentley1975multidimensional}, choosing the dimension with the widest stretch~\cite{friedman1977algorithm}, etc.
\ourtree{} also uses the widest dimension as the cutting dimension.
The cutting coordinate $x$ is usually the median of the points on the $d$-th dimension, yielding two balanced subtrees.
Given $n$ points in $D$ dimensions, a balanced \kdtree{} has a height of $\log_2n+O(1)$ and can be constructed in $O(n\log n)$ work using $O(n)$ space.

The \kdtree{} can answer various types of queries.
Since the \ourtree{} remains a single \kdtree{}, the same query algorithms for classic \kdtree{s} also work on \ourtree{s}.
In our experiments, we focus on \emph{\knn{} queries} (finding the $k$ nearest points to a query point), rectangle \emph{range report queries} (reporting all points within an axis-aligned bounding box) and
rectangle \emph{range count queries} (reporting the number of points within an axis-aligned bounding box).



We use the (subtree) root pointer $T$ to denote a (sub-)$k$d-tree.
With clear context, we also use $T$ to represent the set of all points in~$T$.
Every interior node in $T$ maintains two pointers $T.\lc$ and $T.\rc$ to its left and right children, respectively. 
As we mentioned, \ourtree{} is \emph{weight-balanced}.
Given the balancing parameter $\balpara\in[0,0.5]$,
we say a \kdtree{} is (weight-)balanced if $0.5-\balpara \le |T.\lc|/|T| \le 0.5+\balpara$, and \emph{unbalanced} otherwise. 
Essentially, this means that the two subtrees can be off from perfectly balanced by a factor of $\balpara$.

\begin{table}
	\small
	\begin{tabular}{|>{\boldmath}l<{\unboldmath}p{3cm}>{\boldmath}l<{\unboldmath}p{3.2cm}|}
		\hline
		$T$         & \multicolumn{3}{l|}{a (sub-)\kdtree, also the set of points in the tree}                                   \\
		$\leafwrap$ & \multicolumn{3}{l|}{leaf wrap threshold (leaf size upper bound)}                                           \\
		$k$         & \multicolumn{3}{l|}{required number of nearest neighbors in a query}                                       \\
		$\skheight$ & \multicolumn{3}{l|}{number of levels in a tree sketch (i.e., that are built at a time)}                      \\
		$\matht$    & \multicolumn{3}{l|}{tree skeleton at $T$ with maximum levels $\skheight$} \\
		$P$         & \multicolumn{3}{l|}{input point set (for updates, $P$ is the batch to be updated)}                        \\
		$T.\textit{lc}$      & left child of $T$ & $T.\textit{rc}$  & right child of $T$                                                  \\
		$n$         & tree size              &
		$m$         & batch size for batch updates                                                          \\
		$D$         & number of dimensions                                                                  &
		$d$         & a certain dimension                                                                   \\
		$S$         & samples from $P$                                                                      &
		$s$         & size of the $S$                                                                       \\
		$\os$       & oversampling rate                                                                     &
		$\balpara$  & balancing parameter                                                                   \\
		$M$         & small-memory (cache) size                                                             &
		$B$         & block (cacheline) size                                                                        \\
		\hline
	\end{tabular}
  \caption{\textbf{Notations used in this paper.} }
	\label{tab:notations}
\end{table}

\hide{
\begin{table}
	\small
	\begin{tabular}{|cp{7.5cm}|}
		\hline
		$M$         & small-memory (cache) size                                                             \\
		$B$         & cacheline size                                                                        \\
		$D$         & number of dimensions                                                                  \\
		$d$         & a certain dimension                                                                   \\
		$k$         & required number of nearest neighbors in a query                                       \\
		$T$         & a (sub-)\kdtree, also the set of points in the tree                                   \\
		$T.lc$      & left child of $T$, so as for $T.rc$                                                   \\
		$\skheight$ & number of levels in tree sketch (i.e., that are built at a time)                      \\
		$\matht$    & shorthand of $\matht_\skheight$, tree skeleton at $T$ with maximum levels $\skheight$ \\
		$P$         & input point set (for updates, $P$ is the batch to be updated)                        \\
		$n$         & tree size              \\
		$m$         & batch size for batch updates                                                          \\
		$S$         & samples from $P$                                                                      \\
		$s$         & size of the $S$                                                                       \\
		$\os$       & oversampling rate                                                                     \\
		$\balpara$  & balancing parameter                                                                   \\
		$\leafwrap$ & upper bound for size of leaf wrap                                                     \\
		\hline
	\end{tabular}
	\caption{Notations used in this paper. }
	\label{tab:notations}
\end{table}
}

\hide{
	\myparagraph{Metric Space.}
	The \bdita{metric space} is a set of elements, usually called points, together with a notion of distance function between its elements. In the formal setting, one is giving a metric space $\mathcal{M}(M,d)$, where $M$ is the collection of points and $d:M\times M\to \R$ is the distance function. For all pairs of points $x,y\in M$, $d$ satisfies following axioms: (1) $d(x,y)\geq 0$ and $d(x,y)=0 \Leftrightarrow x=y$, (2) symmetry, (3) triangle inequality. For example, the $k$-dimensional Euclidean metric space is defined as $\mathcal{M}=(\R^k, L_p)$.

	Let $S\subseteq M$ be a finite point set and denote $B_p(r)\subseteq S$ the set of points enclosed by a ball with radius $r$ centered at $p$. Then $S$ has $(\rho,c)-$\bdita{expansion} if and only if $\forall p\in M$ and $r>0$
	\begin{equation}
		|B_p(r)|\geq \rho \Rightarrow |B_p(2r)|\leq c\cdot|B_p(r)|
	\end{equation}
	The constant $c$ is refereed to \bdita{expansion rate} and $\rho$ is usually set to be $O(\log |S|)$. We say the expansion rate is \bdita{low} if $c=O(1)$.
	The coefficient $2$ above can be substituted to any other constant with a suitable modification of $c$. Applies the expansion property recursively implies that the $|B_p(r)|\leq poly(r)$. Intuitively, the low expansion property ensures the points distributed "smoothly" in the space, i.e., it rules out the possibility that a region of dense points locate next to a large patch of empty space.
}
\hide{
	Another common assumptions used in computation geometry and computer graphics to describe a property of geometric shapes or structures is called the \bdita{bounded aspect ratio}, whose focus is about maintaining a balance between the different dimensions of a shape to avoid extreme elongation or flattening while working with geometric structures. Formally, given metric space $\mathcal{M}(M,d)$ and a subset $S\subseteq M$ containing $n$ points, the aspect ratio $\Delta$ is defined as:

	\begin{equation}
		\Delta = \frac{\max d(x,y)}{\min d(x,y)}\quad \forall x,y\in S
	\end{equation}
	and is said to be \textit{bounded} if $\Delta< n^c$ holds for some constant $c>0$.
}
\hide{
	\myparagraph{Range spaces and $\epsilon$-approximation}

	Given a finite set $(X,\mathcal R)$, and a parameter $0\leq \epsilon\leq 1$, a set $A\subseteq X$ is called an $\epsilon$-approximation if, for each $R\in\mathcal R$,
	\begin{equation}
		\left|\frac{|R|}{|X|}-\frac{|R\cap A|}{|A|}\right|\leq \epsilon\label{epsilonApproximation}
	\end{equation}
	\ziyang{ADD more explanation and paraphrase124}.
}



\section{Parallel Algorithm for Tree Construction}\label{sec:constr}

We start with our parallel $k$d-tree construction algorithm.
Constructing a \kdtree{} requires partitioning the points into nested sub-spaces recursively 
based on the median of the splitting dimension. 
It directly implies a parallel construction algorithm with $O(n\log n)$ work and $O(\log^2 n)$ span 
using the standard parallel partition algorithm ($O(n)$ work and $O(\log n)$ span). 
However, it requires $O(\log n)$ rounds of data movement and is not cache-efficient when the input is larger than the cache size.
We will refer to this algorithm as the \emph{plain parallel \kdtree{} construction} algorithm.
This is also the algorithm used by the \bhltree{} from ParGeo~\cite{wang2022pargeo}.

Instead of partitioning all points into the left and right subtrees and pushing the points to the next level in the recursive calls,
the high-level idea of our approach is to build $\skheight$ levels at a time by one round of the data movement.
To avoid the data movement for finding splitting coordinates,
our algorithm uses samples to decide all splitters for $\skheight$ levels.
It then distributes all points into the corresponding subtrees ($2^{\skheight}$ of them) and recurses.

The main challenges here are 1) to use only a subset of the points (samples) to decide the splitters and make the tree nearly balanced, and
2) to move each point exactly once to its final destination in a cache-efficient and parallel manner.
Below, we will first elaborate on our parallel and cache-efficient construction algorithm in \cref{sec:constr-algo}, and show the cost analysis in \cref{sec:constr-analysis}.

\hide{
	Using the samples, we find $2^{\skheight}$
	after which the points are distributed into ranges divided by the samples. Recursion continues until the base case is met. Sequentially this method is much slower than the median-based strategy mentioned above due to the overhead of sub-lists division and median finding in each level~\cite{al2000parallel}. In the parallel setting, we resolve this issue by constructing the interior tree from samples directly and then sieving points to the corresponding bucket using the parallel algorithm for the \bdita{semisort} problem.
}


\begin{figure*}[t]
	\includegraphics[width=\textwidth]{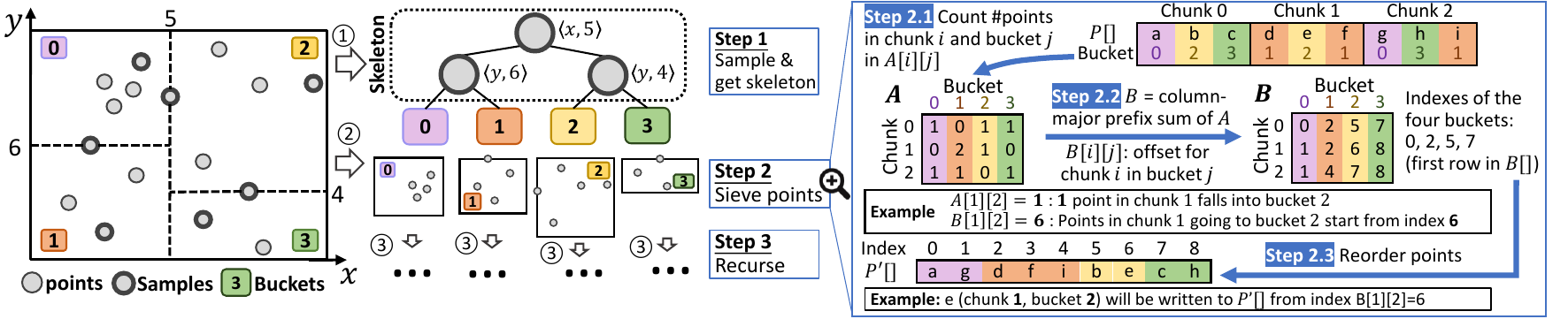}
	\vspace*{-1em}
	\caption{{\bf An illustration of our \kdtree{} construction algorithm}, with a detailed overview on the \emph{sieving step}.
		In this example, we first sample seven points and construct the tree skeleton using the samples, dividing the plane into four regions (buckets). Next, we sieve all points into the corresponding bucket. Concretely, we divide the points $P$ into chunks of size $l=3$. All chunks are processed in parallel. For each chunk, we count the number of points for every bucket in array $A$. We then compute the exclusive prefix sum of $A$ in column-major order to get the offset matrix $B$. 
			We then move all points from $P$ to $P'$ so that points within each bucket are contiguous.
			Finally, we recursively construct each subtree (a bucket in $P'$) in parallel.
		} 
	\label{fig:constr}
\end{figure*}

\hide{
We commence by introducing some necessary notation. Let $P$ denote the set of $n$ points and given a $k$d-tree $T$ over $P$, define \bdita{tree skeleton} $\mathcal T_\skheight$ as a (partial) subtree rooted at $T$ with height $\skheight\in\N$. Specifically, the "leaf" nodes in $\matht_\skheight$ are referred to the \bdita{bucket node} or simply the \bdita{bucket}, i.e., the interior node in $T$ with depth $\skheight$ or the leaf node with depth $\leq\skheight$. A tree skeleton $\matht_\skheight$ contains at most $2^\skheight$ bucket nodes; each is assigned with a unique identifier within $[ 0,2^\skheight)$, called the \bdita{bucket id}.
Without further clarification we omit the constant subscript $\skheight$ when mentioning the tree skeleton $\matht_\skheight$.
}

\hide{\myparagraph{Construct multiple levels at once.} Previous tree construction algorithms construct one level of tree after each recursion, but this is not necessary since the skeleton of a tree has been fixed once the splitters are determined. The primary benefit of extending the tree depth for multiple levels is to avoid the intermediate partition in each level so that fewer iterations are required to reach the base case, resulting in better parallelism due to asymptotically less work and span. Besides, since there is no redundant write for partition within each level, the whole framework is more cache-friendly and I/O-efficient.
}

\subsection{Algorithms Description}\label{sec:constr-algo}
We present our algorithm in \cref{algo:constr} and an illustration in \cref{fig:constr}.
The algorithm $T=\build(P)$ builds a \kdtree{} $T$ on the input points in array $P$.
As mentioned, our main idea is to use samples to decide all splitters in $\skheight$ levels.
We define the \emph{skeleton} at $T$, denoted as $\matht$, as the substructure consisting of all splitters (and thus interior nodes) in the first $\skheight$ levels.
We use the samples to build the skeleton. In particular, we will uniformly take $2^{\skheight}\cdot\os$ samples from $P$, where $\sigma$ is the \emph{over sampling rate}.
In \cref{sec:constr-analysis} we will show how to choose the parameter $\os$ to achieve strong theoretical guarantees.
Let $S$ be the set of sample points. The skeleton will be the first $\skheight$ levels of the \kdtree{} on $S$.
As we will show in \cref{sec:constr-analysis},
we will keep $S$ small and fit in cache, so that the skeleton can be built by the plain parallel \kdtree{} construction algorithm at the beginning of \cref{sec:constr}.

The skeleton depicts the first $\skheight$ levels of the tree, and splits the space into $2^{\skheight}$ subspaces, corresponding to the external nodes (leaves) of the skeleton.
We call each such external node a \emp{bucket} in this skeleton.
We label all buckets from $0$ to $2^{\skheight}-1$.
The problem then boils down to sieving the points into the corresponding bucket, so that we can further deal with each bucket recursively in parallel.
We first note that the target bucket for each point can be easily looked up in $O(\skheight)$ cost by searching in the skeleton.
Sequentially, one can simply move all points one by one to their target bucket, by maintaining a pointer to the (current) last element in each bucket.
In parallel, the key challenge is to independently determine the ``offset'' of each point,
so the points can be moved to their target buckets
in parallel without introducing locks or data races.

We borrow the idea from the cache-efficient parallel sorting algorithm~\cite{blelloch2010low,dong2023high,dong2024parallel}, which also involves redistributing elements into $\omega(1)$ buckets.
Our goal is to reorder array $P$ and make all points belonging to the same bucket to be contiguous, so that the next recursion receives a consecutive input slice.
To do this, we divide the array into chunks of size $l$, and process them in parallel.
We first count the number of elements in chunk $i$ that fall into bucket $j$ in $A[i][j]$.
Note that there is no data race since we count all points sequentially within each bucket.
Then we compute the exclusive prefix sum of matrix $A$ in column-major and get the offset matrix $B$---
i.e., we consider storing matrix $A$ in column major in an array, and compute the exclusive prefix sum at each element. 
This can be done by a parallel cache-efficient matrix transpose~\cite{blelloch2010low} and a standard parallel prefix-sum~\cite{Blelloch89}.
As such, $B[i][j]$ implies the offset when writing a point in chunk $i$ that belongs to bucket~$j$.
We present an illustration for this process in \cref{fig:constr}.
Then we process all buckets again in parallel, and move each point to its final destination by using the offsets provided from matrix $B$ as the starting pointer.
There is still no data race here, since all points that ``share'' the same offset must be in the same chunk and will be processed sequentially.

After all points in the same bucket are placed consecutively, we recursively build \kdtree{s} for each bucket in parallel. The recursion stops
when the number of points is smaller than $2^{\skheight}\cdot \os$.
We then switch to the base case and use the plain parallel \kdtree{} construction to build the subtree.
We will later show that by setting proper values for $\skheight$ and $\os$, the base case fits into the cache and using the plain parallel construction will not incur extra memory accesses.

\hide{
	We construct a tree skeleton from samples.
	Note that to obtain a balanced tree, one should avoid sample splitters from input points directly, since one group of skewed samples immediately leads to an imbalanced points partition.
	We borrow the idea from~\cite{agarwal2016parallel} to construct the tree skeleton from an $\epsilon$-approximation samples, i.e., sample $\log n$ points from $P$, select the median in samples as the splitter for skeleton root, partition samples using this splitter, then recurse until the tree skeleton grows to height $\skheight$.
	There is a trade-off between the tree quality and the required work: a higher sample size leads to a more balanced partition in each round, while the cost to sample and build the tree skeleton increases as well.
}

\hide{
	\myparagraph{Leveraging semisort for points sieving.}
	Deriving an efficient points sieving scheme is not trivial in parallel since one has to keep the work-efficiency as well as dealing with the race and extra space usage.  We notice that if we assign every leaf (bucket) node in the current tree skeleton a key and record each input point by the key associated with the leaf node it should be sieved in, then distributing points into corresponding leaves equivalent to solving the \bdita{semisort problem} with the points records, i.e., reorder the points so that those with same keys are contiguous. Since it costs constant time to get the key for each point considering the constant height tree skeleton, any work-efficient parallel semi-sort algorithm directly yields a work-efficient points sieving algorithm.
}

\setlength{\algomargin}{0em}
\hide{
\begin{algorithm}[t]
	\fontsize{8pt}{8.5pt}\selectfont
	\caption{Build a $k$d-tree in parallel\label{buildtree}}
	\SetKwFor{parForEach}{parallel-foreach}{do}{endfor}
	\KwIn{Sequence of points $P$ with dimension $k$.}
	\KwOut{A $k$d-tree over $P$.}
	\SetKw{MIN}{min}
	\SetKw{MAX}{max}
	\SetKwProg{MyFunc}{Function}{}{end}
	\SetKwInOut{Note}{Note}
	\SetKwInOut{Parameter}{Parameter}
	\Parameter{$\skheight$: the maximum height of a tree skeleton.\\
		$\leafwrap$: the leave wrap of the $k$d-tree.
	}
	\SetKwFor{inParallel}{in parallel:}{}{}
	
	\DontPrintSemicolon
	\vspace{.5em}
	\tcp{Build a $k$d-tree on points $P$}
	\MyFunc{\upshape{\build{$(P)$}}}{
		$n\gets |P|$\tcp*[f]{Size of the input points}\\
		\If(\tcp*[f]{Wrap the points with a leaf node}){$n<\leafwrap$\label{checkleafbuild}}{
			Allocate a leaf node containing $P$ \label{allocateleaf}\\
			\Return The pointer to the leaf node \label{returnleaf}
		}
		$S\gets$ Uniformly sample $\log n$ points on $P$\label{sample}\\
		$\matht\gets \buildTreeSkeleton(S,0)$\label{startbuildskeleton}\\
		Sieve points to the corresponding bucket node in $\matht$, array slice $B[i]$ holds all points sieved to bucket node $i$ \tcp*[f]{Partition points}\label{distributing}\\
		\parForEach{bucket id $i$}{
			$T_i\gets\build(B[i])$\label{recursive}\tcp*[f]{Recursion on points $B[i]$}\\
			Replace the bucket node in $\matht$ whose id is $i$ with $T_i$ and redirect its parent's pointer to $T_i$\label{replacing}
		}
		\Return The root of $\matht$\label{returnroot}
	}
	\vspace{.5em}
	\tcp{Build a tree skeleton with height $\skheight$ using sample points $S$ sequentially}
	\MyFunc{\upshape{\buildTreeSkeleton}$(S,h)$\label{buildskeleton}}{
		\If(\tcp*[f]{reach the last layer of tree skeleton}){$h\geq \skheight$}{
			Allocate a bucket node and assign it a new identifier\label{assignidentifier}\\
			\Return The pointer to the bucket node
		}
		$d\gets$ Pick the split dimension \label{dim}\\
		$m\gets$ Median coordinate of $S$ on dimension $d$\label{median}\\
		$L\gets\{p\in S~|~p \mtext{ is strictly to the left of the splitter }\ip{d}{m}\}$ \label{partitionleft}\\
		$R\gets S\setminus L$\label{partitionright}\\ 
		$\lson\gets \buildTreeSkeleton(L,h+1)$\\
		$\rson\gets \buildTreeSkeleton(R,h+1)$\\
		Create an interior node, assign its left child to $\lson$, right child to $\rson$ and splitter to $\ip{d}{m}$ \\
		\Return The pointer to the interior node\label{buildskeletonend}
	}

\end{algorithm}
}

\begin{algorithm}[t]
	\fontsize{8pt}{8.5pt}\selectfont
	\caption{Parallel $k$d-tree construction\label{algo:constr}}
	\SetKwFor{parForEach}{parallel-foreach}{do}{endfor}
	\KwIn{A sequence of points $P$.}
	\KwOut{A $k$d-tree $T$ on points in $P$.}
	\SetKw{MIN}{min}
	\SetKw{MAX}{max}
	\SetKwProg{MyFunc}{Function}{}{end}
	\SetKwInOut{Note}{Note}
	\SetKwInOut{Parameter}{Parameter}
	\Parameter{$\skheight$: the height of a tree skeleton.
	}
	\SetKwFor{inParallel}{in parallel:}{}{}
	
	\DontPrintSemicolon
	\vspace{.5em}
	\MyFunc{\upshape{\build{$(P)$}}}{
        \tcp{Base case}
        \lIf{$|P|<2^{\skheight}\cdot \os$\label{checkleafbuild}}{
            Use the plain parallel construction and \Return{}
		}
		$S\gets$ Uniformly sample $2^{\skheight}\cdot \os$ points on $P$\label{sampleBuild} with replacement\\
        Build tree skeleton $\matht$ by constructing the first $\skheight$ levels of a \kdtree{} on $S$\\
        \tcp{Sieve each point to their corresponding bucket (external node) in $\matht$. 
        This is performed by reordering all points in $P$ to make points in the same bucket consecutive. 
        }
        $R[]\gets \textsc{Sieve}(P, \matht)$\tcp*[f]{$R[i]$: the sequence slice for all points in bucket $i$}\label{line:obtainSlices}\\
		\parForEach{external node $i$}{
			$t\gets \build(R[i])$\label{recursive}\tcp*[f]{Recursively build a \kdtree{} on $R[i]$}\\
            Replace the external node with $t$\\
		}
		\Return The root of $\matht$\label{returnroot}
	}
	\vspace{.5em}
	\tcp{Sieve points in $P$ to the buckets (external nodes) in $\matht$}
	\MyFunc{\upshape{\textsc{Sieve}}$(P,\matht)$\label{line:sieve}}{
    (Conceptually) divide $P$ evenly into chunks of size $l$\\
    \parForEach{chunk $i$}{
    \For{point $p$ in chunk $i$\label{line:seq-loop-1}} {
        $j\gets$ the bucket id for $p$ by looking up $p$ in $\matht$\label{line:loopup-1}\\
        $A[i][j]\gets A[i][j]+1$
    }
    }
    Get the column-major prefix sum of $A[i][j]$ as matrix $B$\label{line:transpose}\\
    \parForEach{bucket $j$}{
      Let $s_j\gets B[0][j]$ be the offset of bucket $j$
    }
    \parForEach{chunk $i$\label{line:beginDistribution}}{
    \For{point $p$ in chunk $i$\label{line:seq-loop-2}} {
        $j\gets$ the bucket id for $p$ by looking up $p$ in $\matht$\label{line:loopup-2}\\
        $P'[B[i][j]]\gets p$\\
        $B[i][j]\gets B[i][j]+1$\label{line:endDistribution}\\
    }
    }
    Copy $P'$ to $P$\label{line:copyback}\\
    \parForEach{bucket $j$}{
      $R[j]\gets$ the slice $P[s_j..s_{j+1}-1]$\tcp*[f]{for the last bucket, $s_{j+1}=|P|$}
    }
    \Return{$R[]$}
}
\end{algorithm}

		\hide{
        \If(\tcp*[f]{Wrap the points with a leaf node}){$|P|<\leafwrap$\label{checkleafbuild}}{
            \Return{A leaf node containing $P$}
		}
        }

\hide{
\cref{buildtree} illustrates our parallel tree construction algorithm in detail. We receive the input points $P$ where every point has dimension $k$ and checks whether it is possible to put the points within a leaf node (\cref{checkleafbuild}) at first. The algorithm continues if the size of $P$ is larger than the leave wrap $\tau$ , otherwise it allocates a leaf node containing $P$ and return this node (\cref{allocateleaf}~-~\cref{returnleaf}). The points partition starts by sampling $\log n$ candidates from $P$ (\cref{sample}) and then construct a tree skeleton $\matht$ based on the sample $S$ (\cref{startbuildskeleton}) sequentially. The construction of $\matht$ (\cref{buildskeleton}~-~\cref{buildskeletonend}) is almost the same as the traditional algorithm used to build a $k$d-tree: determine the split dimension $d$ first (\cref{dim}), find the median coordinates on this dimension (\cref{median}), then partition the points accordingly (\cref{partitionleft}~-~\cref{partitionright}). The difference is that when the skeleton height reaches $\skheight$, it allocates a bucket node instead. Each bucket node receives an identifier, i.e., an unique integer within range $[0,2^\skheight)$, in the DFS order (\cref{assignidentifier}). Leveraging the tree skeleton $\matht$ and the distributing technique introduced in~\cite{dong2023high}, one can sieve the points into contiguous array slices ordered by the bucket id efficiently. Namely, the points that are sieved into the bucket $i$ are collected within an array slice $B[i]$ (\cref{distributing}) and for all identifier $i\in[1,2^\skheight-1)$, $B[i-1], B[i], B[i+1]$ are contiguous slices of one single array. Once the partition finishes, the position of every point has been fixed, so points in different slice are independent, we then launch $2^\skheight$ threads to construct the $k$d-tree over points in every slice in parallel (\cref{recursive}). The recursion returns a tree node $T_i$ containing points sieved to the bucket $i$, therefore replacing the $i$-th bucket node with $T_i$ yields a valid (sub-)$k$d-tree $\matht$ (\cref{replacing}). In \cref{returnroot}, the tree skeleton root $\matht$ is returned to mark the end of the recursion.
}

\subsection{Theoretical Analysis}\label{sec:constr-analysis}

We now formally analyze our construction algorithm and show its theoretical efficiency.
We start with a useful lemma about sampling.
Similar results about sampling have also been shown previously \cite{RR89,agarwal2016parallel,blelloch2018geometry}.
We put it here for completeness.

\begin{lemma}\label{lem:sampling}
	For a \ourkdtree $T$ with size $n'$,
	for any $\epsilon<1$,
	setting $\os=(6c\log n)/\epsilon^2$ guarantees that the size of a child subtree is within the range of $(1/2\pm \epsilon/4)\cdot n'$ with probability at least $1-2/n^c$.
\end{lemma}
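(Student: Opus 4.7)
The plan is to reduce the claim to a tail bound on a binomial random variable and invoke a multiplicative Chernoff inequality. Fix the split dimension $d$ used at the root of $T$, and view the $s \ge \sigma$ samples available at this node as values drawn i.i.d.\ uniformly (with replacement) from the $d$-th coordinates of the $n'$ points of $T$; conditioning preserves i.i.d.-ness because the algorithm first draws the global sample set $S$ uniformly from $P$ with replacement, so the samples falling into a given sub-region of space are, conditionally on their count, i.i.d.\ uniform over the actual points of $T$.

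Next, introduce two ``quantile barriers'': let $p_{\text{low}}$ be the coordinate of the $\lfloor(1/2-\epsilon/4)n'\rfloor$-th smallest point of $T$ along dimension $d$ and $p_{\text{high}}$ be the coordinate of the $\lceil(1/2+\epsilon/4)n'\rceil$-th smallest. The splitter chosen by the algorithm is the sample median, so the child subtree is outside the range $(1/2\pm\epsilon/4)n'$ exactly when this median falls outside $[p_{\text{low}},p_{\text{high}}]$. Let $X$ count the samples with coordinate at most $p_{\text{low}}$. Each sample lies in this set independently with probability $q\le 1/2-\epsilon/4$, so $\mu := \mathbb{E}[X]\le s(1/2-\epsilon/4)$. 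The sample median is below $p_{\text{low}}$ only if $X\ge s/2=(1+\delta)\mu$ with $\delta\ge \epsilon/2$ (using $\epsilon<1$, hence $1/2-\epsilon/4\ge 1/4$). A standard multiplicative Chernoff bound $\Pr[X\ge(1+\delta)\mu]\le\exp(-\mu\delta^2/3)$ then gives failure probability at most $\exp(-\Omega(s\epsilon^2))$, which, by choosing the leading constant in $\sigma=(6c\log n)/\epsilon^2$ appropriately, is at most $n^{-c}$. A symmetric argument applied to the count of samples with coordinate at least $p_{\text{high}}$ yields the same bound; a union bound gives the claimed failure probability $2/n^c$.

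The main obstacle is purely a bookkeeping one: making sure the Chernoff exponent $\mu\delta^2/3 \ge s\epsilon^2/48$ clears $c\ln n$ with the stated constant $6$, once one accounts for the change from $\log_2$ to $\ln$ and for the lower bound $1/2-\epsilon/4\ge 1/4$. A cleaner variant avoiding the multiplicative form is to apply Hoeffding's inequality to $X$ with deviation $t=s\epsilon/4$: then $\Pr[|X-s/2|\ge t]\le 2\exp(-2t^2/s)=2\exp(-s\epsilon^2/8)$, which is $\le 2/n^c$ for any $\sigma$ of the form $\Theta(c\log n)/\epsilon^2$. Apart from this, the argument is self-contained given that the samples are drawn independently with replacement, as specified on line~\ref{sampleBuild} of Algorithm~\ref{algo:constr}.
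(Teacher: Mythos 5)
Your proposal is correct and follows essentially the same route as the paper's proof: count the samples falling below the lower quantile barrier, apply a concentration bound to show the sample median cannot land outside $[p_{\text{low}},p_{\text{high}}]$ except with probability $n^{-c}$ per side, and union-bound over the two sides. The only difference is bookkeeping: the paper keeps the exact values $\mu=(1/2-\epsilon/4)s$ and $\delta=\epsilon/(2-\epsilon)$ in the Chernoff form $\exp(-\delta^2\mu/(2+\delta))$ and uses $s\ge 2\os$ to clear the constant $6$, whereas your loosened multiplicative bound leaves the constant short (as you note), but your Hoeffding variant $\exp(-s\epsilon^2/8)\le n^{-6c\log_2 e/8}\le n^{-c}$ does close it, so the argument is complete.
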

Here we need to distinguish a subtree size $n'$ and the overall tree size $n$ for a stronger high probability guarantee, so we can apply union bounds in the later analysis.
\medskip
\begin{proof}
	\cref{algo:constr} guarantees that each leaf in a skeleton has at least $\os$ sampled points.
	Therefore, every time we find a splitter,
	it is the median of at least $2\os$ sampled points.
	Let $s$ ($\ge 2\os$) be the number of samples for this \ourkdtree $T$,
	and $\Lambda\subset T$ contains the smallest $(1/2 - \epsilon/4)\cdot n'$ points in the cutting dimension.
	We want to show that the chance we have more than $s/2$ samples in $\Lambda$  (i.e., the left side has fewer than $(1/2 - \epsilon/4)\cdot n'$ points) is small.
	Since all samples are picked randomly, we denote indicator variable $X_i$, where $X_i=1$ if the $i$-th sample is in $\Lambda$ and 0 otherwise.
	Let $X=\sum X_i$ for $i=1..|\Lambda|$, and $\mu = \mathbb E[X] = (1/2-\epsilon/4)s$.
	Let $\delta = \epsilon/(2-\epsilon)$. Then $(1+\delta)\mu=(1+\frac{\epsilon}{2-\epsilon})(\frac{1}{2}-\frac{\epsilon}{4})s=\frac{s}{2}$.
	Using the form of Chernoff Bound $\Pr[X\geq (1+\delta)\mu]\leq \exp(-{\delta^2\mu}/(2+\delta))$,
	we have:
	\begin{align*}
		\Pr\left[X\geq \frac{s}{2}\right] & \leq \exp\left(-\frac{\delta^2\mu}{2+\delta}\right) = \exp\left(-\frac{\epsilon^2 s}{16-4\epsilon}\right)                    & \hfill\!\!\!\!\!\!\!\!\!\!\!(\text{plug in }s\ge 2\os) \\
		                                  & \le\exp\left(-\frac{12c\log n}{16-4\epsilon}\right)=\frac{1}{n^{c\cdot{\frac{12\log_2 e}{16-4\epsilon}}}} \le \frac{1}{n^c}.
	\end{align*}
	The right subtree has the same low probability of being unbalanced, so taking the union bound gives the state bound.
\end{proof}

\hide{
	\begin{lemma}[Sampling]\label{lem:sampling}
		For a \ourkdtree $T$ with size $n'$, setting $\os=\Omega((\log n)/\epsilon^2)$ guarantee a child subtree within $(1/2\pm \epsilon/4)\cdot n'$ nodes for any $\epsilon<1$ \whp{} in $n$.
	\end{lemma}
	Here we need to distinguish a subtree size $n'$ and the overall tree size $n$ for a stronger high probability guarantee, so we can apply union bounds in the later analysis.
	\medskip
	\begin{proof}
		We show that by setting $\os=(12c\log n)/\epsilon^2$, the probability that the child subtree sizes drop below $(1/2\pm \epsilon/4)\cdot n'$ is no more than $1/n^c$.
		This is equivalent to the statement in the lemma.
		We first consider the left subtree $T.lc$, and the other case is symmetric.

		In \cref{algo:constr}, every time we find the pivot, it is the median of at least $\os$ sampled points.
		Here let $s$ ($\ge \os$) be the sample size of this \ourkdtree (the current subtree), and $S'$ contains the smallest $(1/2\pm \epsilon/4)\cdot n'$ points in the cutting dimension.
		We want to show that the chance we have more than $s/2$ samples in $S'$ is small.
		Since all samples are picked randomly, we denote indicator variable $X_i=1, i\in S'$ if $i$-th sample is in $S'$ and 0 otherwise.
		Let $X=\sum_{i\in S'}X_i$ and $\mu = \mathbb E[X] = (1/2-\epsilon/4)s$.
		Let $\delta = \epsilon/(2-\epsilon)$, and we have $(1+\delta)\mu=(1+\frac{\epsilon}{2-\epsilon})(\frac{1}{2}-\frac{\epsilon}{4})s=\frac{s}{2}$.
		Using the form of Chernoff Bound $P[X\geq (1+\delta)\mu]\leq \exp(-{\delta^2\mu}/(2+\delta))$,
		we have:
		\begin{align*}
			\mathbb P[X\geq \frac{s}{2}] & \leq \exp\left(-\frac{\delta^2\mu}{2+\delta}\right) = \exp\left(-\frac{\epsilon^2 s}{16-4\epsilon}\right)                  \\
			                             & \le\exp\left(-\frac{12c\log n}{16-4\epsilon}\right)=\frac{1}{n^{c\cdot{\frac{12\log e}{16-4\epsilon}}}} \le \frac{1}{n^c}.
		\end{align*}
		The right subtree has the same low probability to be unbalanced, so taking the union bound gives the stated bound.
		\hide{
			We can partition the input points $P$ into three groups:
			\begin{itemize}
				\item $P_L = \{x\in P: rank(x)\leq n/2-\epsilon/4\}$
				\item $P_M = \{x\in P: n/2-\epsilon/4 < rank(x) < n/2+\epsilon/4\}$
				\item $P_R = \{x\in P: rank(x)\geq n/2+\epsilon/4\}$
			\end{itemize}
			Let $s=|S|$ the sample size. If fewer than $s/2$ points from both $P_L$ and $P_R$ are presented in $S$, then the median from $S$ is the desired one. Denote indicator variable $X_i=1, i\in S$ if $i$-th sample is in $P_L$ and 0 otherwise. Let $X=\sum_{i\in S}X_i$ and $\mu = \mathbb E[X] = (1/2-\epsilon/4)s$. We have:
			\begin{align}
				\mathbb P[X\geq \frac{s}{2}] & = \mathbb P[X\geq (\frac{1}{2}-\frac{\epsilon}{4})s+\frac{\epsilon}{4}s]           \\
				                             & =\mathbb P[X\geq (1+\frac{\epsilon}{2-\epsilon})(\frac{1}{2}-\frac{\epsilon}{4})s] \\
				                             & =\mathbb P[X\geq (1+\delta)\mu]
			\end{align}
			where $\delta = \epsilon/(2-\epsilon)$. Using First Chernoff Bound, we have:
			\begin{align}
				\mathbb P[X\geq \frac{s}{2}] & \leq \exp(-\frac{\delta^2\mu}{2+\delta})        \\
				                             & = \exp(-\frac{\epsilon^2 s}{16-4\epsilon})      \\
				                             & =\exp(-\frac{2^\skheight}{16-4\epsilon} \log n) \\
				                             & = \frac{1}{n^c}
			\end{align}
			where $c=2^\skheight/(\ln2(16-4\epsilon))$. Similarly, there are less than $s/2$ points from $P_R$ in $S$ \whp. The proof follows by the union bound.}
	\end{proof}
}

\begin{lemma}[Tree height]\label{lem:tree-height}
	The total height of a \ourkdtree with size $n$ is $O(\log n)$ for $\os=\Omega(\log n)$, or $\log n+O(1)$ for $\os=\Omega(\log^3 n)$, both \whp.
\end{lemma}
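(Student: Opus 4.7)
The plan is to invoke \cref{lem:sampling} once per internal splitter and then take a union bound, so that with high probability every splitter along every root-to-leaf path produces subtree sizes inside the interval $(1/2\pm\varepsilon/4)\cdot n'$. Since a tree with $n$ points contains fewer than $n$ internal nodes, the failure probability is at most $2n/n^c$, which is polynomially small for any fixed $c\ge 2$. Conditioning on this good event, the size of a subtree at depth $h$ (measured in splitter count, combining skeleton levels across all recursive calls to \build{}) is bounded by $(1/2+\varepsilon/4)^{h}\cdot n$. Recursion into \build{} stops when the subtree size falls below $2^{\skheight}\cdot\sigma$, at which point the plain parallel construction produces a perfectly balanced subtree of height at most $\skheight+\log\sigma$. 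Writing $\alpha := \log_2(2/(1+\varepsilon/2))$, the total height is therefore at most $\log(n/(2^{\skheight}\sigma))/\alpha + \skheight + \log\sigma + O(1)$.

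For the first claim, I would set $\varepsilon$ to an absolute constant (any $\varepsilon<1$ works; e.g.\ $\varepsilon=1/2$), which makes $\alpha$ a positive constant strictly less than $1$ and reduces the first term to $O(\log n)$. Lemma~\ref{lem:sampling} then requires $\sigma=\Omega(\log n)$, and the $\skheight+\log\sigma$ correction is $O(\log n)$ as well, giving the desired $O(\log n)$ bound. For the second claim, I would choose $\varepsilon = c'/\log n$ for a small constant $c'$. A Taylor expansion yields $\alpha = 1 - \log_2(1+\varepsilon/2) = 1 - \Theta(\varepsilon)$, so $1/\alpha = 1 + O(1/\log n)$. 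Plugging this into the height bound gives
\[
\frac{\log n - \skheight - \log\sigma}{\alpha} + \skheight + \log\sigma = \log n + O(\varepsilon\log n) + O(1) = \log n + O(1),
\]
and Lemma~\ref{lem:sampling} with this $\varepsilon$ demands $\sigma=\Omega(\log n/\varepsilon^2)=\Omega(\log^3 n)$, matching the hypothesis.

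The main obstacle I expect is the bookkeeping at the skeleton/base-case boundary: I need to ensure that the $\skheight+\log\sigma$ overhead from the plain construction at the bottom is absorbed by the leading term without inflating it. The key observation that handles this is that the plain construction is exactly balanced, so it contributes only $\log(2^{\skheight}\sigma)$ levels rather than a multiplicatively distorted count, and the distortion factor $1/\alpha-1$ is applied only to the non-base portion of the height. A secondary subtlety is that the sampling lemma guarantees its bound with probability $1-2/n^c$ in terms of the \emph{global} $n$ rather than the local subtree size $n'$; this is exactly what allows the union bound over all splitters of all recursive subproblems to go through cleanly, so I would state this explicitly when invoking \cref{lem:sampling}.
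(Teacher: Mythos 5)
Your proposal is correct, and its skeleton matches the paper's: both parts invoke \cref{lem:sampling} with the same two regimes ($\varepsilon$ an absolute constant for the $O(\log n)$ bound, $\varepsilon=\Theta(1/\log n)$ forcing $\os=\Omega(\log^3 n)$ for the $\log n+O(1)$ bound), and both rely on the fact that the sampling lemma's failure probability is stated in terms of the global $n$ so that a union bound over all splitters goes through. Where you genuinely diverge is in extracting the additive-$O(1)$ bound in the second part: the paper solves the height equation $(1/2+1/\log n)^h=1/n$ exactly, sets $\delta=h-\log n$, and proves $\delta=O(1)$ by a fairly heavy monotonicity argument on $f(t)=-t/\log(1/2+1/t)-t$ involving first, second, and third derivatives (deferred to an appendix). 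Your route instead bounds the per-level distortion factor directly, $1/\alpha=1/(1-\log_2(1+\varepsilon/2))=1+O(\varepsilon)$, and multiplies it into the $\log n$ depth, which is more elementary and avoids the calculus entirely while giving the same conclusion. You are also more explicit than the paper about the skeleton/base-case boundary --- the paper's proof tacitly applies the geometric-shrinkage bound all the way down, whereas you separate the exactly balanced plain-construction levels from the distorted ones; this is a harmless but welcome tightening of the bookkeeping.
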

\begin{proof}
	To prove the first part, we will use $\epsilon=1$ in \cref{lem:sampling}.
	In this case, for $\os=6c\log n$, one subtree can have at most 3/4 of the size of the parent with probability $1-1/n^c$, which means that
	the tree size shrinks by a quarter every level. This indicates that the tree heights is $O(\log n)$ \whp for any constant $c>0$.

	We now show the second part of this lemma. For leaf wrap $\leafwrap\ge 4$, the tree has height 1 for $n\leq 4$.
	We will show that using $\epsilon=4/\log n$  (i.e., $\os=O(\log^3 n)$), the tree height $h$ is $\log n+O(1)$.
	Similar to the above, here in the worst case, for a subtree of size~$n'$, the children's subtree size is at most $(1/2+\epsilon/4)\cdot n'=(1/2+1/\log n)\cdot n'$.
	Hence, the tree height satisfies $(1/2+1/\log n)^h=1/n$, so $h={-\log n}/{\log(1/2+1/\log n)}$.
	Here, let $\delta=h-\log n={-\log n}/{\log(1/2+1/\log n)}-\log n$. It solves to $\delta=O(1)$ for $n> 4$.
	Although complicated, the analysis primarily employs some algebraic methods. \ifconference{Due to the space limit, we put it in the supplementary material.}\iffullversion{Due to the space limit, we put it in \cref{app:treeHeightSupport}.}
	The high-level idea is to replace $t=\log n$, so $\delta=f(t)=-t/(\log(1/2-1/t))-t=t/(1+\log t-\log(t-2))-t$.
	We show that $f(t)$ is decreasing for $t\ge 2$ by proving $f'(t)<0$ for $t\ge 2$.
	Since we have several logarithmic functions in the denominator, we computed the second and third derivatives and used a few algebraic techniques to remove them.
\end{proof}
\hide{
	\begin{proof}
		Plugging in $\os\ge\log n$ to \cref{lem:sampling} gives $\epsilon\le 1$ \whp, so one subtree can have at most 3/4 of the size of the parent---the tree size shrinks by a quarter every level.
		This indicates the tree heights to be $O(\log_2 n)$ \whp.

		We now show the second part of this lemma, which is more mathematically interesting. W.l.o.g let leaf wrap $\leafwrap=4$, so that the tree has height 1 for $n\leq 4$.
		By plugging in $\epsilon=O(1)/\log n$, the tree height $h$ can be solved as $\log n+O(1)$.
		Similar to the above, here in the worst case, the children's subtree size shrinks by at least $1/2+\epsilon/4$.
		For simplicity, we let $\epsilon=4/\log n$ so $1/2+\epsilon/4=1/2+1/\log n$.
		Hence, the tree height satisfies $(1/2+1/\log n)^h=1/n$, so $h={-\log n}/{\log(1/2+1/\log n)}$.
		We will show that $\delta=h-\log n={-\log n}/{\log(1/2+1/\log n)}-\log n=O(1)$ for $n> 4$.
		\ifconference{
			This analysis is quite complicated---due to the space limit, we put it in the full version of the paper~\cite{pkdPaperFull}.
		}
		\iffullversion{
			This analysis is quite complicated---due to the space limit, we put it in the \cref{app:treeHeightSupport}.	}
		The high-level idea is to replace $t=\log n$, so $\delta=f(t)=-t/(\log(1/2-1/t))-t=t/(1+\log t-\log(t-2))-t$.
		We show that $f(t)$ is decreasing for $t\ge 2$ by proving $f'(t)<0$ for $t\ge 2$.
		Since we have multiple logarithmic functions in the denominator, we can compute the second and third derivatives and a few algebraic techniques to remove them.
	\end{proof}
}

Later, we will experimentally show that maintaining strong balancing criteria (tree height of $\log_2 n+O(1)$) is not necessary for most \kdtree{}'s use cases.
Hence, in the rest of the analysis, we will use $\os=\Theta(\log n)$ and assume the tree height as $O(\log n)$.

With these lemmas, we now show that \cref{algo:constr} is theoretically efficient in work, span, and cache complexity, if we plug in the appropriate parameters.
Recall that $M$ is the small memory size.
We will set 1) skeleton height $\levels=\epsilon\log M$ for some constant $\epsilon<1/2$;
and 2) chunk size $l=2^\levels$, so array $A$ and $B$ have size $O(2^{\levels}\times |P|/l)=O(|P|)$,
and operations on $A$ and $B$ will have $O(1)$ cost per input point on average.
We use $O(\mb{Sort}(n))=O((n/B)\log_M n)$ to refer to the best-known cache complexity of sorting $n$ keys, which is also a lower bound
for \kdtree{} construction---consider that the input points are in one dimension, then building a \kdtree{} is equivalent to sorting all points by their coordinates.
We also assume $M=\Omega(\polylog(n))$, which is true for realistic settings.

\begin{theorem}[Construction cost]\label{thm:constr}
	With the parameters specified above,
	\cref{algo:constr} constructs a \ourkdtree of size $n$ in optimal $O(n\log n)$ work and $O(\mb{Sort}(n))=O((n/B)\log_M n)$ cache complexity, and has $O({M}^{\epsilon}\log_M n)$ span for constant $0<\epsilon<1/2$, all with high probability.
	Here $M$ is the small-memory size and $B$ is the block size.
\end{theorem}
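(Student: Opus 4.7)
The plan is to analyze \cref{algo:constr} level-by-level in the recursion tree, using \cref{lem:tree-height} to bound the recursion depth and \cref{lem:sampling} to bound the imbalance at each split. By \cref{lem:tree-height} the overall tree has height $O(\log n)$ \whp, and since each call to \build{} handles $\levels=\epsilon\log M$ levels of the tree before recursing, the recursion depth is $O(\log n/\levels)=O(\log_M n)$ \whp. The base case, triggered when $|P|<2^{\levels}\cdot\os=O(M^{\epsilon}\log n)$, is handled by the plain parallel construction; since $M^\epsilon\log n=o(M)$ under the stated assumption $M=\Omega(\polylog n)$ together with $\epsilon<1/2$, the base case fits in cache, which bounds its contribution to $O(|P|\log|P|)$ work, $O(\polylog|P|)$ span, and $O(|P|/B)$ cache misses per subproblem. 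Summing over base cases contributes only lower-order terms.

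For work, I would show that at each recursion level the total work summed over all subproblems at that level is $O(n\log M)$. Within a call on $n'$ points: sampling is $O(|S|)$; building the skeleton on $|S|=O(M^{\epsilon}\log n)$ samples by the plain parallel algorithm costs $O(|S|\log|S|)$, which is lower-order compared to the sieve; and \textsc{Sieve} is dominated by the two passes that each lookup a point's bucket in a skeleton of height $\levels$, contributing $O(n'\levels)=O(n'\log M)$ work. Summed over the $O(\log_M n)$ recursion levels, each handling $n$ points in total, this gives $O(n\log M\cdot\log_M n)=O(n\log n)$, which is optimal.

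For cache complexity, the critical observation is that the skeleton (size $O(M^{\epsilon}\log n)=o(M)$) fits in cache, so all skeleton lookups on lines \ref{line:loopup-1} and \ref{line:loopup-2} incur no cache misses. Under the standard tall-cache assumption $B\le M^{1-\epsilon}$, we have $2^{\levels}B\le M$, so a cacheline per bucket for the scattered writes into $P'$ (lines \ref{line:beginDistribution}--\ref{line:endDistribution}) fits in cache, and likewise a single row of $A$ and $B$ (size $2^\levels$) fits. The reads of $P$ and writes of $P'$ each cost $O(n'/B)$, and the column-major prefix sum on $A$ reduces to matrix transpose plus linear prefix sum, costing $O(n'/B)$ via the cache-oblivious transpose of~\cite{blelloch2010low}. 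Summing $O(n/B)$ per level over $O(\log_M n)$ levels gives $O((n/B)\log_M n)=O(\mb{Sort}(n))$, matching the sorting lower bound.

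For span, within a call the sample selection, plain-parallel skeleton construction on $|S|$ samples, and the parallel prefix-sum/transpose each contribute $O(\polylog M+\log n)$ span, while the inner sequential loops on lines \ref{line:seq-loop-1} and \ref{line:seq-loop-2} over a chunk of length $l=2^{\levels}=M^{\epsilon}$ contribute $O(M^{\epsilon})$ span and dominate (the $\polylog$ terms are absorbed by $M^\epsilon$ since $M=\Omega(\polylog n)$). Hence per-call span is $O(M^{\epsilon})$, and over the $O(\log_M n)$ serial recursion depth total span is $O(M^{\epsilon}\log_M n)$. Finally, the \whp guarantees propagate by applying union bound to \cref{lem:sampling} and \cref{lem:tree-height} over the $O(n)$ recursive invocations. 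The main obstacle will be the cache-complexity step: I need to carefully track the working set of \textsc{Sieve} and verify that under the joint setting of $\levels=\epsilon\log M$ and $l=2^{\levels}$, the skeleton, the active row of $A$/$B$, and one cacheline per bucket coexist in cache, which is where the choice of $\epsilon<1/2$ together with the tall-cache assumption becomes essential.
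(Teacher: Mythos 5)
Your proposal is correct and follows essentially the same route as the paper's proof: the same per-recursion-level accounting of work ($O(n\log M)$ per level times $O(\log_M n)$ levels), the same $O(2^\levels)$ span bound from the sequential chunk loops, and the same cache argument that the skeleton and the per-chunk rows of $A$ and $B$ fit in cache so each level costs $O(n/B)$ block transfers. Your explicit invocation of the tall-cache assumption to keep one block per bucket resident during the scattered writes into $P'$ is a slightly more careful treatment of a point the paper leaves implicit, but it does not change the argument.
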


\begin{proof}
	We start with the work bound.
	Although the entire algorithm has several steps,
	each input point is operated for $O(1)$ times in each recursive level, except for \cref{line:loopup-1,line:loopup-2}.
	For these two lines, looking up the bucket id has $O(\levels)$ work.
	Since the total recursive depth of \cref{algo:constr} is $O(\log n)/\levels$ \whp{},
	the work is $O(\levels\cdot(\log n)/\levels)=O(\log n)$ \whp{} per input point,
	leading to total $O(n\log n)$ work \whp{}.

	\hide{
		We now analyze the span of \cref{algo:constr}.
		The algorithm starts with sampling $2^\levels\cdot\os$ points and build a tree skeleton with $\levels$ levels.
		Using the standard \kdtree construction algorithm (mentioned at the beginning of \cref{sec:constr}) gives $O(\levels \log n)$ span,
		which is dominated by the $O(2^{\levels})=O(M^{\epsilon}\log n)$ span for sampling.

		We now focus on the span of \textsc{Sieve}().
		There are two loop bodies, \cref{line:seq-loop-1,line:seq-loop-2}, that are executed sequentially.
		The span for them is hence the same as the work analyzed above, which is $O(l\levels)$.
		The column-major prefix sum on \cref{line:transpose} can be computed in $O(\log n)$ span~\cite{blelloch2010low}, and all other parts also have $O(\log n)$ span.
		The total span for one level of \textsc{Sieve}() is therefore $O(l\levels+\log n)$.
		Since \cref{algo:constr} have $O(\log n)/\levels$ recursive levels \whp, the span for \cref{algo:constr} is $O(l\levels+\log n)\cdot O(\log n)/\levels=O(2^\levels\log n+\log^2n/\levels)=O(M^{\epsilon}\log n)$ \whp, assuming $M^{\epsilon}>\log n$.
	}

	We now analyze the span of \cref{algo:constr}.
	The algorithm starts with sampling $2^\levels\cdot\os$ points and building a tree skeleton with $\levels$ levels.
	Taking samples and building the skeleton on them can be trivially parallelized in $O(\lambda\log n)$ span (using the plain algorithm at the beginning of \cref{sec:constr}).
	In the sieving step, each chunk has $l=2^{\levels}$ elements that are processed sequentially, and all chunks are processed in parallel.
	This gives $O(2^{\levels})$ span.
	The column-major prefix sum on \cref{line:transpose} can be computed in $O(\log n)$ span~\cite{blelloch2010low}, and all other parts also have $O(\log n)$ span.
	The total span for one level of recursion is therefore $O(l+\log^2 n)=O(M^{\epsilon})$, assuming $M=\Omega(\mathit{polylog(n)})$.
	Since \cref{algo:constr} have $O(\log n)/\levels$ recursive levels \whp{}, the span for \cref{algo:constr} is $O(M^{\epsilon}\log_M n)$ \whp.

	We finally analyze the cache complexity.
	Based on the parameter choosing, the samples fully fit in the cache.
	In each sieving step, since $l=2^\levels=M^{\epsilon}\le \sqrt{M}$, the array $A[i][\cdot]$ and $B[i][\cdot]$ fits in cache, so the loop bodies on \cref{line:seq-loop-1,line:seq-loop-2} will access the input points in serial, incurring $O(n/B)$ block transfers.
	All other parts cost $O(n/B)$ block transfers, including the column-major prefix sum on \cref{line:transpose}~\cite{blelloch2010low}.
	Hence, the total cache complexity for \cref{algo:constr} is $O(n/B)$ per recursive level, multiplied by $O((\log n)/\levels)=O(\log n/\log M)=O(\log_M n)$ levels \whp, which is $O(n/B\cdot \log_M n)$.
\end{proof}

The work and cache bounds in \cref{thm:constr} are the same as sorting (modulo randomization)~\cite{aggarwal1988input} and hence optimal.
The span bound can also be optimized to $O(\log^2 n)$ \whp{}, using a similar approach in~\cite{blelloch2010low}, with the details given in the proof of the theorem below.
\hide{We still use the previous version in the implementation, because in practice, $O({M}^{\epsilon}\log_M n)$ span bound is good enough to achieve sufficient parallelism,
	and the solution with stronger span bound may incur a larger constant that slows down the overall performance. }

\begin{theorem}[Improved span]\label{thm:constr-polylog}
	A \ourkdtree of size $n$ can be built in optimal $O(n\log n)$ work and $O(\mb{Sort}(n))=O((n/B)\log_M n)$ cache complexity, and has $O(\log^2 n)$ span, all with high probability.
\end{theorem}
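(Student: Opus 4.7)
The plan is to identify and remove the two span bottlenecks in the proof of \cref{thm:constr} while preserving the work and cache bounds. Inspecting that proof, the span of one recursive level of \build{} is $O(l + \log^2 n)$: the $\log^2 n$ term comes from the column-major prefix sum on \cref{line:transpose} combined with building the skeleton on $2^{\levels}\cdot\os$ samples, and the $l = 2^{\levels}$ term comes from the sequential inner loops on \cref{line:seq-loop-1,line:seq-loop-2} that scan a whole chunk. Since there are $\Theta((\log n)/\levels)$ levels of recursion \whp, to obtain an overall $O(\log^2 n)$ span we need each level to cost $O(\levels \log n)$, i.e., we must drive the chunk-processing term down to $O(\levels \log n)$ while keeping the per-level cache cost $O(n/B)$.

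First I would shrink the chunk size to $l = \Theta(\log n)$ (independent of $\levels$). Each chunk can then be processed in $O(\log n)$ span by executing its $l$ loop iterations in parallel instead of sequentially: for each point we compute its bucket id by a single lookup in $\matht$ (span $O(\levels)$), and we then build the per-chunk bucket counts $A[i][\cdot]$ and the in-chunk offsets by a parallel prefix sum of length $l$ in $O(\log l) = O(\log\log n)$ span. Using the resulting offsets, the write-out on \cref{line:beginDistribution}--\cref{line:endDistribution} is also performed in parallel within each chunk with no data races, since distinct points in the chunk have distinct destinations once in-chunk offsets are resolved. Across all chunks this gives $O(\levels + \log n) = O(\log n)$ span per sieving step. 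Combined with the $O(\log^2 n)$ span of the column-major prefix sum over the $A$ matrix and the $O(\levels \log n)$ span of building the skeleton on the samples, one recursive level costs $O(\log^2 n)$ span, and the total span over $O((\log n)/\levels)$ levels becomes $O((\log^2 n)\cdot (\log n)/\levels) = O(\log^2 n)$ after choosing $\levels = \Theta(\log n / \log M) \cdot \log M = \Theta(\log M)$ as before (so that $\log n / \levels = \log_M n$ is absorbed into $\log n$ using $M = \Omega(\polylog n)$; more precisely, I would re-pick $\levels$ to balance the two $\log^2 n$ factors and confirm the recursion contributes a polylogarithmic tail).

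The main obstacle is preserving the $O((n/B)\log_M n)$ cache bound after shrinking $l$. With $l = \Theta(\log n)$ the counting matrix $A$ has dimensions $(n/l)\times 2^{\levels}$, which is larger than before and no longer has a single row fitting in $O(\sqrt M)$ words per chunk-iteration triviality. Here I would lean on the cache-oblivious distribution primitive of Blelloch, Gibbons, and Simhadri~\cite{blelloch2010low}: instead of materialising $A$ and scanning chunks in column-major order directly, perform the count/transpose/scatter using their low-depth cache-oblivious matrix transpose and prefix-sum building blocks, which achieve $O(\log^2 n)$ span and optimal $O(\mathit{Sort}(n))$ cache complexity for the full redistribution phase. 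The per-point work stays $O(\levels)$ for the skeleton lookup, so the total work is still $O(n\log n)$ by the same accounting as in \cref{thm:constr}.

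Putting the pieces together, each recursive invocation of \build{} spends $O(n)$ work, $O(n/B)$ block transfers, and $O(\log^2 n)$ span; summing over $O(\log_M n)$ recursive levels \whp{} yields total work $O(n\log n)$, cache complexity $O((n/B)\log_M n) = O(\mathit{Sort}(n))$, and span $O(\log^2 n \cdot \log_M n) = O(\log^2 n)$ after absorbing $\log_M n$ into the polylogarithmic factor under $M = \Omega(\polylog n)$; a slightly tighter bookkeeping (charging the skeleton-build span of one level against the next level's recursion as in~\cite{blelloch2010low}) gives the stated $O(\log^2 n)$ bound without the extra $\log_M n$. The balance and tree-height properties used in \cref{lem:sampling} and \cref{lem:tree-height} are unaffected since we have only changed how points are moved, not how splitters are chosen.
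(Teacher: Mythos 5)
Your proposal identifies the correct bottleneck (the two sequential chunk loops on \cref{line:seq-loop-1,line:seq-loop-2}) and points at the right prior work~\cite{blelloch2010low}, but the concrete mechanism and the final accounting both have problems. The paper's fix is not to shrink the chunks: it keeps the structure of \textsc{Sieve} and replaces \emph{counting} with \emph{sorting} the entries by bucket label, after which the per-bucket counts fall out of the sorted order and the scatter on \cref{line:seq-loop-2} is race-free and fully parallel. This yields $O(\log n)$ span for the sieving step per level with no change to the sizes of $A$ and $B$, so the cache bound is untouched. Your alternative of setting $l=\Theta(\log n)$ makes the counting matrix $A$ have $(n/l)\cdot 2^{\levels}=\Theta(nM^{\epsilon}/\log n)$ entries, which is superlinear in $n$ whenever $M^{\epsilon}>\log n$ (exactly the regime the paper works in); the prefix sum over $A$ then costs $\omega(n)$ work and $\omega(n/B)$ block transfers per level, breaking both the work and cache bounds. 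You acknowledge this and defer entirely to the BGS primitives, but at that point you have abandoned your own construction rather than repaired it.

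The second, independent gap is arithmetic. The paper's telescoping works because the per-level span is $O(\levels\log n)$ (dominated by skeleton construction; the sieve and the prefix sum are each $O(\log n)$), and $O(\levels\log n)\cdot O(\log n/\levels)=O(\log^2 n)$ --- the $\levels$ cancels. You instead claim $O(\log^2 n)$ span per level (overestimating the column-major prefix sum, which is $O(\log n)$ for an $O(n)$-entry array) and then assert that $O(\log^2 n)\cdot O(\log_M n)=O(\log^2 n)$ by ``absorbing'' $\log_M n$. Under the paper's assumption $M=\Omega(\polylog(n))$, $\log_M n$ can be as large as $\Theta(\log n/\log\log n)$, so this product is $\Theta(\log^3 n/\log\log n)$, not $O(\log^2 n)$. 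The parenthetical about ``re-picking $\levels$'' and ``charging the skeleton-build span against the next level'' does not rescue this; no choice of $\levels$ makes $\log^2 n\cdot(\log n/\levels)$ equal $O(\log^2 n)$ unless $\levels=\Omega(\log n)$, which would destroy the cache argument requiring $2^{\levels}\le\sqrt{M}$. To close the proof you need the per-level span to carry a factor of $\levels$ that cancels against the $O(\log n/\levels)$ recursion depth, which is exactly what the paper's $O(\levels\log n)$ skeleton-dominated bound provides.
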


\begin{proof}
	The $O(2^\levels)=O({M}^{\epsilon})$ span per recursive level is caused by the two sequential loops on \cref{line:seq-loop-1,line:seq-loop-2}.
	These two loops can be parallelized by a sorting-then-merging process as in~\cite{blelloch2010low}.
	The high-level idea is to first sort (instead of just count) the entries in the loop on \cref{line:seq-loop-1} based on the leaf labels.
	Once sorted, we can easily get the count of the points in each leaf.
	Then on \cref{line:seq-loop-2}, once the array is sorted, points can be distributed in parallel.
	We refer the readers to~\cite{blelloch2010low} for more details.
	The span bound for each level is $O(\log n)$~\cite{blelloch2010low,blelloch2020optimal} for the sieving step.
	For the rest of the part, the span is $O(\lambda\log n)$ caused by skeleton construction.
	Altogether, the span per level is $O(\lambda\log n)$, and there are $O(\log n/\lambda)$ recursion levels \whp.
	Therefore, the total span is $O(\log^2 n)$ \whp.
\end{proof}

In practice we still use the previous version because $O({M}^{\epsilon}\log_M n)$ span can enable sufficient parallelism,
and the additional sorting to get the improved span may lead to performance overhead.

\hide{
	\begin{lemma}
		Each bucket node holds $O(\frac{n}{2^\skheight})$ points with w.h.p in each round, where $n$ is the input size.\label{bucketSizeLemma}
	\end{lemma}
	\begin{proof}
		\hide{	Samples $S$ is an $\epsilon$-approximation w.h.p by construction. Given a bucket node with id $i$, we have:
			\begin{align}
				         & \left|\frac{|B_i|}{|P|}-\frac{|B_i\cap S|}{|S|}\right|\leq \epsilon \\
				\implies & |B_i|\leq (\frac{|S|/2^\skheight}{|S|}+\epsilon)|P|                 \\
				\implies & |B_i|\leq (\frac{1}{2^\skheight}+\epsilon)|P|                       \\
				\implies & |B_i|= O\left(\frac{n}{2^\skheight}\right)
			\end{align}
			as desired.}
	\end{proof}

	\begin{theorem}
		\cref{buildtree} runs in $O(n\log n)$ expected work and $O(\log^2 n\log\log n)$ span w.h.p, where $n$ is the input size.\label{theoremBuild}
	\end{theorem}
	\begin{proof}
		By \cref{bucketSizeLemma}, the recursion depth is $\Theta(\log_\skheight n)$, which yields a tree with height $\log_2 n +O(1)$, both w.h.p. In each round, construction of a tree skeleton incurs $O(|S|\log |S|)=O(\log n\log\log n)$ work and span. The following points sieving can be finished in $O(n)$ expected work and $O(\log n)$ span w.h.p using semisort algorithm in \cite{blelloch2020optimal}. In total $O(n)$ work and $O(\log n\log\log n)$ span in each round and the bounds follow then.
	\end{proof}}

\section{Parallel Algorithms for Batch Updates \label{sec:update}}

In this section, we present our parallel update algorithms for \ourtree{s}.
Here we consider the batch-parallel setting that inserts or deletes a batch of points $P$ to the current \ourtree{} $T$.
\ourtree{s} do not require the tree to be perfectly balanced as in existing parallel implementations~\cite{cgal51,jo2017progressive,shevtsov2007highly}.
Our key idea 
is to make the tree \emph{weight-balanced}
and to \emph{partially reconstruct} the tree upon imbalance.

\cref{fig:batchInsertDiagram} illustrates the high-level idea.
We allow the sizes of the two subtrees to be off by at most a factor of $\balpara$,
i.e., the size of a subtree can range from $(0.5-\balpara)$ to $(0.5+\balpara)$ times the size of its parent.
Such a relaxation allows most updates to be performed lazily, until at least a significant fraction of a subtree has been modified.
If such a case happens, we rebuild the unbalanced subtree using \cref{algo:constr}.
The rebuild cost can be amortized to the updated points in all batches.
This idea of lazy updates with reconstruction has been studied sequentially on various trees for point updates~\cite{galperin1993scapegoat,overmars1983design}.
The key challenge here is to adapt this idea to the batch-parallel setting while maintaining theoretical and practical efficiency.
Theoretically, we show efficient work and cache bounds, and high parallelism for our new batch-update algorithm.
In practice, we conduct an in-depth performance study with the relaxation of balancing criteria.
In \cref{sec:balancingParameterRevisted}, we show that the query performance of \ourtree{} remains fairly stable for $\balpara\le 0.4$ (the two subtree sizes differ by up to 9$\times$).
The weight-balanced feature of \ourtree{} allows it to significantly outperform all existing counterparts.


\subsection{Batch Insertion\label{sec:batchInsert}}

\begin{figure}[t]
	\includegraphics[width=0.45\textwidth]{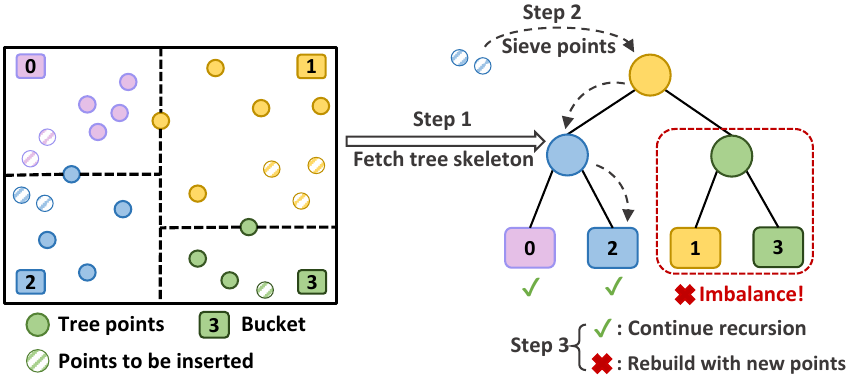}
	\caption{\textbf{Illustation of our batch insertion to a \kdtree{}}.
		Our algorithm first fetches the tree skeleton from the \kdtree{}, sieves the points into the corresponding bucket as in \cref{algo:constr}, then processes each buckets in parallel, and finally rebuilds the subtrees that become imbalance after insertion.
	}
	\label{fig:batchInsertDiagram}
\end{figure}

\hide{
	\begin{algorithm}[t]
		\fontsize{8pt}{8.5pt}\selectfont
		\caption{Batch insertion\label{batchinsert}}
		\SetKwFor{parForEach}{parallel-foreach}{do}{endfor}
		\SetKwProg{MyFunc}{Function}{}{end}
		\SetKwInOut{Note}{Note}
		\SetKwInOut{Parameter}{Parameter}
		\KwIn{Points $P$ with dimension $k$, a $k$d-tree $T$.}
		\KwOut{A $k$d-tree with $P$ inserted.}
		\Parameter{$\skheight$: the maximum height of a fetched tree skeleton.\\
			$\leafwrap$: the leave wrap of $T$.
		}
		\DontPrintSemicolon

		\vspace*{.5em}
		\tcp{Insert points $P$ into $k$d-tree $T$}
		\MyFunc{\upshape{\batchinsert{$(T,P)$}}}{
			$n\gets |P|$\tcp*[f]{Size of the input points}\\
			\If{$n=0$\label{emptyP}}{\Return $T$}
			\If{$T$ is leaf}{
				$S\gets T.p\cup P$\\
				\eIf(\tcp*[f]{There is enough space in leaf}){$|S|\leq \leafwrap$}{
					$T.p\gets S$\label{checkleaf}\tcp*[f]{Put nodes within the leaf directly}\\
					\Return $T$
				}{
					$\deleteTree(T)$\label{deleteatleaf}\tcp*[f]{Delete the tree before rebuild}\\
					\Return $\build(S)$\label{rebuildleaf}\tcp*[f]{Rebuild the tree}
				}
			}
			$\matht\gets\fetchTreeSkeleton(T,0)$\label{fetchinsert}\\
			Sieve points to the corresponding bucket node in $\matht$, array slice $B[i]$ holds all points sieved to bucket node $i$. \label{partitioninsert}\\
			Tag the roots of (sub)trees in $\matht$ that become imbalance after insertion. If both a node and its ancestor are imbalance, tag the ancestor only.\label{tagimba} \\
			Collect all tagged imbalance nodes, and all bucket nodes that are not in an imbalance (sub)tree, within set $U$.\\
			\parForEach{tree node $\varmatht\in U$\label{parlaykeyvalue}}{
				$S\gets$ points sieved into $\varmatht$\\
				\eIf(\tcp*[f]{Stop insertion}){$\varmatht$ is taged to be imbalance}{
					$T'\gets \build(\{\varmatht\}\cup S)$\tcp*[f]{Rebuild the tree instead}\label{needsrebuild}\\
					$\deleteTree(\varmatht)$
				}(\tcp*[f]{No need to rebuild, continue insertion})
				{$T'\gets\batchinsert(\varmatht,S)$\label{needscontinue}}
				Replace the node $\varmatht$ in $\matht$ with $T'$ and redirect its parent's pointer to $T'$\label{replaceinsert}
			}
			\Return The root of $\matht$\label{endinsert}
		}

		\vspace{.5em}
		\tcp{Fetch the tree skeleton from $T$ with height no more than $\skheight$}
		\MyFunc{\upshape{\fetchTreeSkeleton}$(T,h)$\label{retriveskeleton}}{
			\If(\tcp*[f]{Reaches last layer of tree skeleton}){$T$ is leaf \OR $h\geq \skheight$\label{retrieveend}}{
				Cast $T$ to a bucket node and assign it a new identifier\label{casting}\\
				\Return The pointer to the bucket node
			}
			$\fetchTreeSkeleton(T_\lson,h+1)$\\
			$\fetchTreeSkeleton(T_\rson,h+1)$\\
			\Return $T$ \label{retriveskeletonend}\tcp*[f]{Use the tree node itself as a skeleton }
		}

	\end{algorithm}
}

\begin{algorithm}[t]
	\fontsize{8pt}{8.5pt}\selectfont
	\caption{Batch insertion\label{algo:insert}}
	\SetKwFor{parForEach}{parallel-foreach}{do}{endfor}
	\SetKwProg{MyFunc}{Function}{}{end}
	\SetKwInOut{Note}{Note}
	\SetKwInOut{Parameter}{Parameter}
	\SetKwFor{pardo}{In Parallel:}{}{}
	\KwIn{A sequence of points $P$ and a \kdtree{} $T$.}
	\KwOut{A $k$d-tree with $P$ inserted.}
	\Parameter{$\skheight$: the maximum height of a fetched tree skeleton.\\
		$\leafwrap$: the leave wrap of $T$.
	}
	\DontPrintSemicolon

	\vspace*{.5em}
	\tcp{Insert points $P$ into $k$d-tree $T$}
	\MyFunc{\upshape{\batchinsert{$(T,P)$}}}{
	\lIf{$P=\emptyset$\label{emptyP}}{\Return $T$}
	\lIf(\tcp*[f]{Insert into a leaf}){$T$ is leaf\label{line:base2}}{
		\Return $\build( T \cup P)$\label{rebuildleaf}
	}
	$\matht\gets$ the skeleton at $T$\label{fetchinsert}\\
	Sieve points to the corresponding bucket in $\matht$ using \textsc{Sieve}$(P,\matht)$ from \cref{algo:constr}. Let $R[i]$ be the sequence of all points sieved to bucket $i$. \label{partitioninsert}\\
	$t\gets$ The root of skeleton $\matht$\\
	\Return \textsc{InsertToSkeleton}$(t,R[0..2^{\skheight}))$
	}
	\vspace{.5em}
	\tcp{Insert the buckets $R[l]$ to $R[r-1]$ to a node $t$ in the skeleton}
	\MyFunc{\upshape\textsc{InsertToSkeleton}$(t,R[l..r))$\label{retriveskeleton}}{
	\If{$t$ is an external node in the skeleton\label{line:rebalancebase}}{
		$x\gets$ the subtree at $t$\\
		\Return \batchinsert{$(x, R[l])$}
	} \Else{
		\If{after insertion, the two subtrees at $t$ are weight-balanced\label{insertrebuild}} {
			$m\gets$ number of buckets in $t.\lc$\\
			\pardo{}{
				$t.\lc \gets \textsc{InsertToSkeleton}(t.\lc,R[l,m))$\\
				$t.\rc \gets \textsc{InsertToSkeleton}(t.\rc,R[m,r))$\\
			}
			\Return $t$
		} \lElse(\tcp*[f]{Rebuild the subtree}){
			\Return \build{$\left(t\cup \left(\bigcup_{i=l}^{r-1}R[i]\right)\right)$}\label{line:update-build-tree}
		}
	}
	}

\end{algorithm}

We show our insertion algorithm in \cref{algo:insert}, which takes as input a \kdtree{} $T$ and a set of points $P$,
and inserts $P$ to $T$.
There are two base cases: 1) if $P=\emptyset$ no insertion is needed (\cref{emptyP}), and 2)
if $T$ is a leaf, the algorithm will construct a tree based on $P\cup T$ (\cref{line:base2}).

Otherwise, we will first grab the skeleton $\matht$ at $T$ at \cref{fetchinsert}.
Here we will also apply the \emph{sieving algorithm} in construction to sieve all points in $P$ based on the skeleton $\matht$ (\cref{partitioninsert}).
Based on the partition of the buckets $R[]$, we will apply the insertions and rebalance the tree in function \textsc{InsertToSkeleton}.
This algorithm not only processes the skeleton top down to perform the insertions of each bucket to the corresponding subtrees, but also identifies the unbalanced subtrees to reconstruct them.
In particular, with the set of points in each bucket and the original subtree size, we can compute the sizes of each subtree in $\matht$ after insertion, and thus determine whether any of these subtrees are unbalanced.
If we encounter the node $t$ in $\matht$ that will become unbalanced after insertion (the else-condition at \cref{line:update-build-tree}),
we will directly reconstruct the subtree using all original points in $t$ and the points in $P$ that belong to this subspace.
A reconstruction can be performed by flattening all points in the current subtree with the points to be inserted, and applying the construction algorithm to create a (almost) perfectly balanced tree.
Note that in our case, it is not perfectly balanced due to our sampling-based construction algorithm, but in \cref{sec:update-analysis} we will show our update algorithms are still theoretically efficient.
If a reconstruction is triggered at subtree $t$, we do not need to further process the subtrees of $t$ in this case.

\hide{
	We process the skeleton top down. If an external node in the skeleton is reached, then we just need to recursively insert corresponding bucket in $R$ (i.e., the set of points to be inserted in this subtree) into this subtree (\cref{line:rebalancebase}).
	Otherwise, with the set of points in each bucket and the original subtree size, we can compute the sizes of each subtree in $\matht$ after insertion,
	and thus determine whether any of these subtrees are unbalanced.
	If we encounter node $t$ in $\matht$ that will become unbalanced after insertion,
	we will directly reconstruct the subtree using all original points in $t$ and the points in $P$ that belong to this subspace.
	A reconstruction can be performed by flattening all points in the current subtree with the points to be inserted, and applying the construction algorithm to create a (almost) perfectly balanced tree.
	Note that in our case it cannot be perfectly balanced due to our sampling-based construction algorithm, but in \cref{sec:update-analysis} we will show our update algorithms are still theoretically efficient.
	If a reconstruction is triggered at subtree $t$, we do not need to further process the subtrees of $t$ in this case.
	If a subtree remains balanced after insertion, we recursively deal with the subtrees and their corresponding buckets (). }


\subsection{Batch Deletion}
Given a set of points $P$ and a \kdtree $T$, the batch deletion algorithm removes $P$ from $T$.
Compared with batch insertion, the challenge of batch deletion is the additional step of handling points that are not in the tree, i.e., $P\setminus T \neq \emptyset$. Due to these absent points, we can no longer identify the unbalanced subtrees before we traverse into these subtrees and mark all the points to be deleted.

Therefore, our algorithm for batch deletion goes in two rounds.
In the first round, all points in $P$ are sieved into the corresponding leaves in $T$ using the sieving algorithm from \cref{algo:constr}.
By doing this, we identify all points in $P$ that are not in $T$, and discard them.
After this, we know the exact size of each subtree after deletion, and we then identify the unbalanced ones after deletion.
This process is similar to the batch insertion algorithm, and thus the asymptotic cost also remains the same.

\hide{
	\myparagraph{Dealing with partially overlapped $P$ and $T$.}
	In practice, $P$ and $T$ may partially overlap, i.e., inserting points already in $T$ and removing points not in $T$.
	In this case, directly using $|T|+|P|$ or $|T|-|P|$ to estimate the tree size after updating cannot accurately determine the imbalance of the tree.
	There are several ways to get around this.
	The simplest one is to maintain a set for all points in $T$, and always first check if $p\in P$ is already in $T$ (for insertion) or not in $T$ (for deletion) before applying the updates.
	We can use parallel hash tables~\cite{shun2014phase} or BSTs~\cite{blelloch2016just,sun2018pam} to maintain the set without affecting the asymptotic bounds.
	Our implementation uses the second approach by traversing the tree in two rounds.
	In the first round, we record the actual size for each subtree after insertion/deleltion. Then in the second round we perform the actual update, and rebalance based on the new sizes computed.
}

\hide{
	\cref{batchinsert} inserts a set of points $P$ into a \kdtree{} $T$. It first skips the trivial case that $P=\emptyset$ at \cref{emptyP}. Then if $T$ is a leaf, the algorithm checks whether there are enough space to put $P$ within $T$ (\cref{checkleaf}), otherwise, after the deletion of $T$ (\cref{deleteatleaf}), it rebuilds $T$ using stored points $T.p$ and the batch $P$ (\cref{rebuildleaf}). The points partition starts when $T$ is an interior node. We first retrieve the tree skeleton $\matht$ with height at most $\skheight$ from $T$ recursively (\cref{retriveskeleton}-\cref{retriveskeletonend}). When it encounters a leaf node or the skeleton height reaches to $\skheight$ (\cref{retrieveend}), the retrieve ends with assigning the node an identifier and casting it to a bucket node (\cref{casting}). Once the tree skeleton $\matht$ has been retrieved (\cref{fetchinsert}), we can sieve $P$ into the corresponding bucket (\cref{partitioninsert}) using same technique as \cref{buildtree}. Some subtrees in $\matht$ would become imbalance after insertion, to avoid rebuilding a tree (and its subtrees) multiple times, we only tag those imbalance nodes with the smallest depth in $\matht$ (~\cref{tagimba}).
	Then for each tagged imbalance nodes and all bucket nodes that are not in an imbalance (sub)-tree (\cref{parlaykeyvalue}), we launch a threads in parallel to either rebuild the tree together with the collected points $S$ if $\varmatht$ is an imbalance node (\cref{needsrebuild}), or continue recursion with sieved points $S$ in that bucket node $\varmatht$ (\cref{needscontinue}). After which we replace the node $\varmatht$ with newly returned node $T_i$ (\cref{replaceinsert}). The recursion terminates by returning the root of the skeleton $\matht$ (\cref{endinsert}).
}


\hide{
	\subsection{Batch Deletion}
	\setlength{\algomargin}{0em}
\begin{algorithm}[t]
	\fontsize{8pt}{8.5pt}\selectfont
	\caption{Batch deletion\label{batchdelete}}
	\SetKwFor{parForEach}{parallel-foreach}{do}{endfor}
	\KwIn{Points $P$ with dimension $k$, a $k$d-tree $T$.}
	\KwOut{A $k$d-tree with $P$ deleted.}
	\SetKw{MIN}{min}
	\SetKw{MAX}{max}
	\SetKwProg{MyFunc}{Function}{}{end}
	\SetKwInOut{Note}{Note}
	\SetKwFor{inParallel}{in parallel:}{}{}
	\SetKwInOut{Parameter}{Parameter}
	\Parameter{$\skheight$: the maximum height of a fetched tree skeleton.\\
		$\leafwrap$: the leave wrap of $T$.
	}
	\DontPrintSemicolon
	\vspace{.5em}
	\tcp{Delete points $P$ from a $k$d-tree $T$. Boolean flag $t$ takes value $1$ if none of $T$'s ancestor needs rebuild after deletion and 0 otherwise. }
	\MyFunc{\upshape{\batchdelete$(P,T,t)$}}{
		$n\gets |P|$\tcp*[f]{Size of the input batch}\\
		\If{$n=0$}{\Return $T$}
		\If{$T$ is leaf}{
			$T.p\gets T.p\setminus P$\tcp*[f]{Remove the points from the leaf}\label{removeFromLeave}\\
			\Return $T$	
		}
		$\matht \gets \fetchTreeSkeleton(T,0)$\label{fetchTreeSkeleton}\\
		Sieve points to the corresponding bucket node in $\matht$, array slice $B[i]$ holds all points sieved to bucket node $i$.\label{sieveDelete}\\
		Put a tomb to the (sub)tree in $\matht$ if it becomes imbalance after deletion. If both a node and its ancestor are imbalance, put the tomb to its ancestor only\label{puttomb}\\
		\tcp{Delete the points from subtrees of $T$ first}
		\parForEach{bucket id $i$}{
			$\matht_i\gets$ bucket node with identifier $i$\\
			\eIf(\tcp*[f]{Any subtree containg $\matht_i$ is balance})
			{$\matht_i$ has tomb}
			{$T_i=\batchdelete(B[i],\matht_i,1)$\label{withTomb}}
			(\tcp*[f]{Some subtrees containg $\matht_i$ will be rebuilt})
			{$T_i=\batchdelete(B[i],\matht_i,0)$\label{noTomb}}
			Replace the node $\matht_i$ in $\matht$ with $T_i$ and redirect its parent's pointer to $T_i$ 
		}
		\tcp{Rebuild imbalance nodes then}
		$U\gets$ Set of nodes in $\matht$ that hold a tomb\\
		\parForEach{$\varmatht\in U$\label{beginRebuildDelete}}{
			 $\varmatht'\gets\build(\{q\})$\label{rebuildINDelete}\\
			 Replace node $\varmatht$ in $\matht$ with $\varmatht'$\\
			$\deleteTree(\varmatht)$\label{endRebuildDelete}
		}
		\Return The root of $\matht$ $\label{returnSkDeletion}$
	}
	\end{algorithm}

	As the converse version of batch insertion, the batch deletion \cref{batchdelete} deletes $k$-dimensional points $P$ from a $k$d-tree $T$ in a divide-and-conquer fashion. Note that unlike the batch insertion that rebuilds the (sub-)tree once identifying those imbalance nodes, one can only rebuild the (sub-)tree after removing all necessary points from its leaves. To avoid rebuild $T$ (and its subtrees) multiple times, we use an boolean flag $t$ that takes value $1$ if no (sub-)tree containing $T$ has been rebuilt and 0 otherwise. This is equivalent to say that $t$ equals to 1 if and only if no $T$'s ancestor has been rebuilt.

	The algorithm first removes $P$ from where they are stored if $T$ is a leaf (\cref{removeFromLeave}), otherwise, the partition starts by fetching the tree skeleton $\matht$ (\cref{fetchTreeSkeleton}) and then sieves $P$ to the corresponding bucket node in $\matht$ (\cref{sieveDelete}).
	Every node in $\matht$ receives a tomb if no ancestor of $\matht$ would be rebuilt
	meanwhile itself becomes imbalance after deletion (\cref{puttomb}).
	Now for every bucket node, one can continue deletion using sieved points $B[i]$ together with their imitated stamp $t$ (\cref{withTomb} and \cref{noTomb}). After the recursion ends, it rebuilds all imbalance (sub-)tree in parallel (\cref{beginRebuildDelete}-\cref{endRebuildDelete}) and returns the updated skeleton root $\matht$ (\cref{returnSkDeletion}).
}

\subsection{Theoretical Analysis}\label{sec:update-analysis}

We now show that our conceptually simple batch update algorithms also have good theoretical guarantees.
Since the update algorithms use the construction algorithm as a subroutine, we need to accordingly set up the parameters for both algorithms.
In particular, we select $\os=(6c\log n)/\balpara^2$ for some constant $c>0$ to ensure a low amortized cost in \cref{thm:update}.
Here we assume $\balpara$ is a constant and $\os=\Theta(\log n)$.

\begin{theorem}[Updates]\label{thm:update}
	Using $\os=(6c\log n)/\balpara^2$, the update (insertions or deletions) of a batch of size $m=O(n)$ on a \ourkdtree of size $n$ has $O(\log^2 n)$ span \whp;
	the amortized work and cache complexity per element in the batch is $O(\log^2 n)$ and $O(\log(n/m)+(\log n\log_M n)/B)$ \whp, respectively.
\end{theorem}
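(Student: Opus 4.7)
The plan is to separately bound the \textbf{direct cost} of the update (skeleton sieving and tree traversal, paid by each batch) and the \textbf{rebuild cost} (paid by reconstructing subtrees that became unbalanced), and then combine them. For all three quantities (work, cache, span), I will reduce the bulk of the analysis to results already established for \textsc{Build} in \cref{thm:constr,thm:constr-polylog}, together with a standard weight-balance amortization adapted to the batched setting.

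First, I would analyze the work and cache cost of a single invocation of \batchinsert{}/\batchdelete{} \emph{assuming no rebuild}. Each invocation grabs the skeleton $\matht$ of height $\skheight$ at the current root, calls \textsc{Sieve} from \cref{algo:constr} on the batch, and then recurses into the $2^{\skheight}$ external nodes with the corresponding buckets. By the same per-level analysis as in the proof of \cref{thm:constr}, the sieving plus prefix-sum step costs $O(m_{\text{sub}}+2^\skheight)$ work and $O(m_{\text{sub}}/B+\mathit{polylog})$ cache misses on a subproblem holding $m_{\text{sub}}$ batch points. Since the tree has height $O(\log n)$ \whp{} by \cref{lem:tree-height}, each batch element participates in $O((\log n)/\skheight)$ sieving rounds along its root-to-leaf path, giving $O(\log n)$ \emph{direct} work per element. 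For the cache bound, once the current subproblem receives only a handful of batch points (namely when the subtree size $s$ satisfies $m_{\text{sub}}\le O(1)$), the recursion degenerates into traversing a single path of length $O(\log(n/m))$; summing per-level $O(1/B)$ amortized cost down to that point and then $O(\log(n/m))$ individual lookups yields the $O(\log(n/m)+(\log n\log_M n)/B)$ direct cache bound modulo the rebuild contribution below.

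Second, I would bound the rebuild cost by a weight-balance potential argument. Fix a node $v$ with subtree size $s$. After the most recent rebuild of $v$, the sibling sizes are within $(1/2\pm\epsilon/4)\cdot s$ for $\epsilon=O(\balpara)$ by \cref{lem:sampling} with the chosen $\os=\Theta((\log n)/\balpara^2)$. To break the weight-balance threshold $\balpara$, at least $\Omega(\balpara\cdot s)$ updates must be routed through $v$. When rebuild is finally triggered at $v$ via the call to \build{} on \cref{line:update-build-tree}, \cref{thm:constr} bounds the rebuild by $O(s\log s)$ work and $O((s/B)\log_M s)$ cache misses. Charging this cost uniformly to the $\Omega(\balpara s)$ updates that passed through $v$ gives an amortized charge of $O(\log s)$ work and $O((\log_M s)/B)$ cache misses at $v$, per update that crosses $v$. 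Since any given update is charged at its $O(\log n)$ ancestors, summing yields $O(\log^2 n)$ amortized work and $O((\log n\log_M n)/B)$ amortized cache misses per element, both \whp{} (the \whp{} comes from invoking \cref{lem:sampling} via a union bound across the $O(n)$ nodes ever created). Combining direct and rebuild contributions gives the claimed work and cache bounds.

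Third, for the span I would use the improved construction span from \cref{thm:constr-polylog}. One level of \textsc{InsertToSkeleton} forks into its two children and, independently, the sieving step has span $O(\log^2 n)$. Rebuilds at different nodes occur in parallel branches of the fork-join DAG, so their spans do not accumulate; the deepest rebuild is bounded by $O(\log^2 n)$ \whp{} by \cref{thm:constr-polylog}. The recursive descent adds a further factor proportional to the tree height $O(\log n)$, but because each level contributes only $O(\log n)$ additional span along a single root-to-leaf path once the sieving span is absorbed, a careful accounting still yields total $O(\log^2 n)$ span \whp.

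The hard part will be the amortization in the \emph{batched} setting: the classical scapegoat argument assumes a single-point update, whereas here a batch of size up to $O(n)$ may simultaneously trigger rebuilds at many ancestors on overlapping paths. I would handle this by tying the potential to subtree sizes rather than to individual operations, so that each unit of potential can be released at most once, and by observing that when a rebuild fires at $v$ we can skip all descendant rebuilds (as enforced by \cref{line:update-build-tree}), preventing double-charging. A secondary subtlety is the deletion case, where points in $P\setminus T$ must be filtered out in the first pass so that the ``$\Omega(\balpara s)$ updates per rebuild'' guarantee is maintained; the two-round approach described in \cref{sec:update} takes care of this without changing the asymptotic bounds.
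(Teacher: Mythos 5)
Your proposal is correct and follows essentially the same route as the paper's proof: decompose the cost into the direct traversal/sieving cost plus the rebuild cost, amortize each rebuild of a size-$s$ subtree over the $\Omega(\balpara s)$ updates that must cross that node since its last reconstruction (using \cref{lem:sampling,thm:constr}), sum over the $O(\log n)$ ancestors, use the $O(m\log(n/m))$ node-touching bound for the traversal term of the cache complexity, and invoke \cref{thm:constr-polylog} together with the fact that a rebuild fires at most once per root-to-leaf path for the span. The only cosmetic difference is that you phrase the charging as a potential argument and flag the batched-rebuild subtlety explicitly, which the paper handles with the same observation that descendant rebuilds are suppressed once an ancestor rebuilds.
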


For simplicity, \cref{thm:update} assumes the batch size $m=O(n)$.
If $m=\omega(n)$, we just need to replace the term $n$ by $m+n$ in the bounds for batch insertions (no change needed for batch deletion).

Due to the space limit, we defer the full proof in \ifconference{the supplementary material}\iffullversion{\cref{app:batchUpdate}}.
The overall structure of this analysis is similar to \cref{thm:constr}, with the additional information that traversing $m$ leaves in a binary tree of size $n$ touches $O(m\log (n/m))$ tree nodes~\cite{blelloch2016just}.
\hide{
	\begin{proof}
		The span bound is asserted since the points sieving and the tree rebuilding all have $O(\log n\log_Mn)$ span \whp (\cref{thm:constr-polylog}), and other parts have $\levels = O(\log n)$ span. Meanwhile, note that the tree rebuilding can only be triggered once on any path, thus the bound follows.
		For work bound, once we rebuild a (sub)-tree with size $n'$ using $O(n'\log n')$ work, the new tree contains $(1/2\pm \balpara/4)n'$ points \whp (\cref{lem:sampling}), which needs to insert another $3\balpara n'4$ points to make it imbalanced again and trigger a new rebuilding. The amortized cost per point in this (sub)-tree is thus $O(\log n')$. Since the tree has height $O(\log n)$ \whp, in total $O(\log^2 n)$ amortized work needed per point.
		As for I/O bound, similarly, the amortized cost for each point in rebuilding is $O((1/B)\log_M n')$ \whp (\cref{thm:constr}), overall $(\log n\cdot (1/B)\log_M n')$ \whp per point, as the tree has height $O(\log n)$ \whp. Besides, it incurs $O(\log(n/m))$ amortized I/O per point to traverse the tree to find the leaf, and the I/O bound follows then.
		\ifconference{
			We refer the reviewers to our full paper\cite{xx} for more detail.
		}
		\iffullversion{
			We refer the readers to \cref{app:batchUpdate} for more details.
		}
		\hide{
			We will start with the span bound.
			According to \cref{thm:constr-polylog}, the sieve process and trees rebuilding (rebalancing) all have $O(\log n\log_M n)$ span \whp.
			Note that the tree rebuilding (\cref{line:update-build-tree}) can only be triggered once on any tree path.
			The span for other parts is $O(\log n)$---the \textsc{InsertToSkeleton} function can be recursively call for $\levels = O(\log n)$ levels each with constant cost.
			In total, the span is the same as the construction algorithm, since in the extreme case, the entire tree can be rebuilt.

			Then we show the work bound.
			The cost to traverse the \ourtree and find the corresponding leaves to update is $O(\log n)$ per point \ziyang{\whp?}, proportional to the tree height.
			Once a rebuild is triggered (on \cref{line:update-build-tree}), the cost is $O(n'\log n')$ where $n'$ is the subtree size.
			After that (or the original construction), each subtree will contains $(1/2\pm \sqrt{(12c\log n)/\os}/4)n'=(1/2\pm \balpara/4)n'$ points \whp (\cref{lem:sampling}).
			We need to insert at least another $3\balpara n'/4$ points for this subtree to be sufficiently imbalance that triggers the next rebuilding of this subtree.
			The amortized cost per point in this subtree is hence $O(\log n'/\balpara) = O(\log n')$ on this tree node assuming $\balpara$ is a constant.
			Note that \ourtree has the tree height of $O(\log n)$, so overall amortized work per inserted/deleted point is $O(\log^2 n)$.

			We can analyze the I/O bound similarly.
			We first show the rebuilding cost.
			For a subtree of size $n'$, the cost is $O((n'/B)\log_M n')$ (\cref{thm:constr}).
			The amortized cost per updated point, using the same analysis above, is $O((1/B)\log_M n')$.
			Again since the tree height is $O(\log n)$, the overall amortized work per inserted/deleted point is $O((\log n\log_M n)/B)$.
			Then, we consider the cost to traverse the tree and find the corresponding leaves to update.
			Finding $m$ leaves in a tree of size $n$ will touch $O(m\log(n/m))$ tree nodes~\cite{blelloch2016just}, so the amortized I/O per point is $O(\log(n/m))$.
			Putting both cost together gives the stated I/O bound.
		}
	\end{proof}
}
Again in practice, we use the sieving approach in \cref{algo:constr}, which leads to $O(M^{\epsilon}\log n)$ span and supports sufficient parallelism.

The update cost bound for \ourtree is higher than using the logarithmic method---e.g., the work per point is $O(\log^2 n)$ instead of $O(\log n)$.
However, we note that the bound for \ourtree is not tight.
Unless in the adversarial case, the update cost per point is more likely to be $O(\log n)$ when subtree rebuild is less frequent.
In \cref{sec:exp}, we will experimentally show that the update is faster than the logarithmic method practically.
Meanwhile, since \ourtree only keeps a single tree rather than $O(\log n)$ trees, the query performance on \ourtree is significantly better.

In addition, \ourtree can support stronger balancing criterion for by setting $\balpara=o(1)$.
In this case, the amortized work and cache complexity per point will increase to $O((\log^2 n)/\balpara)$ and $O(\log(n/m)+(\log n\log_M n)/B\balpara)$ \whp, respectively.
For example, we can enable $\log n+O(1)$ tree height by using $\balpara=O(1)/\log n$.
However as mentioned, our experimental results show that using tree height as $O(\log n)$ is good enough to give overall good performance for both updates and queries in practice.

\hide{
	\begin{lemma}
		\ziyang{in prelim?}\\
		Given a perfect balanced (sub-)\kdtree rooted at $T$, it needs at least $\Omega(\log n)$ times update to make $T$ imbalance, where $n$ is the tree size when it loss balance.
	\end{lemma}

	\begin{theorem}
		Given a \kdtree $T$ that was created using batch insertions/deletions, then every batch update $B$ to $T$ costs $O(m\log^2(n+m))$ amortized work and $O()$ span, where $m$ is the batch size and $n$ is the tree size before update.
	\end{theorem}
	\begin{proof}
		If $T=\emptyset$, then batch insertion equivalents to construct a new tree and deletion returns in $O(1)$ time. So in the following context we consider only the case that $T$ is not empty.\\
		We first compute the work. The cost to construct a tree skeleton is a constant since we assume its height $\skheight$ is a constant. It needs $O(\log_\skheight n)=O(\log_2 n)$ rounds to sieve all points to the corresponding leaf nodes, using $O(m)$ expected work in each round. Therefore, in total $O(m\log n)$ work required to finish the points sieving. Let $\varmatht$ denote an imbalance (sub-)tree in $T$ after updates and assume initially $|\varmatht|=n'$ and in total $m'\leq m$ nodes sieved into this (sub-)tree. By \cref{theoremBuild}, it needs $O((n'+m')\log (n'+m'))$ work to rebuild $\varmatht$. These costs we charge to the $m'$ updates within $\varmatht$, each update can only be charged on the skeleton nodes towards the leaf node and at most once for every such node. In this case, each update can be charged at most $O(\log n')$ times, each time charges $O(\log n')$ work. In this case, the amortized work for each batch update is $O(m\log^2 (m+n))$.
	\end{proof}
}

\section{Implementation Details}
\label{sec:impl}



\myparagraph{Avoid the Extra Copies.}
For simplicity, in \cref{algo:constr}, we assume copying the array of points in $P'$ back to $P$ (\cref{line:copyback}) after distributing the points.
In practice, this copy can be avoided by swapping $P$ and $P'$ in each recursive call.
This significantly saves unnecessary memory accesses in the algorithm.

\myparagraph{Parameter Choosing.}
Our theoretical analysis in \cref{sec:constr-analysis} suggests $\levels=\epsilon\log M$ for some constant $\epsilon<1/2$.
In practice, we observed that using $\skheight=4$ to 10 generally gives good performance. We use $\skheight=6$ for \ourtree{s} in our experiments.
We use $\leafwrap=32$ for the leaf warp size, and over sampling rate $\os=32$.
We set the balancing parameter $\alpha=0.3$, and further explain our choice in \cref{sec:balancingParameterRevisted}.

\myparagraph{Reduce the Memory Usage.} A key effort in implementing \ourtree{} is to minimizing the memory usage.
Reducing the memory footprint is crucial in at least two aspects.
First, it allows the \ourtree to handle larger inputs.
Second, a smaller memory footprint generally means better cache utilization, leading to better performance.

There are a few approaches in the design of the \ourtree to reduce memory usage.
The first is the leaf wrapping as mentioned, which creates a flat tree leaf when the subtree size drops below a certain threshold (32 in \ourtree{s}).
We also contract the leaves when all points are duplicates, and we refer the audience to \iffullversion{\cref{app:heavyleaf}}\ifconference{the supplementary material} for details.
Second, we try to keep each interior node as small as possible to fit more tree nodes in the cache.
The only additional information we keep for each tree node is the subtree size, which is needed in our weight-balance scheme and is used in range count queries.
Namely, unlike ParGeo and \cgal, the \ourtree does not store the \emph{bounding box} of each tree node,
which is the smallest box containing all points in this subtree.
This box can be used in queries to prune the subtree: when the query does not overlap with the box, the entire subtree can be skipped.
In \ourtree{}, instead of storing the bounding box,
the query will compute the subspace of each subtree on-the-fly:
the function will pass the subspace of the current tree node to recursive calls at its children,
so the subspaces for the children can be further computed with the splitter.
This is not as tight as the bounding box,
but
in our experiments, we observed that avoiding explicitly storing bounding boxes gives better overall performance for \ourtree{}.

\myparagraph{Queries.} Since the \ourtree{} is a single \kdtree{}, we can use all standard \kdtree{} query algorithms on \ourtree{s}.

In our \knn{} query, we use the standard depth-first search algorithm.
When searching the query point $q$ in a non-leaf subtree $T$, if $q$ is to the left of the splitter of $T$, it will visit the $T.lc$ first, and vice versa.
After the recursion returns, we prune the visit to the other child of $T$ by the distance between $q$ and the splitter in $T$.
If $T$ is a leaf, we traverse all points stored in $T$ and add them to the candidate container.
The range query is to traverse the tree recursively, checking if the subspaces associated with nodes fall within the query box and pruning branches that do not intersect the query region.
The only strong query bounds we know of for the standard \kdtree{} are for orthogonal range queries and range counts.
A range count on $D$ dimensions takes $O(2^{h(D-1)/D})$ work on a \kdtree of height $h$~\cite{agarwal2003cache,blelloch2018geometry}, which is $O(n^{(D-1)/D})$ if the tree height is $\log n+O(1)$.
The bound for a range query has an additive term $k$ where $k$ is the output size.
We can set the parameters of \ourtree{} accordingly as in \cref{lem:tree-height} to achieve this bound in theory, although later in \cref{sec:balancingParameterRevisted} we show that the query performance does not degenerate by a slightly larger tree height.
While no strong bounds are known for \knn{} queries on \kdtree{} on general distributions,
previous work has shown that \kdtree{} is highly practical for such queries, and it is the main use case for \kdtree{s} in the real world.

\hide{
	\myparagraph{Maintaining Candidates in $k$-NN Queries.}
	During \knn{} search, updating the candidates is a write-extensive process that takes a large portion of the search time.
	We use a standard approach as in CGAL, which uses a priority queue based on a max-heap of size $k$.
	When the queue is full, elements are inserted only if it is smaller than the root of the heap.
}

\myparagraph{Parallel Granularity Control.}
As standard parallel granularity control,
for tree construction and batch update, when the input size is smaller than 1024, we will continue the process using the standard sequential algorithm.

\hide{
	During $k$-NN search, updating the candidates is a write-extensive process that takes large portion of time within the search. \zdtree uses a swap-vector for small $k$ and the priority queue in \texttt{C++~STL} for large ones. ParGeo maintains an array with size $2\cdot k$ for candidates and deletes those that are larger than the median when the size reaches the capacity. A better alternative for above methods in general cases is the one used in \cgal, which is called \bdita{bounded (max-)heap}. It implements a priority queue using an array with size $k$. When the queue is full, elements are inserted if and only if it is smaller than the root of the heap. In our implementation, the bounded heap is slighly faster than the priority queue for $k=100$ and about 3$\times$ faster than the double-array used in ParGeo.
	For $k=10$, it keeps competitive with the swap-vector in \zdtree{}.
}

\hide{
	\begin{itemize}
		\item \zdtree uses a swap-vector for small $k$ and the priority queue in \texttt{C++ STL} for large $k$. It is reported in \cite{blelloch2022parallel} that the overhead of swap-vector for $k<40$ was significantly less than the priority queue.
		\item \cgal uses the \bdita{bounded (max)-heap} which simulates the priority queue in \texttt{STL} using a fixed size of array. When the size of heap is not full, inserting element is identical as insertion in the priority queue; otherwise, the element is inserted if and only if it is smaller than the root of the heap.
		\item ParGeo maintains a double-array with size $2 k$. Similarly to the bounded heap, when the array is not full, insertion is performed by appending the element to the end of the array. Once the elements reaches the capacity of array, it picks the median of all candidates, partition points based on the median and discards the elements that are larger than the median.
	\end{itemize}
	We compare above methods using our implementation. The experimental results show that the bounded heap provides the best overall performance. For $k=100$, it is slightly faster than the priority queue and about 3x faster than the double-array used in ParGeo. For $k=10$, it keeps competitive with the swap-vector in \zdtree{}. In this case, we use the bounded heap in our implementation.

	\myparagraph{Parallel flatten in range query}
	During range query, when the query rectangle fully contains the bounding box of a tree node (in our implementation, it is the one passed during recursion), we can flatten the points into the output array \textbf{in parallel} and therefore skip the entire tree. This approach saves the time for range query significantly due to the better parallelism.
	\cgal detect the inclusion as well but perform the flatten in serial. For each tree node, ParGeo stores all points in this (sub)-tree within a separate array and the flatten is simply to copy the array to the output.
	\zdtree does not implement the range query.
}

\hide{
	\section{Implementations}

	We have implemented a highly optimized parallel \kdtree{}, called \bdita{\ourlib}, that supports tree construction, batch updates, $k$-NN, range count, and range query, for arbitrary dimensions and all in parallel. Since there are various approaches to construct a \kdtree and perform quires, we shall discuss in this section the major techniques applied in our implementation that attribute to our high performance.


	\subsection{Reduce I/O Cost and Memory Usage\label{sec:reduceIO}}
	The practical performance of parallel algorithm heavily relies on the I/O cost and memory usage. Fewer I/O operations lead to less memory access, which is the bottleneck in the implementation of many algorithms. Our implementation follows the ideas in \cref{sec:constr} and \cref{sec:update} to achieve I/O efficiency for both tree construction and batch update, i.e., we use tree skeleton to build the tree with $\skheight$ levels at once and eliminate the repetitive data movement within each level, which can also be utilized in point sieving to reorder the input and relocate all points in the same bucket to be contiguous. We refer readers to corresponding sections for more detail.

	Small memory footprint enables higher cache utilization and better coarsening, resulting faster program execution time. To reduce the memory usage in our \ourlib, we let each leaf node wrap 32 points in maximum. Each stored point contains only its dimension and excludes other unnecessary attribute such as ID. Additionally, we also remove the bounding box from the tree but instead passing it as an argument during the query, which we shall discuss more in \cref{sec:boundingBox}.

	\subsection{Tree Construction}
	\myparagraph{Avoid the extra copy}
	The implementation for tree construction mostly follows the routine illustrated in \cref{algo:constr}, with the exception that after we finish the point distribution (\cref{line:beginDistribution}-\cref{line:endDistribution}), instead of copying the local array $P'$ back to the input $P$ (\cref{line:copyback}), we swap the pointer to $P'$ and $P$ directly and use $P'$ as the input for next recursion. Further optimization is to create and pass $P'$ as a twin array for $P$ from beginning of the algorithm.
	This requires extra $\Theta(n)$ auxiliary space for $P'$ but save the write cost in copy and memory allocation in each recursion, which brings about 10\% improvement for total build time.

	\myparagraph{Tree skeleton}
	Recall that we use sample to build the tree skeleton during the tree construction (\cref{algo:constr}, \cref{sampleBuild}) and the sample size is set to be $2^\skheight\cdot \os$. There is trade-off between the sample size (equivalent to the skeleton height) and tree quality: a larger sample leads to a more balanced tree, while the cost for sampling and building the skeleton increases as well. We set skeleton height as $\skheight = 6$ based on the empirical experience. It works the best for data less than 100M in our machine, and for larger scale data, increasing $\skheight$ only leads to minor improvement. For better memory access, the tree skeleton is stored in an array using the binary heap layout.

	\myparagraph{Handling of Duplicates}
	Many current implementations of \kdtree do not well-handle the duplicated points in tree construction. For example, if all input points are same, \cgal would separate duplicated points one at a time. It leads to a very imbalanced tree with much larger tree height, where the performance for query is degenerated and the stack memory is consumed quickly for large scale data.
	For \zdtree, if all points are on one side during split, it switches to the next bit; if no bits are available, it stores all points into one leaf node.
	ParGeo includes no handling for duplicated points.  In our implementation, we address this problem by introducing the \bdita{heavy leaf} and detecting the duplicates on-the-fly. The heavy leaf is a leaf node where the stored points are identical. Heavy leaf stores only one copy of the duplicated points, which reduces the space usage and can be easily maintained for batch update.
	Specifically, after we obtain the sequence slice $R$ (\cref{algo:constr}, \cref{line:obtainSlices}), we check if all points are sieved into one slice according to the slice size, if so, we double-check whether the points can be split using serial partition. If the serial partition still fails, we conform that the points are duplicated, and then they are packed into one dummy leaf directly.

	\subsection{Query}
	\myparagraph{Handling of Bounding Box}
	Given set of points, bounding box is a rectangle that contains all points. Most query operations on \kdtree relies on bounding box to prune the search space. Many implementations, such as \zdtree, ParGeo and \cgal, store the bounding box within each tree node for easy access. However, as we mentioned earlier, this would add extra space usage to the whole tree structure, which is not negligible especially for the high-dimensional data.
	Actually, as a space-partition data structure, each node in \kdtree already represents a part of area in space that is divided by the splitters from the root to the node.
	Therefore, we can save the memory by removing the bounding box from tree nodes, but pass them as arguments during the tree construction and query.
	The new bounding boxes are computed dynamically by splitting the current bounding box using the splitter from the tree skeleton or the internal nodes. The newly constructed bounding box may not be tight, but it ensures the correctness. In fact, our experimental results indicate that passing bounding boxes from top preserves the good tree quality meanwhile incurs negligible impact on the $k$-NN query. However, for range query, especially the high-dimensional case, a false positive occurs if the loose bounding box of a tree node intersects with the query rectangle, while there is no points in this (sub)-tree are in the query rectangle.

	\myparagraph{Storing candidates in $k$-NN}
	There are various methods to store candidates during the $k$-NN search, including but not limited to:
	\begin{itemize}
		\item \zdtree uses a swap-vector for small $k$ and the priority queue in \texttt{C++ STL} for large $k$. It is reported in \cite{blelloch2022parallel} that the overhead of swap-vector for $k<40$ was significantly less than the priority queue.
		\item \cgal uses the \bdita{bounded (max)-heap} which simulates the priority queue in \texttt{STL} using a fixed size of array. When the size of heap is not full, inserting element is identical as insertion in the priority queue; otherwise, the element is inserted if and only if it is smaller than the root of the heap.
		\item ParGeo maintains a double-array with size $2\cdot k$. Similarly to the bounded heap, when the array is not full, insertion is performed by appending the element to the end of the array. Once the elements reaches the capacity of array, it picks the median of all candidates, partition points based on the median and discards the elements that are larger than the median.
	\end{itemize}

	We compare above methods using our implementation. The experimental results show that the bounded heap provides the best overall performance. For $k=100$, it is slightly faster than the priority queue and about 3x faster than the double-array used in ParGeo. For $k=10$, it keeps competitive with the swap-vector in \zdtree{}. In this case, we use the bounded heap in our implementation.

	\myparagraph{Parallel flatten in range query}
	During range query, when the query rectangle fully contains the bounding box of a tree node (in our implementation, it is the one passed during recursion), we can flatten the points into the output array \textbf{in parallel} and therefore skip the entire tree. This approach saves the time for range query significantly due to the better parallelism.
	\cgal detect the inclusion as well but perform the flatten in serial. For each tree node, ParGeo stores all points in this (sub)-tree within a separate array and the flatten is simply to copy the array to the output.
	\zdtree didn't implement the range query.

}

\begin{table*}[htbp]
	\small
	\centering
	\setlength\tabcolsep{4pt} 
	\renewcommand{\arraystretch}{1.0} 

	\begin{tabular}{c@{}c|cccc|cccc|cccc|cccc|cccc@{}}
		\toprule
		\multirow{2}[2]{*}{Bench.}                                                                      &          & \multicolumn{4}{c|}{Construction} & \multicolumn{4}{c|}{Batch Insertion (1\%)} & \multicolumn{4}{c|}{Batch Deletion (1\%)} & \multicolumn{4}{c|}{10-NN (1\%)} & \multicolumn{4}{c}{Range Report (10K)}                                                                                                                                                                                                                                                                                              \\
		                                                                                                & $D$      & 2                                 & 3                                          & 5                                         & 9                                & 2                                      & 3                & 5                & 9                & 2                & 3                & 5                & 9                & 2                & 3                & 5                & 9                & 2                & 3                & 5                & 9                \\
		\midrule
		\multicolumn{1}{c}{\multirow{4}[2]{*}{\begin{tabular}[c]{@{}c@{}}Uniform\\ 1000M\end{tabular}}} & Ours     & \underline{3.15}                  & \underline{3.65}                           & \underline{5.67}                          & \underline{9.66}                 & \underline{.104}                       & \underline{.107} & \underline{.123} & \underline{.152} & \underline{.121} & \underline{.134} & \underline{.171} & \underline{.232} & \underline{.381} & \underline{.822} & \underline{4.58} & \underline{108}  & \underline{.391} & \underline{.706} & \underline{2.31} & \underline{16.2} \\
		                                                                                                & Log-tree & 37.9                              & 45.4                                       & 58.0                                      & 92.7                             & 2.16                                   & 2.66             & 3.67             & 6.19             & .396             & .485             & 1.94             & 2.39             & 2.96             & 4.48             & 20.2             & 879              & 2.62             & 4.14             & 8.94             & 31.6             \\
		                                                                                                & BHL-tree & 31.7                              & 40.5                                       & 58.4                                      & 104                              & 31.4                                   & 40.3             & 57.1             & 103              & 30.9             & 39.3             & 68.7             & 114              & .487             & 1.02             & 7.38             & 448              & 2.06             & 2.94             & 6.53             & 23.2             \\
		                                                                                                & CGAL     & 1147                              & 1079                                       & 1217                                      & 1412                             & 1660                                   & 1815             & 1863             & 2145             & 41.2             & 41.3             & 45.0             & 40.2             & 1.04             & 2.30             & 12.5             & 189              & 311              & 282              & 249              & 184              \\
		\midrule
		\multicolumn{1}{c}{\multirow{4}[2]{*}{\begin{tabular}[c]{@{}c@{}}Varden\\ 1000M\end{tabular}}}  & Ours     & \underline{3.66}                  & \underline{4.78}                           & \underline{6.27}                          & \underline{11.2}                 & \underline{.055}                       & \underline{.107} & \underline{.157} & \underline{.350} & \underline{.049} & \underline{.112} & \underline{.127} & \underline{.237} & \underline{.172} & \underline{.210} & .336             & .433             & \underline{.382} & \underline{.745} & \underline{2.24} & 13.1             \\
		                                                                                                & Log-tree & 34.2                              & 41.8                                       & 57.8                                      & 92.6                             & 2.01                                   & 2.60             & 3.72             & 6.07             & 1.06             & 1.14             & 1.92             & 2.30             & 2.05             & 2.29             & 23.3             & 2225             & 2.63             & 4.25             & 7.95             & 14.1             \\
		                                                                                                & BHL-tree & 30.2                              & 39.2                                       & 58.7                                      & 104                              & 29.4                                   & 39.1             & 57.3             & 102              & 29.0             & 38.4             & 67.0             & 123              & .242             & .324             & .456             & .535             & 1.95             & 3.03             & 5.72             & \underline{9.96} \\
		                                                                                                & CGAL     & 429                               & 390                                        & 372                                       & 438                              & 849                                    & 700              & 582              & 599              & 13.0             & 9.53             & 23.1             & 3.90             & .511             & .217             & \underline{.318} & \underline{.392} & 296              & 283              & 253              & 278              \\
		\bottomrule
	\end{tabular}%

	\vspace{.2em}
	\caption{\textbf{Running time (in seconds) for \ourlib and other baselines. Lower is better.}
		$D$: dimensions. 
		Baselines are introduced in \cref{sec:exp}. The fastest runtime for each benchmark is underlined.
		Batch insertion is on 10M points from same distribution of the points in the tree, and batch deletion removes 10M points from the tree.
		``10-NN'': 10-nearest neighbor queries on 1\% (10M) of the points in the tree.
		``Range Report'': 10K orthogonal rectangle report queries with output size between $10^4$--$10^6$. 
		For queries, we run all of them in parallel and each query itself is run sequentially.
		\vspace{-0.5em}
	}
	\label{table:summary}
\end{table*}

\hide{
	\begin{table*}[htbp]
		\small
		\centering
		\setlength\tabcolsep{0.75pt}
		\renewcommand{\arraystretch}{0.9}

		\begin{tabular}{cc|ccccc|ccccc|cccc|cccc}
			\toprule
			                                                                                                          &      & \multicolumn{5}{c|}{Uniform-100M} & \multicolumn{5}{c|}{Varden-100M} & \multicolumn{4}{c|}{Uniform-1000M} & \multicolumn{4}{c}{Varden-1000M}                                                                                                                                                                                                                       \\
			Op.                                                                                                       & Dims & Ours                              & \zdtree                          & \logtree                           & \bhltree                         & \cgal & Ours             & \zdtree          & \logtree & \bhltree         & \cgal            & Ours             & \zdtree & \logtree & \bhltree & Ours             & \zdtree          & \logtree & \bhltree         \\
			\midrule
			\multirow{4}[2]{*}{\begin{sideways}\footnotesize Build\end{sideways}}                                     & 2    & \underline{.246}                  & .523                             & 3.92                               & 3.24                             & 98.8  & \underline{.255} & .501             & 3.71     & 2.96             & 40.6             & \underline{3.20} & 5.59    & 36.2     & 31.8     & \underline{3.65} & 5.31             & 34.4     & 29.8             \\
			                                                                                                          & 3    & \underline{.293}                  & .555                             & 4.44                               & 3.70                             & 94.4  & \underline{.316} & .530             & 4.42     & 3.66             & 33.6             & \underline{3.75} & 6.06    & 43.7     & 39.8     & \underline{4.77} & 6.89             & 43.1     & 38.8             \\
			                                                                                                          & 5    & \underline{.432}                  & n.a.                             & 5.90                               & 5.10                             & 103   & \underline{.480} & n.a.             & 5.89     & 5.11             & 38.3             & \underline{5.69} & n.a.    & 59.8     & 58.3     & \underline{6.46} & n.a.             & 58.7     & 58.4             \\
			                                                                                                          & 9    & \underline{.720}                  & n.a.                             & 9.19                               & 8.66                             & 120   & \underline{.821} & n.a.             & 8.99     & 8.64             & 42.8             & \underline{9.77} & n.a.    & 95.2     & 103      & \underline{11.2} & n.a.             & 92.2     & 103              \\
			\midrule
			\multicolumn{1}{c}{\multirow{4}[2]{*}{\begin{sideways}\begin{tabular}[c]{@{}c@{}}\footnotesize Batch insert\\\footnotesize(10\%)\end{tabular}\newline{}\end{sideways}}} & 2    & \underline{.044}                  & .145                             & 1.09                               & 3.30                             & 119   & \underline{.050} & .059             & .980     & 2.99             & 89.2             & \underline{.480} & 1.65    & 30.7     & 39.3     & \underline{.462} & 1.08             & 27.8     & 36.8             \\
			                                                                                                          & 3    & \underline{.050}                  & .148                             & 1.29                               & 3.81                             & 123   & .094             & \underline{.066} & 1.26     & 3.77             & 62.7             & \underline{.527} & 1.70    & 39.2     & 49.1     & \underline{.566} & .772             & 37.1     & 47.6             \\
			                                                                                                          & 5    & \underline{.063}                  & n.a.                             & 1.79                               & 5.57                             & 145   & \underline{.136} & n.a.             & 1.76     & 5.51             & 64.3             & \underline{.664} & n.a.    & 55.0     & 69.4     & \underline{1.46} & n.a.             & 54.9     & 69.3             \\
			                                                                                                          & 9    & \underline{.097}                  & n.a.                             & 2.93                               & 9.62                             & 170   & \underline{.221} & n.a.             & 2.84     & 9.58             & 68.0             & \underline{.985} & n.a.    & 161      & 122      & \underline{2.21} & n.a.             & 174      & 121              \\
			\midrule
			\multicolumn{1}{c}{\multirow{4}[2]{*}{\begin{sideways}\begin{tabular}[c]{@{}c@{}}\footnotesize Batch delete\\\footnotesize (10\%)\end{tabular}\end{sideways}}}           & 2    & \underline{.050}                  & .128                             & .240                               & 2.83                             & 28.5  & .076             & \underline{.056} & .310     & 2.66             & 5.88             & \underline{.530} & 1.50    & 1.50     & 32.5     & \underline{.660} & .992             & 1.96     & 30.0             \\
			                                                                                                          & 3    & \underline{.058}                  & .149                             & .370                               & 3.47                             & 35.4  & .097             & \underline{.042} & .410     & 3.42             & 4.00             & \underline{.617} & 1.72    & 4.20     & 41.8     & 1.05             & \underline{.567} & 4.92     & 41.7             \\
			                                                                                                          & 5    & \underline{.081}                  & n.a.                             & .510                               & 4.86                             & 29.6  & \underline{.143} & n.a.             & .550     & 4.87             & 3.30             & \underline{.870} & n.a.    & 2.50     & 74.3     & \underline{1.59} & n.a.             & 2.53     & 75.7             \\
			                                                                                                          & 9    & \underline{.130}                  & n.a.                             & .760                               & 8.45                             & 30.2  & \underline{.230} & n.a.             & .850     & 8.51             & 3.51             & \underline{1.27} & n.a.    & 3.40     & 134      & \underline{2.52} & n.a.             & 3.17     & 135              \\
			\midrule
			\multicolumn{1}{c}{\multirow{4}[2]{*}{\begin{sideways}\begin{tabular}[c]{@{}c@{}}\footnotesize 10-NN\\\footnotesize (all)\end{tabular}\end{sideways}}}           & 2    & \underline{1.05}                  & 1.68                             & 24.2                               & 3.93                             & 7.99  & \underline{1.04} & 1.62             & 17.9     & 2.25             & 3.68             & \underline{11.3} & 18.9    & s.f.     & s.f.     & \underline{11.4} & 18.4             & s.f.     & s.f.             \\
			                                                                                                          & 3    & \underline{2.12}                  & 3.09                             & 33.2                               & 8.33                             & 16.9  & 1.98             & 3.67             & 20.3     & 3.32             & \underline{1.95} & \underline{22.4} & 33.0    & s.f.     & s.f.     & \underline{22.0} & 30.6             & s.f.     & s.f.             \\
			                                                                                                          & 5    & \underline{10.5}                  & n.a.                             & t.o.                               & 59.6                             & 90.6  & \underline{3.31} & n.a.             & t.o.     & 5.68             & 3.89             & \underline{114}  & n.a.    & s.f.     & s.f.     & \underline{35.7} & n.a.             & s.f.     & s.f.             \\
			                                                                                                          & 9    & \underline{631}                   & n.a.                             & t.o.                               & 3040                             & 1181  & \underline{5.87} & n.a.             & t.o.     & 80.3             & 33.4             & \underline{8433} & n.a.    & s.f.     & s.f.     & \underline{56.5} & n.a.             & s.f.     & s.f.             \\
			\midrule
			\multicolumn{1}{c}{\multirow{4}[2]{*}{\begin{sideways}\begin{tabular}[c]{@{}c@{}}\footnotesize Range query\\\footnotesize(10K,100M]\end{tabular}\newline{}\end{sideways}}} & 2    & \underline{.145}                  & n.a.                             & .358                               & .220                             & 127   & \underline{.143} & n.a.             & .359     & .220             & 136              & \underline{.389} & n.a.    & 2.16     & 1.70     & \underline{.381} & n.a.             & 2.03     & 1.60             \\
			                                                                                                          & 3    & \underline{.274}                  & n.a.                             & .665                               & .414                             & 108   & \underline{.257} & n.a.             & .599     & .407             & 131              & \underline{.704} & n.a.    & 3.41     & 2.50     & \underline{.715} & n.a.             & 3.44     & 2.42             \\
			                                                                                                          & 5    & \underline{.841}                  & n.a.                             & 1.64                               & 1.08                             & 104   & .850             & n.a.             & 1.14     & \underline{.749} & 113              & \underline{2.17} & n.a.    & 7.45     & 5.08     & \underline{2.29} & n.a.             & 6.11     & 4.36             \\
			                                                                                                          & 9    & \underline{4.95}                  & n.a.                             & 7.63                               & 5.57                             & 79.0  & 2.95             & n.a.             & 2.01     & \underline{1.33} & 115              & \underline{12.8} & n.a.    & 29.3     & 20.9     & 12.8             & n.a.             & 10.8     & \underline{7.73} \\
			\bottomrule
		\end{tabular}%


		\vspace{.2em}
		\caption{\textbf{Running time (in seconds) for \ourlib and other baselines. Lower is better.}
			Baselines are introduced in \cref{sec:exp}. The fastest runtime for each benchmark is underlined.
			``n.a.'': not available. ``s.f.'': segmentation fault. ``t.o.'': time out (more than 3 hours).\vspace{-1em}
		}
		\label{table:summary}
	\end{table*}
}

\section{Experiments\label{sec:exp}}
We conducted extensive experiments to demonstrate the efficiency of the \ourlib{}.
For both synthetic (\cref{sec:exp:operation}) and real-world (\cref{sec:exp:real-world}) datasets, the \ourlib{} shows better performance than other baselines in construction, batch updates, and various queries.

We also provide in-depth studies to further understand the performance gains of \ourtree{s}.
\cref{sec:exp:cacheMemory} measures the number of cache misses in different algorithms. Our results show that the theoretical guarantee for \ourtree{s} (\cref{thm:constr,thm:update}) indeed allows for better cache-efficiency and leads to good performance in practice.
\cref{sec:tech:analysis} studies the two techniques in our tree construction algorithm, sampling and constructing multiple levels, and show that they lead to roughly 2$\times$ and 4$\times$ performance gains, respectively.
Since \ourtree{s} are weight-balanced, in \cref{sec:balancingParameterRevisted} we show how the balancing criterion affects the update and query performance, and explain how we choose the parameters in the \ourtree{}.
Finally, we show that the \ourtree{} has good parallel scalability in \cref{sec:exp:scalability}.

\myparagraph{Setup.}
We use a machine with 96 cores (192 hyperthreads) with four-way Intel Xeon Gold 6252 CPUs and 1.5 TB RAM.
Our implementation is in \texttt{C++} using ParlayLib~\cite{blelloch2020parlaylib} to support fork-join parallelism.
The reported numbers are the average of three runs after a warm-up run.
This approach ensures that all timed runs begin with a consistent cache configuration, resulting in more stable performance.
We use $\lambda=6$ as explained in \cref{sec:impl}.
We set $\alpha=0.3$, and discuss the choice of this parameter in \cref{sec:balancingParameterRevisted}.

\myparagraph{Baselines.}
We compare the \ourtree{} with three existing implementations, described as follows.
\begin{itemize}
	\item \textbf{\bhltree}~\cite{wang2022pargeo}. The plain implementation of a parallel \kdtree{} from \pargeo{} uses a single tree structure with the binary heap layout.
	      \bhltree{} uses the \emph{plain parallel \kdtree{}} construction algorithm described in \cref{sec:constr}, which is work-efficient but not cache-optimized. Its batch update simply rebuilds the whole tree.
	\item \textbf{\logtree}~\cite{wang2022pargeo}. The \kdtree{} implementation based on the logarithmic method from \pargeo{}. The \logtree{} keeps $O(\log n)$
	      static cache-oblivious \kdtree{s} (using the vEB layout)
	      with exponentially increasing sizes.
	      A batch update is performed by merging and rebuilding a subset of the trees. 
	      Queries have to be performed on all $O(\log n)$ trees and the results need to be combined.
	\item \textbf{\cgal}~\cite{cgal51}. The \kdtree{} in the computational geometry library \cgal{}. \cgal{} supports parallelism using the threading building blocks (TBB)~\cite{TBB}.
	      During construction, \cgal{} partitions the points sequentially, then builds two subtrees in parallel.
	      A batch insertion rebuilds the whole tree with the inserted points; a batch deletion removes the points one by one.  \hide{\yihan{mention this later.} There can be scalability issues for \cgal when using more than 36 threads, as reported in~\cite{blelloch2022parallel} and in our \cref{fig:scalability}. }
\end{itemize}

\myparagraph{Datasets.\label{sec:exp:datasets}}
We test both synthetic and real-world datasets.
We introduce the real-world datasets in \cref{sec:exp:real-world}.
For synthetic datasets, we use 64-bit integer coordinates with two distributions: \varden{} and \uniform{}.
\varden{} is a skewed distribution from~\cite{gan2017hardness}.
It generates points by a random walk with a low probability of restarting from a random place.
Therefore, it contains some very dense subareas that can be far from each other, which can be used as pressure tests for the quality of \kdtree{s} as well as the performance for frequent rebalancing.
\uniform{} draws points within a box uniformly at random.
For simplicity, we shorthand each dataset by the dimension, distribution and size.
For instance, ``3D-V-1000M'' stands for 1000 million points in 3 dimensions from \varden{}.




\subsection{Operations on Synthetic Datasets}\label{sec:exp:operation}
\myparagraph{Overall Performance.}
We summarize the performance for the \ourtree{} and other baselines in \cref{table:summary} with tree size of $10^9$, in dimensions $D\in\{2,3,5,9\}$.
\ifconference{We also include an experiment on a synthetic dataset with 12 dimensions for all baselines in the full paper.}\iffullversion{We also include an experiment on a synthetic dataset with 12 dimensions for all baselines, as detailed in \cref{app:synthetic}.}
Each batch for insertion or deletion contains $10^7$ (1\% of the tree size) points from the same distribution.
We test two query types: 1) $10^7$ of 10-NN queries, 2)
$10^4$ range report queries (output sizes $10^4$--$10^6$).
Queries are performed directly after construction.
Different queries run in parallel, and each query runs in serial.
No other baselines support range count queries, so we test this query on \ourtree{} separately in \cref{fig:rquery}.

For construction, the \ourtree{} is the fastest in all tests, which is
8.26--12.5$\times$ faster than \logtree{s}, 8.20--11.1$\times$ faster than \bhltree{s}, and 39.1--363$\times$ faster than \cgal{}.
The high performance is mainly from good \emph{cache-efficiency} and \emph{scalability}, which will be studied in more details in \cref{sec:exp:scalability,sec:exp:cacheMemory}, respectively.
\cgal{} has a known scalability issue~\cite{blelloch2022parallel} (also see the scalability curve in \cref{fig:scalability}), making it much slower than other implementations in construction.

\hide{
	It is worth noting that \zdtree{} is a quad/octree based on the space-filling curve,
	which handles multidimensional points as one-dimensional points.
	Therefore, the computation is simpler than the \kdtree-based structures in construction and update,
	but it cannot efficiently handle dimensions $D>3$.
	Indeed, it is slightly faster than \ourtree{} in some batch update operations in low dimensions.
	Interestingly, \ourtree{} is still faster than \zdtree{} in all construction tests and most batch updates. We believe
	this is also due to the I/O optimization in \ourtree{s}.
}

The \ourtree{} also outperforms all baselines in batch updates.
\bhltree{s} and \cgal{} always rebuild the entire tree after updates,
so \ourtree{s} are orders of magnitude faster than them, especially on small batches.
Compared to the fastest baseline \logtree{}, \ourtree{s} are 17.4--40.7$\times$ faster in insertions and 3.27--21.7$\times$ faster in deletions,
mainly due to two reasons:
1) \ourlib{s} sieve the updated points to the subtrees in a cache-efficient way,
and 2) 
both \ourlib{s} and \logtree{s} may reconstruct some (sub)trees in batch updates, and \ourlib{s} are faster in tree construction as discussed above.

For both \knn{} and range queries, the \ourtree{} is the fastest except for three cases, all in high-dimensional queries.
It is within 1.1$\times$ slower than \cgal{} in two cases,
and 1.32$\times$ slower than the \bhltree{} in one case.
As mentioned in \cref{sec:impl}, \ourtree{s} do not store the bounding boxes to optimize memory usage,
but compute the subspaces for each subtree on-the-fly in queries.
This approach trades off (slower) query performance for (faster) construction and updates,
which can be more pronounced in higher dimensions.
Even so, \ourtree{} is still the fastest in queries for 13 out of 16 instances.
Therefore, we choose not to maintain the bounding boxes in tree nodes
to achieve better performance for both updates and queries.

Another interesting finding is that almost all \kdtree{s} perform better on \varden{} than \uniform{} for \knn{} queries.
Since \varden{} datasets contain dense subareas, the neighbors are usually in these regions, resulting in more effective pruning than the uniform datasets.

In the following, we present an in-depth performance study for updates and queries.
We use synthetic datasets with size $10^9$ points in 3D as the benchmark for the rest of this section.

\myparagraph{Batch Updates.}
To further understand the performance of batch updates, we vary the batch sizes from $10^5$ to $10^9$, and show the results in
\cref{fig:batchUdpate}.
We omit smaller batch sizes because they can be completed quickly anyway and do not have high demand for parallelism.
We first construct a tree with $10^9$ points. Then a batch insertion inserts a batch from the same distribution into the tree;
the batch deletion removes a batch of points from the tree.

\ourlib{s} have the best performance for all instances on all distributions.
Both the \bhltree{} and \cgal{} fully rebuild the tree on insertions, showing a flat curve of running time with varying batch sizes, which are much slower than the \ourtree{} based on rebalancing.
A batch deletion in \cgal{} removes the points one by one, which performs well for small batches, but is very inefficient for large batches.
The \logtree{}'s performance sits in the middle.
It avoids fully rebuilding the tree by merging a subset of the trees for batch insertions.
One may notice that there are several jumps for the \logtree{} in batch insertions.
This is because when the batch size reaches certain threshold, a reconstruction for a large tree may be triggered, causing significant more time.
\ourlib{s} have the most stable and efficient performance across tests.

\begin{figure}[t]
	\includegraphics[width=0.48\textwidth]{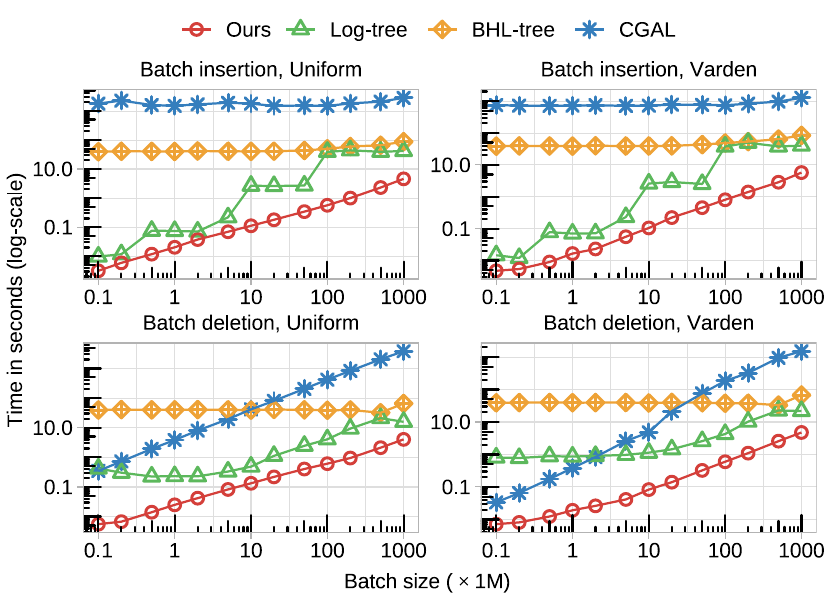}
	\caption{\textbf{Time required for batch update on points from \varden and \uniform on a tree with 1000M points in 3 dimensions. Lower is better.} The batch size is the number of points ($\times$ 1M) in the batch. The time is measured in seconds. Both axes are in log-scale. }
	\label{fig:batchUdpate}
\end{figure}

\myparagraph{$k$-NN Queries.}
We now study $k$-NN queries on \uniform and \varden with $k\in\{1, 10, 100\}$.
We consider $10^9$ input points in 3D and call \knn{} queries on the first $10^7$ input points in parallel.
Results are presented in \cref{fig:knn}.
\ifconference{We also measure out-of-distribution \knn{} queries in the full paper.}\iffullversion{We also measure the out-of-distribution \knn{} queries in \cref{app:ood}.}
The \ourtree{} is always among the fastest.
The performance of the \bhltree{} and \cgal{} is similar since they also keep a single tree.
The \ourtree{} is slightly faster due to not storing bounding boxes in the tree nodes (see \cref{sec:impl})---it saves the memory footprint at the cost of less efficient pruning.
Overall it gives some small advantages on $k$-NN query performance in low dimensions.
\logtree{s} have significantly worse performance for $k$-NN queries---5.85--11.7$\times$ slower on \uniform and 12.0--21.9$\times$ slower on \varden{} than \ourtree{s}.
This is because a query needs to search in all the $O(\log n)$ trees and merge the results.

\hide{
	Most other baselines are also competitive within \edit{1.33--21.9$\times$.}
	\edit{
		The high space usage in the \logtree{} and the cost of querying on all $O(\log n)$ \kdtree{s} significantly slows down its performance, which is 5.85--21.9$\times$ slower than the \ourlib{}.
		The difference between the \ourlib{} and other baselines is relatively small; this is reasonable as both the \bhltree{} and \cgal{} are standard \kdtree{s} as the \ourlib{}.
		We believe the low memory usage of the \ourlib{}, such as excluding the bounding box (see \cref{sec:impl}), differs it from others:
		reducing the memory footprint of each tree node allows more nodes to fit in the cache simultaneously, even though the absence of the bounding box reduces the pruning efficiency to some extent.
	}}

\begin{figure}[t]
	\includegraphics[width=0.48\textwidth]{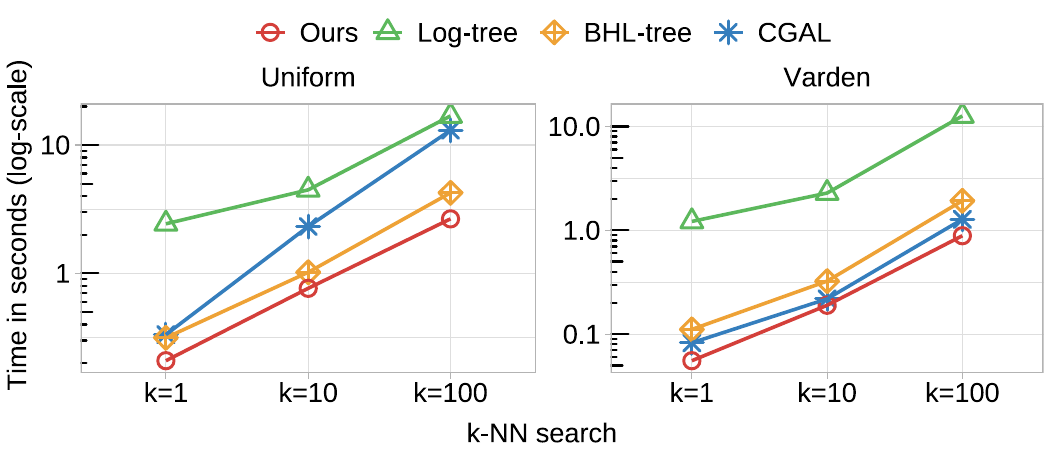}
	\caption{\textbf{Running time (in seconds) of $k$-NN queries for $k\in\{1,10,100\}$. Lower is better.}
		The dataset contains 1000M points in 3 dimensions. The test contains $k$-NN queries from $10^7$ points in the input.
		Plots are in log-log scale.
	}
	\label{fig:knn}
\end{figure}

\myparagraph{Range Queries.\label{sec:expRangeQuery}}
In \cref{table:summary}, we test range reports with relatively large output sizes
$10^4$--$10^6$ to make the queries more adversarial.
Since the performance of range report is proportional to the output size,
in this section, we conduct additional tests on range-count and range-report queries
with a variety of output sizes.
Note that although small query sizes are more frequently encountered in practice, range queries with large output sizes are also prevalent in various applications, including dynamic programming~\cite{gu2023parallel}, etc.
The \ourtree{} is the only one that supports range count.
We run all queries sequentially to measure the query time w.r.t. the output size in \cref{fig:rquery}.
The \ourtree{} (red circles in \cref{fig:rquery}) generally have the best performance across a wide range of output sizes from $1$ to $10^6$.
The \bhltree{} is competitive.
\logtree{} is slower than \ourtree{} and \bhltree{} since it has to query $O(\log n)$ trees.
Note that when a subtree is fully contained in the query box,
one can output all points in the subtree in parallel.
This has been incorporated into all baselines except for \cgal{},
which makes \cgal{} particularly slow with large output sizes.
When the query only requires the count of points in the range, the range count function on \ourtree{s} (blue rectangles in \cref{fig:rquery}) can be much faster than reporting all the points, especially with large output sizes.
This indicates the necessity of including range-count in the interface of a \kdtree{}.
\hide{
	\edit{
		We generate 100 queries for each type and answer the query one-by-one.
		Each query searches the tree in serial.
	}
	We also include the results for our range count queries for better comparison.
	Results are in \cref{fig:rquery}. Overall, the \ourtree{} has the best performance.
	\edit{
		Other baselines are competitive.
		It is worthy mention that range count queries are generally faster than range report queries, especially for those with large output sizes, due to the saved work of reporting points.
		This indicates the necessity of directly providing such an interface in the library rather than using the size of the range report as an alternative.
	}}

\begin{figure}[t]
	\includegraphics[width=0.48\textwidth]{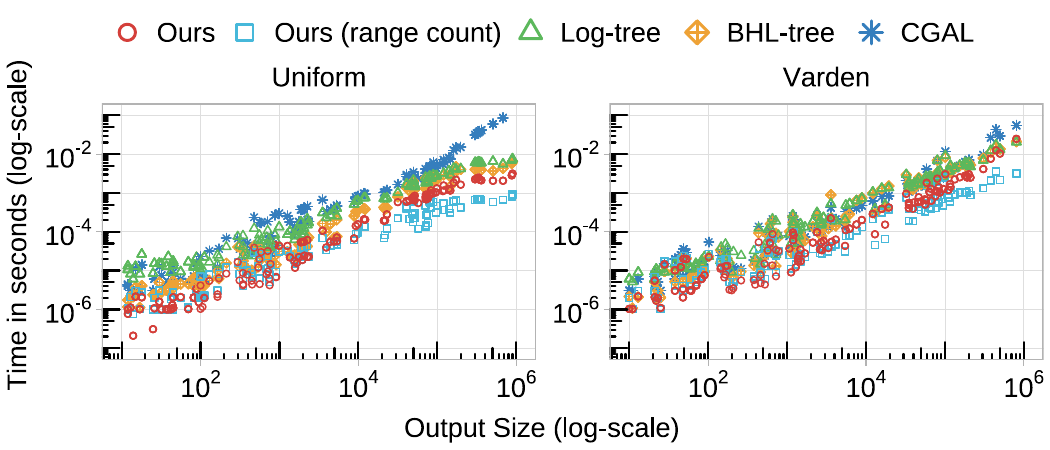}
	\caption{\textbf{Running time (in seconds) of range queries w.r.t.\ output sizes. Lower is better.}
		The dataset contains 1000M points in 3 dimensions.
		Plots are in log-log scale.}
	\label{fig:rquery}
\end{figure}

\hide{ \myparagraph{Range Count.\label{sec:exp:range-count}}
	A range count query is similar to a range query but only reports the number of points in the queried range.
	Range count is a crucial subroutine in many applications~\cite{schubert2017dbscan,agarwal1999geometric,wu1998range,patwary2012new,zhang2016privtree}.
	However, no baseline provides a range count interface, and users have to call a regular range query and report the output size.
	\ourtree{} answers range count queries efficiently by using the subtree sizes stored in each tree node:
	when a subtree is totally in the query range, we skip the subtree and add its size to the return value.
	\cref{table:range-count} presents the results.
	Compared with the running time on range queries, a range count query is faster than
	reporting all elements in the range.
	For example, for 3D-U-100M and 3D-V-100M, a range count query is up to 3.0$\times$ faster and up to 3.8$\times$ faster than a range query, respectively.
	\hide{
		Compared with range queries,
		range count queries can be answered much faster (about 14$\times$ faster per query).
	}

	\begin{table}[t]
\vspace{.2em}
	\centering

	\small
	\setlength\tabcolsep{2.5pt}

	\begin{tabular}{c|ccc|ccc}
		\toprule
		              & \multicolumn{3}{c|}{\uniform} & \multicolumn{3}{c}{\varden}                                                                      \\
		\textbf{Dims} & \textbf{Small}                & \textbf{Medium}             & \textbf{Large} & \textbf{Small} & \textbf{Medium} & \textbf{Large} \\
		\midrule
		2             & 0.001                         & 0.008                       & 0.014          & 0.001          & 0.008           & 0.013          \\
		3             & 0.004                         & 0.032                       & 0.093          & 0.001          & 0.026           & 0.069          \\
		5             & 0.009                         & 0.212                       & 0.698          & 0.028          & 0.180           & 0.627          \\
		9             & 0.272                         & 2.18                        & 5.12           & 0.010          & 0.792           & 2.95           \\
		\bottomrule
	\end{tabular}%

	\caption{\textbf{Orthogonal range count query on our \ourlib for three types of rectangle in parallel.} The benchmark contains 100M points with dimension 3. Each type includes 10K random chosen rectangles. }
	\label{table:range-count}%
\end{table}%
 }


\hide{

	\subsection{Extendability}\label{sec:exp-ext}
	Above evaluations focus on low-dimension space and synthetic datasets, we now present the experimental results for tree construction and $k$-NN search on high-dimensional synthetic datasets and real-world datasets. Still, our \ourlib shows the best performance among all benchmarks. Specifically, for high-dimensional space, \ourlib constructs the tree 13$\times$ faster than \bhltree and \logtree, meanwhile it answers the $k$-NN query about 1.5$\times$ faster than \cgal and 2.4$\times$ faster than others. For Real-World datasets, \ourlib is the only one that can constructs the tree and perform the $k$-NN search for all benchmarks

	\myparagraph{High Dimension Space.}
	We compare the time needed for tree construction and $10$-NN search between \ourlib and other baselines (except \zdtree due to its precision issue on high dimension) using synthetic datasets with dimensions in $\{2,3,5,7,9\}$. \cref{fig:highDim} demonstrates the results. The time for tree construction of \ourlib is much less than others and grows slowly as the increases of dimension, which verifies the I/O-efficiency and good parallelism in \cref{algo:constr}. Meanwhile, the \ourlib remains high efficiency in $k$-NN search, which is about twice faster than \cgal and \bhltree, and four times faster than \logtree. We attribute this to the heuristic search strategy employed and careful implementation as discussed in \cref{sec:impl}.

	\begin{figure}[t]
		\includegraphics[width=0.48\textwidth]{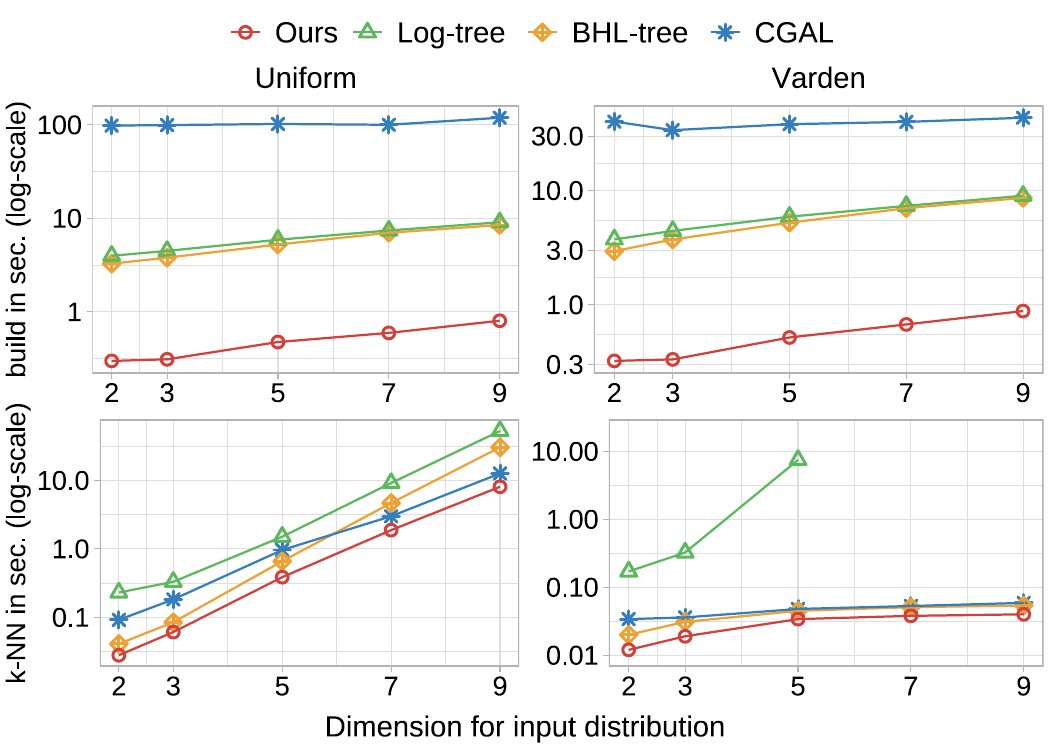}
		\vspace*{.05em}
		\caption{Tree construction and $k$-NN time for high dimension space. Each benchmark contains 100M points. The $k$ is set to $10$. Both time is measured in seconds and transformed to log-scale.}
		\label{fig:highDim}
	\end{figure}
}

\subsection{Real-World Datasets\label{sec:exp:real-world}}

\begin{table}[t]
	\centering

	\small
	\setlength\tabcolsep{4pt}
	\renewcommand{\arraystretch}{0.8}

	\begin{tabular}{c|cc|ccccc}
		\toprule
		                                                       & \textbf{Points}           & \textbf{Dims}          & \textbf{Op.} & \textbf{Ours}    & \textbf{\logtree} & \textbf{\bhltree} & \textbf{\cgal}   \\
		\midrule
		\multirow{4}[2]{*}{\begin{sideways}HT\end{sideways}}   & \multirow{4}[2]{*}{928K}  & \multirow{4}[2]{*}{10} & Build        & \underline{.008} & .678              & .061              & .472             \\
		                                                       &                           &                        & 1-NN         & \underline{.008} & 2.53              & .015              & .015             \\
		                                                       &                           &                        & 10-NN        & .043             & 2.81              & .059              & \underline{.020} \\
		                                                       &                           &                        & Range        & \underline{.095} & .651              & .478              & 1.38             \\
		\midrule
		\multirow{4}[2]{*}{\begin{sideways}HH\end{sideways}}   & \multirow{4}[2]{*}{2.04M} & \multirow{4}[2]{*}{7}  & Build        & \underline{.054} & .716              & .102              & t.o.             \\
		                                                       &                           &                        & 1-NN         & \underline{.058} & 1.26              & 1.60              & -                \\
		                                                       &                           &                        & 10-NN        & \underline{.229} & 2.60              & 3.19              & -                \\
		                                                       &                           &                        & Range        & \underline{.080} & .819              & .564              & -                \\
		\midrule
		\multirow{4}[2]{*}{\begin{sideways}CHEM\end{sideways}} & \multirow{4}[2]{*}{4.21M} & \multirow{4}[2]{*}{16} & Build        & \underline{.059} & 7.07              & .786              & 2.52             \\
		                                                       &                           &                        & 1-NN         & .042             & 16.1              & .123              & \underline{.035} \\
		                                                       &                           &                        & 10-NN        & 3.53             & 17.3              & \underline{3.32}  & 3.95             \\
		                                                       &                           &                        & Range        & \underline{.412} & 4.28              & 2.64              & 3.14             \\
		\midrule
		\multirow{4}[2]{*}{\begin{sideways}GL\end{sideways}}   & \multirow{4}[2]{*}{24M}   & \multirow{4}[2]{*}{3}  & Build        & \underline{.256} & 1.34              & .792              & s.f.             \\
		                                                       &                           &                        & 1-NN         & \underline{.274} & 3.74              & 1.31              & -                \\
		                                                       &                           &                        & 10-NN        & \underline{.775} & 14.4              & 9.37              & -                \\
		                                                       &                           &                        & Range        & \underline{.192} & 1.40              & 1.30              & -                \\
		\midrule
		\multirow{4}[2]{*}{\begin{sideways}CM\end{sideways}}   & \multirow{4}[2]{*}{321M}  & \multirow{4}[2]{*}{3}  & Build        & \underline{1.54} & 16.7              & 13.3              & 184              \\
		                                                       &                           &                        & 1-NN         & \underline{2.79} & 25.9              & 5.24              & 5.94             \\
		                                                       &                           &                        & 10-NN        & \underline{9.09} & s.f.              & s.f.              & 33.0             \\
		                                                       &                           &                        & Range        & \underline{.136} & 1.88              & 1.63              & 26.0             \\
		\midrule
		\multirow{4}[2]{*}{\begin{sideways}OSM\end{sideways}}  & \multirow{4}[2]{*}{1298M} & \multirow{4}[2]{*}{2}  & Build        & \underline{5.08} & 51.3              & 56.6              & 497              \\
		                                                       &                           &                        & 1-NN         & \underline{8.73} & 134               & 13.0              & 10.5             \\
		                                                       &                           &                        & 10-NN        & \underline{16.5} & 214               & 30.6              & 22.6             \\
		                                                       &                           &                        & Range        & \underline{.107} & 4.87              & 3.80              & 62.9             \\
		\bottomrule
	\end{tabular}%

	\caption{\textbf{Tree construction and $k$-NN time on read-world datasets for \ourlib and baselines. Lower is better.} The ``Points'' is the number of points in the datasets and ``Dim.'' is the dimension for the points. \knn{} queries are performed in parallel on all points in the dataset. 
		``Range'' is the time for $10^3$ range report queries with output size between $10^4$--$10^6$.
		The fastest runtime for each benchmark is underlined.
		``s.f.'': segmentation fault. ``t.o.'': time out (more than 3 hours).
	}
	\label{table:real-world}%
\end{table}%

We test our \ourtree{} and other baselines on real-world datasets, including
very large ones \texttt{COSMOS} (\texttt{CM})~\cite{scoville2007cosmic} and
the Northern American region of \texttt{OpenStreetMap} (\texttt{OSM})~\cite{haklay2008openstreetmap} with up to 1.298 billion points, high-dimensional datasets \texttt{HT}~\cite{huerta2016online}, \texttt{CHEM}~\cite{fonollosa2015reservoir,wang2021fast}, and \texttt{HouseHold} (\texttt{HH})~\cite{household} with up to 16 dimensions, and \texttt{GeoLife} (\texttt{GL})~\cite{geolife}, with highly duplicated points.
All coordinates are 64-bit real numbers.
Experiments on these real-world datasets show similar trends as in the synthetic datasets in \cref{sec:exp:operation}.


\myparagraph{Tree Construction, \knn{} and Range Report.}
We present the running time for tree construction, 1-NN and 10-NN queries, and range report queries for all input points in \cref{table:real-world}.
\cgal{} fails to build \texttt{HH} and \texttt{GL} due to the inability to handle heavy duplicates.
\bhltree{s} and \logtree{s} cannot process 10-NN queries on \texttt{CM} due to high memory usage.

The \ourlib{} shows the best performance in all but three instances of queries in high dimensions, which is
consistent with the results in \cref{table:summary}.
\cgal{} is faster in 10-NN on \texttt{HT} ($D=10$) and 1-NN on \texttt{CHEM} ($D=16$).
The \bhltree{} is also slightly faster than \ourtree{} in 10-NN on \texttt{CHEM}.
The \bhltree{} is the best baseline on these datasets,
but it is still 1.89--13.3$\times$ slower in construction and
1.37--27.6$\times$ slower in query than the \ourtree{},
except for 10-NN in \texttt{CHEM}, where it is 1.06$\times$ faster than the \ourlib{}.

\myparagraph{Dynamic Updates.}
The points from \texttt{OpenStreetMap} (\texttt{OSM})~\cite{haklay2008openstreetmap} are associated with time stamps, so we acquire the batch of updates per year (2014 to 2023) to compare the performance of batch updates.
We simulate a sliding-window setting.
We start with building a tree of the data in 2014.
In each year, we insert the corresponding batch, and delete the batch 5 years ago if applicable.
After the update of each year, we perform a 10-NN query on another $10^7$ points.
\cref{fig:osm-batch} presents the performance for all tested algorithms.

\begin{figure*}[t]
	\centering
	\vspace{-1em}
	\begin{minipage}[t]{0.65\textwidth}
		\includegraphics[width=1.0\textwidth]{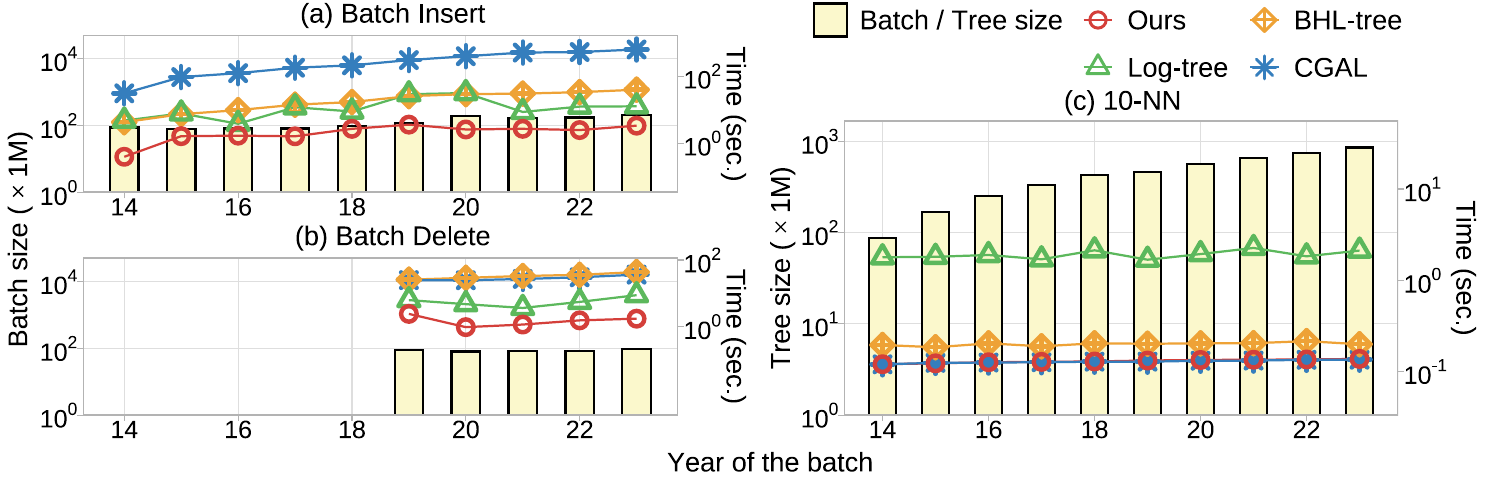}
		\caption{
			\textbf{Batch update using a sliding window spanning five years and 10-NN queries on \texttt{OSM}~\cite{haklay2008openstreetmap}.
				Lower is better.}
			The input is batched by years from 2014 to 2023.
			In each year, we insert the corresponding batch,
			and delete the batch from five years ago if applicable.
			After the update, we perform $10^7$ 10-NN queries in parallel.
			In (a) and (b), bars (left axis) are the batch size, and lines (right axis) depict the time for batch insertion and deletion in every year respectively.
			In (c), bars (left axis) are tree size, and
			lines (right axis) show the 10-NN query time.
			Vertical axes are in log scale.
		}
		\label{fig:osm-batch}
	\end{minipage}\hfill
	\begin{minipage}[t]{0.323\textwidth}
		\includegraphics[width=1.0\textwidth]{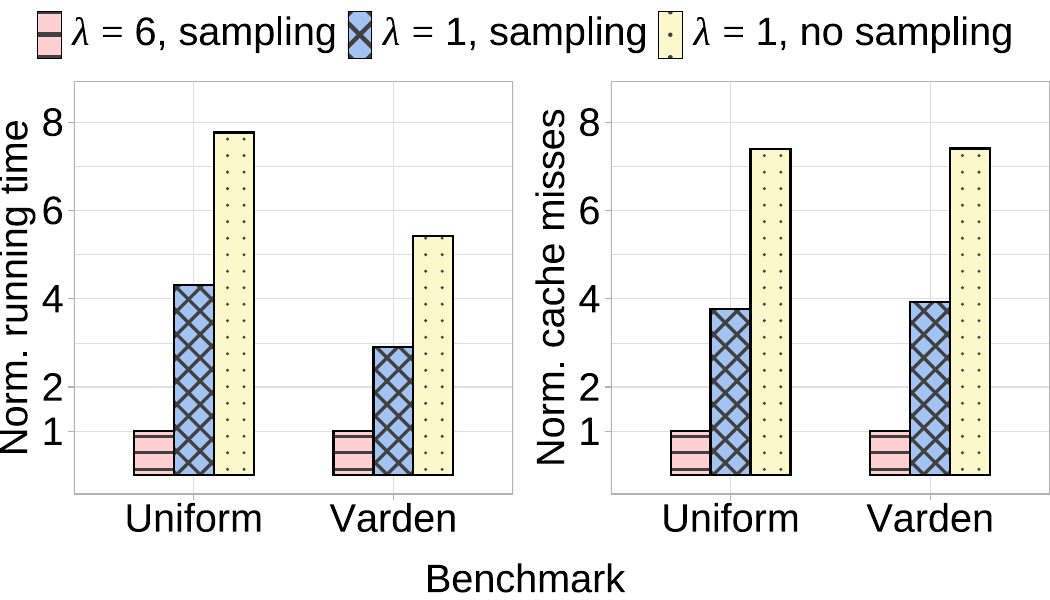}
		\caption{\textbf{The evaluation of the performance gain for techniques in tree construction. Lower is better.}
			The datasets have 1000M size in 3 dimensions. The y-axis normalized to the final version with $\skheight = 6$ and using sampling.}
		\label{fig:technique}
	\end{minipage}
	\vspace{-1em}
\end{figure*}

\captionsetup{labelfont=bf, textfont={}}

\ourtree{s} significantly outperform all baselines in all batch updates.
For insertions, \ourlib{s} is 1.82--9.55$\times$ faster than \logtree{s}, 3.81--12.0$\times$ than \bhltree{s} and 58.3--214$\times$ than \cgal{}.
For deletions, \ourtree{} outperforms the fastest baseline \logtree{} by 2.44--4.55$\times$.
For \knn{} queries, \ourlib{}, \bhltree, and \cgal{} have similar performance.
The \ourtree{} is slightly faster due to various optimizations (e.g., avoiding storing bounding boxes).
\logtree{} can be much slower than other implementations due to the use of $O(\log n)$ trees.
These conclusions are consistent with the results on synthetic datasets shown in \cref{sec:exp:operation}.

\subsection{In-Depth Performance Study \label{sec:exp:cacheMemory}}
\hide{
	Since the \ourtree{} and all baselines have $O(n\log n)$ work and all but \cgal{} have good parallel scalability (see \cref{sec:exp:scalability} for details),
	the main advantage of the \ourtree{} comes from better cache efficiency, as discussed in \cref{sec:constr,sec:update}.
}

\myparagraph{Cache Efficiency and Memory Usage for Tree Construction.}
One major effort in our work is to achieve low cache complexity for construction of \ourtree{s}.
In this section, we quantitatively verify this and show that the cache-efficiency indeed contributes to the high performance of \ourtree{s}.
\cref{fig:cacheMemory} shows the numbers of cache misses in tree construction for all implementations, along with the memory usage, which is also crucial for practical performance. 

\begin{figure}[t]
	\includegraphics[width=.42\textwidth]{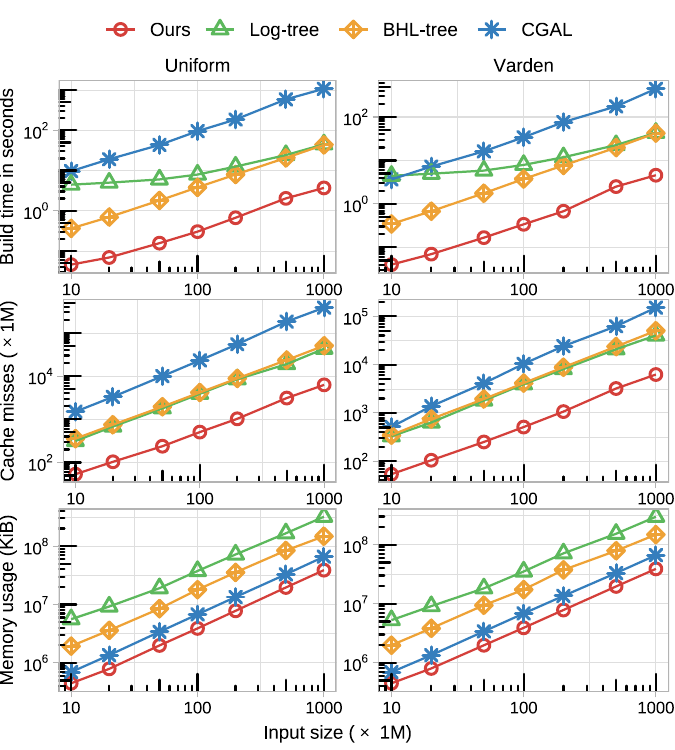}
	\caption{
		\textbf{Time, cache misses, and memory usage needed during the tree construction. Lower is better. }
		Plots are in log-log scale. All points are in 3 dimensions.
	}
	\label{fig:cacheMemory}
\end{figure}

The results verified that our performance gain is consistent with the reduction of the cache misses,
demonstrating the importance of cache-efficiency in parallel algorithms.
\ourtree{s} incur 6--12$\times$ less cache misses than \bhltree{s} or \logtree{s},
which is close to the speedup of \ourtree{} over \bhltree{} or \logtree{} in construction.
The reported cache misses in 1000M-3D-V construction indicates over 80\% of the memory bandwidth usage of the testing machine. 
We also verify this using the Intel\textregistered{} VTune profiler~\cite{IntelVTune2024}.
On our machine with a peak memory bandwidth of 443.78 GB/s, \ourtree{} has a 327--421 GB/s usage of bandwidth in many stages during construction and update.
This indicates that our construction and update algorithms are memory bottlenecked, and optimizing the cache complexity almost linearly contributes the performance.

\myparagraph{Study of the Impact of Bounding Boxes.}
As mentioned, \ourlib{s} avoid storing bounding boxes within tree nodes,
but compute them on-the-fly, which may be looser than the actual bounding boxes of the subtree.
Therefore, \ourtree{} reduces memory accesses and memory usage in construction and batch updates,
but may introduce additional computation and tree traversal during queries.
To study this trade-off, we measure the \ourlib{} and its variant storing bounding boxes (referred to as \oursbb{}) in terms of range report queries on real-world datasets.
\cref{table:perf-mix} shows the time, instruction per cycle (IPC), cache references (CRs) and cache-misses (CMs), as well as the average number of nodes visited per query.
\ifconference{We present the full results with more profiling results
	in the full paper.}\iffullversion{We refer the readers to \cref{app:profiling} for the full results.}
\oursbb{} allows for more effective prune, evidenced by the fewer number of nodes visited during search.
This benefit is more significant on high-dimensional datasets \texttt{HT}, \texttt{HH} and \texttt{CHEM}.
Due to visiting fewer nodes, on these three datasets, \oursbb{} also has lower cache references and cache misses, and is thus faster.
On low-dimensional datasets \texttt{CM} and \texttt{OSM}, the difference between the computed subspaces and bounding boxes is small.
In this case, \ourtree{} and \oursbb{} visited similar numbers of tree nodes.
Thus, \ourtree{} achieves lower time than \oursbb{} in queries due to fewer memory accesses.

\begin{table}[t]
	\centering

	\small
	\setlength\tabcolsep{3pt}
	\renewcommand{\arraystretch}{1.1}


	\begin{tabular}{cc|cccccc}
		\toprule
		                                                       & \textbf{Tree} & \textbf{Time(sec.)} & \textbf{\# Leaf}  & \textbf{\# Interior} & \textbf{IPC}     & \textbf{CRs(M)} & \textbf{CMs(M)} \\
		\midrule
		\multirow{2}[2]{*}{\begin{sideways}HT\end{sideways}}   & Pkd           & .587                & 2,675             & 3,957                & .326             & 926             & 508             \\
		                                                       & Pkd-bb        & \underline{.268}    & \underline{207}   & \underline{891}      & \underline{.506} & \underline{245} & \underline{134} \\
		\midrule
		\multirow{2}[2]{*}{\begin{sideways}HH\end{sideways}}   & Pkd           & .385                & 2,817             & 3,621                & .320             & 557             & 361             \\
		                                                       & Pkd-bb        & \underline{.192}    & \underline{615}   & \underline{1,135}    & \underline{.387} & \underline{242} & \underline{150} \\
		\midrule
		\multirow{2}[2]{*}{\begin{sideways}CHEM\end{sideways}} & Pkd           & 1.15                & 4,276             & 5,701                & \underline{.271} & 1,662           & 1,450           \\
		                                                       & Pkd-bb        & \underline{.837}    & \underline{1,330} & \underline{2,506}    & .235             & \underline{954} & \underline{820} \\
		\midrule
		\multirow{2}[2]{*}{\begin{sideways}GL\end{sideways}}   & Pkd           & .329                & 3,268             & 5,478                & \underline{.303} & 439             & 345             \\
		                                                       & Pkd-bb        & \underline{.291}    & \underline{1,285} & \underline{3,484}    & .215             & \underline{407} & \underline{317} \\
		\midrule
		\multirow{2}[2]{*}{\begin{sideways}CM\end{sideways}}   & Pkd           & \underline{.531}    & 2,456             & 3,939                & \underline{.186} & \underline{692} & \underline{649} \\
		                                                       & Pkd-bb        & .577                & \underline{2,195} & \underline{3,785}    & .165             & 752             & 703             \\
		\midrule
		\multirow{2}[2]{*}{\begin{sideways}OSM\end{sideways}}  & Pkd           & \underline{.326}    & 529               & 1,243                & \underline{.171} & \underline{460} & \underline{426} \\
		                                                       & Pkd-bb        & .363                & \underline{236}   & \underline{959}      & .148             & 473             & 441             \\
		\bottomrule
	\end{tabular}%

	\caption{
		\textbf{Performance comparison of the original \ourlib{} (Pkd) and a variant with bounding boxes (Pkd-bb) for range report queries on real-world datasets. The best performance is underlined.
		}
		The query contains $10^4$ range report queries with output size $10^4$--$10^6$.
		``Time'': Time for all queries in seconds, ``Leaf'': Average number of leaf nodes visited per query, ``Interior'': Average number of interior nodes visited per query,
		``IPC'': Instructions per cycle, ``CR'': Cache reference, ``CMs'': Cache misses.
	}

	\label{table:perf-mix}%
\end{table}%

\subsection{Technique Analysis for Tree Construction\label{sec:tech:analysis}}
In \cref{algo:constr}, we mainly employed two techniques to improve the cache-efficiency:
1) building $\skheight$ levels at a time to save total data movements,
and 2) using sampling to determine splitters to save memory accesses.
To test these two techniques, we measure the time and the cache misses in tree construction
using three versions of \ourtree{s} with different levels of optimizations:
1) the final version with sampling and $\skheight=6$ (red bars),
2) 
constructing one level at a time using sampling, i.e., $\skheight=1$ (blue bars)
and
3) constructing one level at a time without sampling, i.e., $\skheight=1$ and finding exact median in parallel (yellow bars).
All benchmarks have $10^9$ points in 3 dimensions.
Results are presented in \cref{fig:technique}.
By comparing the red and blue bars, we observe that building multiple levels reduces running time by 2.91--4.31$\times$ and cache misses by 3.8$\times$ for both distributions.
The difference between the blue and yellow bars indicates
that sampling improves the time by about 1.86$\times$ and reduces cache misses by about 1.9$\times$.
The improvement in running time is consistent with the reduction of cache misses.
This verifies that the high performance of our construction algorithm indeed comes from the better cache efficiency enabled by the two techniques.


\subsection{Balancing Parameter Revisited\label{sec:balancingParameterRevisted}}


\hide{
	In this section we wish to answer the question that, to which extent the balancing parameter $\alpha$ influences the tree construction and the following query. Intuitively, a small $\alpha$ keeps the tree to be more balanced after consecutive updates, which can lead to faster query but the cost for reconstruction increases as well. Therefore, we wish to find a suitable value for $\alpha$ that achieves virtue in both worlds. However, to the best of our knowledge, there is no literature studies this question systematically.
}
One of the core ideas of the \ourtree{} is to relax the strong balancing criterion to achieve much better performance in construction and updates.
This may increase the tree height and affect the query performance to some extent, and the trade-off is controlled by the parameter $\alpha$.
In this section, we systematically study the choice of~$\alpha$ and the corresponding impact on the performance.

To do this, we create adversarial inputs such that belated rebalancing may result in a large tree height.
We generate skewed batch sequences and insert them into an empty tree by 1000 batches.
After each update, we perform queries to see how the imbalance affects the query performance. 
We test $\alpha$ in a full range from $0.01$ (almost always rebalance on updates) to $0.50$ (no rebalancing; siblings can be arbitrarily different).
We generated multiple distributions and show two of the most representative ones in the paper:

\begin{itemize}
	\item \type{1}: one instance of the 3D-V-1000M dataset.
	\item \type{2}: concatenation of one instance from 3D-U-100M and another one from 3D-V-900M.
\end{itemize}

\begin{figure}[t]
	\includegraphics[width=.48\textwidth]{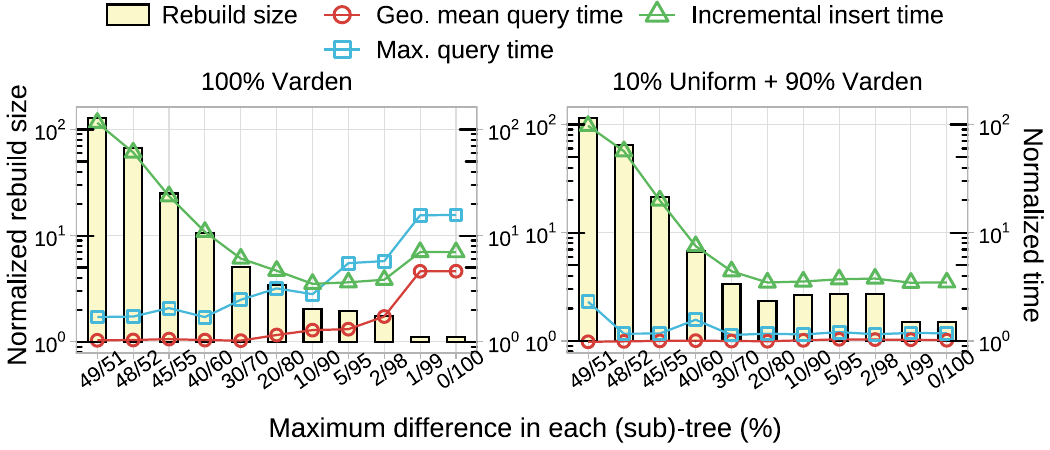}
	\caption{
		\textbf{Normalized rebuild size, update and query times with varying balancing parameter $\alpha$. Lower is better.}
		The dataset contains $10^9$ points in 3D, divided into 1000 batches, and incrementally inserted into an initially empty tree.
		The ``rebuild size'' (yellow bars, left axis) denotes the total number of tree nodes involved in reconstructions, 
		normalized to the final tree size.
		The total insertion time (green line, right axis) is normalized to that of building a tree directly from the input.
		After each of the 1000 batch insertions, we perform $1$-NN queries for 1M points from the \uniform{} distribution.
		The blue and red lines (right axis) show the geometric mean and maximum time among the 1000 queries respectively,
		normalized to the query time on a perfectly-balanced tree on the same set of points.
	}
	\label{fig:inbaRatio}
\end{figure}

\captionsetup{labelfont=bf, textfont={}}

We observe that \type{1} is adversarial since the \varden{} is generated by a random-walk plus random jump process.
By cutting the stream into 1000 batches, different dense areas (clusters) are added to the tree in-order, which will trigger frequent rebalancing for the \ourtree{s} (otherwise the tree quality can degenerate significantly).
\type{2}, as well as most of the other distributions are more resistant to large $\alpha$ values---the initial tree are generated on a \uniform{} distribution, providing a roughly even partition of the space---belated rebalancing does not affect the tree quality as much as \type{1}.

\cref{fig:inbaRatio} demonstrates the construction and $1$-NN time w.r.t. the balancing parameter $\alpha$.
The ``rebuild size'' (yellow bars) denotes the cumulative size of the subtrees that are reconstructed throughout all batch insertions.
This value is normalized to the final tree size, $10^9$.
The ``incremental update time'' is to construct a tree by inserting 1000 batches incrementally.
We normalize the construction time to that if we directly build a tree once using all the points.
After each batch insertion, we perform $1$-NN queries for a batch of another 1M points generated uniformly at random.
We normalize the query time to that on a perfectly balanced tree with the same set of points to illustrate the impact of imbalance.
Among all the 1000 batch queries, we record the maximum normalized query time by the blue rectangles in \cref{fig:inbaRatio},
which roughly represents the ``worst-case'' query time, and the geometric mean, which represents the ``average case''.
We use $x/y=(0.5-\alpha)/(0.5+\alpha)$ in the figure to indicate the degree of balance controlled by $\alpha$, which means that two sibling subtrees can differ by at most $x:y$.
We pick $1$-NN query here since the $1$-NN query performance is the most sensitive to how balance the \kdtree{} is, among all queries types tested in this paper.

For both input sequences, the overall trend for the incremental construction time decreases when less rebalance is required,
since rebalancing is triggered less frequently.
There is a slight rebound when the subtrees are excessively unbalanced (1/99 or worse)---the cost of traversing the tree in batch insertion also increases when the tree becomes skewed.
For queries, an unbalanced tree can significantly slow down the performance, as the searches need to go much deeper in the tree to touch the incident points.

Overall, the query performance is stable for a reasonably large range of $\alpha$.
The worst-case performance is negligibly affected all the way up to 30/70 ($\alpha=0.2$), and the average-case overhead is small until 10/90 ($\alpha=0.4$).
However, when we further relax the balancing criterion, then the performance can degenerate greatly on \type{1}---up to 4.48$\times$ slowdown on average for $\alpha=0.5$. The performance may still be reasonable on instances such as \type{2}.
To ensure better query performance in general, we choose $\alpha = 0.3$ (i.e., 20/80) as the default setting in \ourtree{} since it achieves a good tradeoff for the construction, update and query performance.

\ifconference{We also report the running time of batch updates with three specific values of $\alpha\in\{0.03,0.1,0.3\}$ with comparison to all baselines in the full paper.}\iffullversion{We also report the running time of batch updates with three specific values of $\alpha\in\{0.03,0.1,0.3\}$ in \cref{app:inbaratios}.}

\hide{
	Overall, in order to ensure the query efficiency of a \kdtree{} in the batch-dynamic scene, the value of $\alpha$ should be chosen carefully according to the I/O cost of the query. If the query requires extensive I/O operations, such as $100$-NN search, a larger value of $\alpha$ reduces the time for batch-update meanwhile brings small overhead for the query; otherwise, a smaller value of $\alpha$ is better.
}

\newcolumntype{J}[1]{>{\bf\raggedright\let\newline\\\arraybackslash\hspace{0pt}}m{#1}}
\hide{
  \begin{table*}[htbp]
    \centering \small
    \caption{Add caption}
    \begin{tabular}{J{1.6cm}L{1cm}rL{2cm}L{1.2cm}L{2cm}L{1.5cm}L{1.5cm}L{2cm}}
      \toprule
                                                    & \multicolumn{1}{l}{{base}} & \multicolumn{2}{c}{construction} & \multicolumn{2}{c}{Update}                                                                                    & \multicolumn{1}{c}{\multirow{2}[1]{*}{Property}} & \multicolumn{1}{c}{\multirow{2}[1]{*}{I/O opt.}} & \multicolumn{1}{c}{\multirow{2}[1]{*}{Implementation}}                                                                                              \\
                                                    & \multicolumn{1}{l}{tree}   & Par?                             & \multicolumn{1}{c}{Optimizations}                                                                             & Approach                                         & Parallel?                                        & \multicolumn{1}{c}{}                                   & \multicolumn{1}{c}{}                                       & \multicolumn{1}{c}{}          \\
      \midrule
      Bentley'75~\cite{bentley1975multidimensional} & single kd-tree             & seq                              & -                                                                                                             & Static                                           & -                                                & Perfectly balanced                                     & -                                                          & Implemented in CGAL           \\
      \midrule
      Willard'78~\cite{willard1978balanced}         & multiple k-d trees         & seq                              & -                                                                                                             & log method                                       & sequential, point update                         & Perfectly balanced                                     & -                                                          & Implemented later             \\
      \midrule
      kdb tree'80~\cite{robinson1981kdb}            & B-tree                     & Seq                              & -                                                                                                             & Overflow/ underflow                              & sequential, point update                         & no guanrantee                                          & multi-way tree nodes                                       & Implemented, unavailable      \\
      \midrule
      Overmas'83~\cite{overmars1983design}          & single kd-tree             & no                               & -                                                                                                             & partially rebuild                                & sequential, point update                         & relaxed (weight balanced)                              & -                                                          & -                             \\
      \midrule
      CGAL'97~\cite{cgal51}                  & single kd-tree             & Par                              & dimension choosing                                                                                            & fully rebuild                                    &                                                  & Perfectly balanced                                     & leaf wraping                                               & Implemented, available        \\
      \midrule
      Bkd tree'03~\cite{procopiuc2003bkd}           & multiple k-d trees         & Seq                              &                                                                                                               & log method                                       & Sequential, point update                         & perfectly balanced                                     & Optmized for construction and point update                 & Implemented in Julian's paper \\
      \midrule
      Arge et al.'03~\cite{agarwal2003cache}        & multiple k-d trees         & seq                              &                                                                                                               & log method                                       & Sequential, point update                         & perfectly balanced                                     & Optmized for construction\par and point update             &                               \\
      \midrule
      PODS’16~\cite{agarwal2016parallel}            & single kd-tree             & dist                             & Sampling, multi-level                                                                                         & static                                           & -                                                & relaxed (randomized)                                   & -                                                          & No                            \\
      \midrule
      ikd-tree'21~\cite{cai2021ikd}                 & single kd-tree             & seq                              & -                                                                                                             & Partial rebuild \par Mark tomb for deletion      & Sequential, point update                         & Relaxed [weight balanced]                              &                                                            & Implemented, unavailable      \\
      \midrule
      ParGeo'21~\cite{wang2022pargeo}               & multiple k-d trees         & par                              &                                                                                                               &                                                  & parallel, batch update                           & Perfectly balanced                                     & leaf wraping                                               & Implemented, available        \\
      \midrule
      zd-tree*'21~\cite{blelloch2022parallel}       & Oct-tree                   & par                              &                                                                                                               & BST rotations                                    & parallel, batch update                           & Weight-balanced BST                                    & leaf wraping                                               & Implemented, available        \\
      \midrule
      Ours'23                                       & single kd-tree             & par                              & \begin{itemize}[nosep,wide,leftmargin=*]\item dimension choosing \item Sampling\item multi-level\end{itemize} & partially rebuild                                & parallel, batch update                           & Relaxed [weight balance, randomized]                   & leaf wraping, optimized for construction and batch updates & Implemented, available        \\
      \bottomrule
    \end{tabular}%
    \label{tab:relatedwork}%
  \end{table*}%
}

\begin{table*}[htbp]
  \centering
 \setstretch{0.78}
  \small
  \vspace{-1em}
  \begin{tabular}{@{}J{0.8cm}@{}J{1.8cm}@{}C{0.7cm}L{3.2cm}L{2.1cm}L{3.2cm}L{3.8cm}}
    \toprule
         &                                                           &       &  \bf Layout \& Update                  & \multicolumn{1}{l}{\bf Balancing Criteria}             & \multicolumn{1}{@{}c@{}}{\bf Cache (I/O) Optimizations}                                                    & \bf Notes \& More Optimizations                                                                                         \\
    \midrule
    1975 & Bentley~\cite{bentley1975multidimensional}     & Seq.   & Single tree; No rebalancing              &    No balance                       & -                                                                          & \vspace{-1em}\begin{itemize}[topsep=-1em, itemsep=-0.2em, parsep=0em, wide, leftmargin=*, series=notes]
        \item Proposed \kdtree{}\vspace{-1em}
      \end{itemize} \\
\midrule
1978 & Bentley~\cite{bentley1979decomposable}          & Seq.   & Log-method; Tree merging         & Perfectly balanced                          & -                                                                          & \vspace{-1em}\begin{itemize}[resume*=notes]
        \item Proposed logarithmic method \vspace{-1em}
      \end{itemize}                                                    \\
    \midrule
    1980 & kdb-tree~\cite{robinson1981kdb}                      & Seq.   & B-tree; Overflow/underflow & Not shown                               &\vspace{-1em}\begin{itemize}[resume*=notes]
       \item  B-tree layout\vspace{-1em}
     \end{itemize}                                                                &-\\
    \midrule
    1983 & Overmars~\cite{overmars1983design}               & Seq.   & Single tree; Partial rebuild  & Relaxed \par(weight balanced)               & -                                                                          & -                                                                                                      \\
    \midrule
    2003 & Bkd-tree~\cite{procopiuc2003bkd}             & Seq.   & Log-method; Tree merging         & Perfectly balanced                          &
    \vspace{-1em}
    \begin{itemize}[resume*=notes]
      \item Cache opt. construction
      \item Cache opt. point update
      \vspace{-1em}
    \end{itemize}&-\\
    \midrule
    2003 & Agarwal et al.~\cite{agarwal2003cache}         & Seq.   & Log-method; Tree merging         & Perfectly balanced                          & \vspace{-1em}\begin{itemize}[resume*=notes]
                                                                                                                                   \item Cache opt. construction
                                                                                                                                   \item Cache opt. point update
                                                                                                                                   \item vEB layout
                                                                                                                                   \vspace{-1em}
                                                              \end{itemize} & -\\
    \midrule
    2016 & Agarwal et al.~\cite{agarwal2016parallel}       & Dist. & Single tree; Static (no update)            & Relaxed\par(randomized)                     & -                                                                          & \vspace{-1em}\begin{itemize}[resume*=notes]
        \item Sampling
        \item Multi-level construction\vspace{-1em}
      \end{itemize}                                                       \\
    \midrule
    2020 & CGAL~\cite{cgal51}                      & Par.  & Single tree; Full rebuild  \par{(sequential deletion)}    & Perfectly balanced                          &     \vspace{-1em}\begin{itemize}[resume*=notes]
       \item Leaf wrap\vspace{-1em}
     \end{itemize}                                                               & \vspace{-1em}
    \begin{itemize}[resume*=notes]
      \item ``\cgal'' tested in \cref{sec:exp}
      \item Implementation available
      \vspace{-1em}
    \end{itemize}\\
    \midrule
    2021 & ikd-tree~\cite{cai2021ikd}                      & Seq.   & Single tree; Partial rebuild  \par{(lazy deletion)}  & Relaxed \par(weight balanced)               &                      -                                                     & -\\
    \midrule
    2021 & ParGeo~\cite{wang2022pargeo}                & Par.   &   Log-method; Tree merging \par{(lazy deletion)}                & Perfectly balanced                          & \vspace{-1em}\begin{itemize}[resume*=notes]
       \item Leaf wrap
       \item vEB layout\vspace{-1em}
     \end{itemize}  &
     \vspace{-1em}\begin{itemize}[resume*=notes]
        \item ``\logtree{}'' tested in \cref{sec:exp}
        \item Implementation available\vspace{-1em}
      \end{itemize}
      \\
      \midrule
    2021 & ParGeo~\cite{wang2022pargeo}                & Par.   &   Single tree; Full rebuild                & Perfectly balanced                          & \vspace{-1em}\begin{itemize}[resume*=notes]
       \item Leaf wrap\vspace{-1em}
     \end{itemize}  &
     \vspace{-1em}\begin{itemize}[resume*=notes]
        \item ``\bhltree{}'' tested in \cref{sec:exp}
        \item Implementation available\vspace{-1em}
      \end{itemize}
      \\
    \midrule
\hide{      2021 & \zdtree{}~\cite{blelloch2022parallel}                  & Par.   & Quad/octree; Rotations      & Balanced BST                                &     \vspace{-1em}\begin{itemize}[resume*=notes]
       \item Leaf wrap\vspace{-1em}
     \end{itemize}
    &  \vspace{-1em}\begin{itemize}[resume*=notes]
        \item ``\zdtree{}'' tested in \cref{sec:exp}
        \item Open-source\vspace{-1em}
      \end{itemize}
   \\
    \midrule}
         & This paper (\ourtree{})                                           & Par.   & Single tree; Partial rebuild  & Relaxed \par{(weight balanced, randomized)}
         &\vspace{-1em}
         \begin{itemize}[resume*=notes]
           \item Leaf wrap
           \item Cache opt. construction
           \item Cache opt. batch update\vspace{-1em}
         \end{itemize}
         & \vspace{-1em}
         \begin{itemize}[resume*=notes]
           \item Sampling
           \item Multi-level construction
           \item Implementation available
           \vspace{-1em}
         \end{itemize}                                                                                              \\
    \bottomrule
  \end{tabular}%
  \caption{\textbf{Summary of related work.} ``Log-method'': logarithmic method (using $O(\log n)$ \kdtree{s}). ``Seq.'': sequential; ``Par.'': parallel; ``Dist.'': distributed. ``Cache opt.'': optimized for cache (or I/O) efficiency. Some of the designs are optimized for disk I/Os, but algorithmically the optimizations for disks or caches are almost identical. Therefore, we also denote both of them as ``cache-opt.'' in this table. 
  \vspace{-.8em}}
  \label{tab:related}%
\end{table*}%

\subsection{Parallel Scalability\label{sec:exp:scalability}}

\begin{figure}[t]
	\includegraphics[width=.42\textwidth]{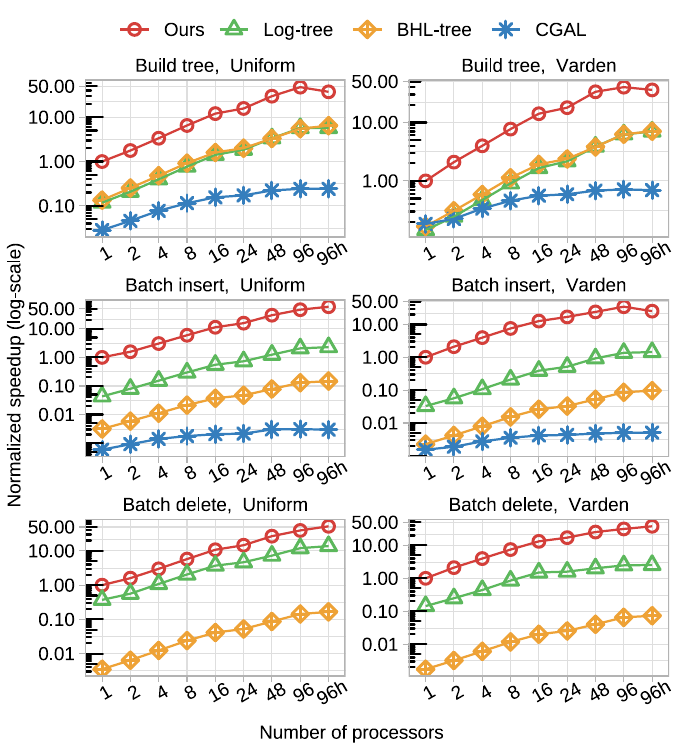}
	\caption{
		\textbf{Normalized parallel speedup of operations on \uniform and \varden for the \ourtree{} and other baselines on varying numbers of processors. Higher is better.}
		The curves show relative running time on different number of threads normalized to the \ourtree{} on one thread. 
		The benchmark contains 1000M points in 3D. 
		The ``batch insert'' inserts another 10M points from the same distribution into the tree, and the ``batch delete'' removes 10M points from the tree.
		``96h'': 96 cores with hyper-threads.
		There is no data for \cgal in batch delete since it deletes points sequentially.
	}
	\label{fig:scalability}
\end{figure}

We test the scalability for the tree construction and batch updates of \ourtree{} and other baselines on both 3D-U-1000M and 3D-V-1000M.
We normalize all running time to the \ourtree{} on one core, and show the scalability in \cref{fig:scalability}.
The \ourlib overall has very good scalability.
For \uniform{}, the \ourlib{} achieves 37.3$\times$ self-relative speedup in construction, 59.3$\times$ in batch insertion and 52.4$\times$ in batch deletion using 96 cores.
The numbers for \varden is 35.9$\times$ in tree construction, 25.3$\times$ in batch insertion, and 37.8$\times$ in batch deletion.
The speedup on \varden is lower in batch updates since the new points are unevenly distributed in the subtrees, resulting in more challenges in load balancing in constructing subtrees with various sizes.
Both \logtree{s} and \bhltree{s} have also shown good parallel speedup, while the performance difference is mainly due to the slow sequential (1 core) performance compared to the \ourtree{}.
The main reason that causes the advantage is the carefully-designed sieving algorithm introduced in \cref{sec:constr}, which is used in both construction and updates.
The only implementation that does not scale well is \cgal{}, which was also observed in previous work~\cite{blelloch2022parallel}.
\cgal{} only parallelizes the process of building two subtrees, but finds the splitters and partitions points into subtrees sequentially, limiting its parallelism.


\hide{
	One may notice that the complexity of tree construction algorithms for all baselines is the same,
	which is $O(n\log n)$, and all baselines have good scalability (except \cgal{}, see \cref{sec:exp:scalability}).
	However, as shown in \cref{table:summary}, the \ourlib{} outperforms other baselines for tree construction.
	We believe the reason is that the \ourlib{} uses the sampling and the sieving algorithm (see \cref{sec:constr-algo}) to optimize the cache utilization, making it the only cache-efficient implementation among all baselines.
	To verify this conclusion,
	we measure the tree construction in terms of time, number of cache misses, and runtime memory usage for all baselines regarding different input sizes.
	\cref{fig:cacheMemory} illustrates the result.

	As expected, the time needed for construction has a strong correlation with the cache misses and memory usage, and \ourlib{} has the best performance for all metrics.
	On benchmark from \texttt{Uniform}, compared with other baselines, the \ourlib{} uses 7.98--319$\times$ less time to finish the construction, meanwhile it incurs 5.88--61.4$\times$ fewer cache misses and 1.56--12.6$\times$ less runtime memory usage during the construction.
	The numbers on \texttt{Varden} are analogous to the \texttt{Uniform}.
}

\section{Related Work}\label{sec:related}
In this section, we review the literature of the \kdtree{}, with a summary of the most relevant ones in \cref{tab:related}.
The original algorithm proposed by Bentley et al.~\cite{bentley1975multidimensional,friedman1977algorithm} does not include a rebalancing scheme,
and assumes either static data, or inserting keys in a random order.
Since then, researchers have been developing rebalancing schemes for \kdtree{s}, mainly in the two categories: \emph{logarithmic method}, and \emph{partial rebuild}.
The logarithmic method was first proposed by Bentley in~\cite{bentley1979decomposable}, and has been followed up by later work, including optimizing the cache bounds~\cite{procopiuc2003bkd,agarwal2003cache} and parallelism~\cite{wang2022pargeo}.
However, as shown in \cref{table:summary}, maintaining $O(\log n)$ trees in the logarithmic method hampers the query efficiency significantly.
Another issue for this method is that insertions and deletions are asymmetric and need to be handled by different approaches,
which is more complicated.
An alternative idea is to maintain a single tree and partially rebuild the unbalanced subtrees, which
was proposed by Overmars~\cite{overmars1983design}.
Many papers followed up this idea, such as the KDB-tree~\cite{robinson1981kdb}, scapegoat $k$-d tree~\cite{galperin1993scapegoat}, ikd-tree~\cite{cai2021ikd}, and the divided $k$-d tree~\cite{van1991divided}.
Among them, only KDB-tree is cache-optimized.
None of them considered parallelism.

There have been many attempts to optimize the cache (or I/O) efficiency for \kdtree{s}.
Early work simply considers flattening the binary structure into a B-tree-like multiple-way tree~\cite{robinson1981kdb,procopiuc2003bkd}.
These papers are optimized for disk I/Os, and do not consider parallelism or batch updates.
\citet{procopiuc2003bkd} gave a (sequential) cache-efficient \kdtree{} construction algorithm, and dynamized the \kdtree{} using the logarithmic method.
\citet{agarwal2003cache} showed how to construct a cache-oblivious \kdtree{} using the vEB layout~\cite{bender2000cache}.
Motivated by \citet{procopiuc2003bkd}, \citet{wang2022pargeo} proposed parallel batch update algorithms on \kdtree{s}, and implemented them as the \logtree{} in the \pargeo{} library (called BDL-tree in their paper).
However, \citet{wang2022pargeo} did not show the cache complexity for their parallel construction or update algorithms.


There exist parallel \kdtree{} algorithms, but most of them are not cache-friendly or do not support a full interface.
Parallel construction for static \kdtree{s} has been well studied, mainly in two approaches.
The first approach~\cite{brown2014building,cao2020improved, yamasaki2018parallelizing} is to presort all points in all $D$ dimensions in parallel.
To compute the partition hyperplane, the median of the corresponding dimension is selected, and all elements are stably partitioned into two subtrees and recursively constructed.
The second approach~\cite{reif2022scalable, al2000parallel,hunt2006fast, shevtsov2007highly, choi2010parallel} finds the median as the splitter on the fly, and then constructs the sub-trees recursively in parallel.
Reif and Neumann's work~\cite{reif2022scalable} also proposed to support range-join using a \kdtree{} algorithm with the second approach.
Some of them also use sampling~\cite{al2000parallel,hunt2006fast} or constructing multiple levels~\cite{agarwal2016parallel,garanzha2011grid}.
\citet{agarwal2016parallel} showed a distributed algorithm for static \kdtree{s},
which also uses sampling and multi-level construction to optimize the number of rounds in the MPC model.
However, they did not show cache or span bounds, and have no implementations.
These approaches do not directly support updates. 
Although the \kdtree{} in \cgal{}~\cite{cgal51} supports updates, it simply rebuilds the tree after updates, which is inefficient.
The \pargeo{} library~\cite{wang2022pargeo} provides several \kdtree{} implementations, among which \logtree{s} and \bhltree{s} are generally the fastest.
The \bhltree{} is based on a single \kdtree{}, which fully rebuilds the tree on updates.
The \logtree{} uses the logarithmic method as discussed above to support parallel batch updates.
We compared to both of them in \cref{sec:exp}.
There also exist concurrent \kdtree{s}~\cite{chatterjee2018concurrent,ichnowski2020concurrent} that achieve linearizability and lock-freedom.
Our work focuses on batch-parallel setting,
which aims to support a batch of insertions or deletions with good work, span, and cache bounds.

There have been other data structures for multi-dimensional data such as R-trees (e.g.,~\cite{guttman1984r,kamel1992parallel,arge2002efficient, you2013parallel,prasad2015gpu}) and quad/octrees (e.g.,~\cite{blelloch2022parallel}).
Most of them do not support parallel updates.
There exist papers on parallel \rtree{s}, such as on GPUs~\cite{prasad2015gpu,you2013parallel} and on disks~\cite{kamel1992parallel,arge2002efficient}.
However, we are unaware of open-source in-memory implementations that support parallel construction and updates.
This is probably not surprising, given that the main use cases for \rtree{} are for external memory while the thread-level in-memory parallelism is a less relevant optimization.
\ifconference{For completeness, we compare the \ourlib{} with the sequential \rtree{} in Boost~\cite{schaling2011boost} in tree construction, \knn{} and range report in the full paper.}\iffullversion{For completeness, we compare the tree construction, \knn{} and range report time for \ourlib{} with the sequential \rtree{} in Boost~\cite{schaling2011boost} in \cref{app:rtree}.}
A recent paper~\cite{blelloch2022parallel} developed batch-parallel quad/octrees called \zdtree{s}.
However, through the correspondence with the authors, we confirmed that the released version has correctness issues in batch updates, so we cannot compare to their batch-update performance. \ifconference{We give a comparison to their construction time and \knn{} time in the full paper.}\iffullversion{We give a comparison to their construction time and \knn{} time in \cref{app:zdtree}.}



\hide{
	A static balanced \kdtree can be constructed within $O(n\log n)$ work by partition points evenly in each level~\cite{bentley1975multidimensional}. Since \kdtree is a space partition tree, there is no tree rotation operation like AVL tree~\cite{bayer1972symmetric} or Splay tree~\cite{sleator1985self} to retain a balanced tree after consecutive insertion or deletion. Commonly used scheme to keep a single tree balanced is the \bdita{partial rebuilding} strategy introduced in~\cite{overmars1982dynamic, overmars1983design} that rebuilds the imbalanced (sub)-tree after each update within $O(\log^2 n)$ amortized work, which forms the basis for the later K-D-B tree~\cite{robinson1981kdb}, scapegoat $k$-d tree~\cite{galperin1993scapegoat} and $ikd$-tree~\cite{cai2021ikd}. The divided $k$-d tree~\cite{van1991divided} reduces the update work to $O(\log n)$, while the tree structure was changed to two-level 2-3 trees.

	Bentley \textit{et al} introduced the \bdita{logarithmic method} to transform a static data structure to a dynamic one~\cite{bentley1979decomposable}.
	\yihan{looks like \cite{willard1978balanced} is earlier in 1978. We may want to make sure }
	The initial structure is replaced by logarithmic one, the I/O efficiency is improved while posing $O(\log n)$ overhead for the query operation. In this case, instead maintaining one \kdtree and dynamic rebalance it after each insertion, the Bkd-tree~\cite{procopiuc2003bkd} maintains logarithmic static K-D-B trees
	and update these structures periodically by reconstructing a specifically chosen subset of them. Given $N$ points, memory buffer $M$ and $B$ the number of points that fits in a disk block, the Bkd-tree perform an insertion in $O(\frac{1}{B}(\log_{M/B}\frac{N}{B})(\log_2{\frac{N}{M}}))$ amortized work.
	The Bkd-tree was further extended to be \bdita{cache-oblivious} by organize the logarithmic sub-trees through the van Emde Boas layout~\cite{agarwal2003cache}.

	For parallel tree construction, existing approaches either find the median and partition points in parallel as~\cite{reif2022scalable}, or presort the points and build the tree in parallel as~\cite{brown2014building,cao2020improved, yamasaki2018parallelizing}. When a perfect balanced \kdtree is required, parallel tree constructions focus on optimizing the evaluation of surface area heuristic (SAH) functions and memory throughput are proposed in~\cite{hunt2006fast, shevtsov2007highly, choi2010parallel}. In the massively parallel communication (MPC) model, \cite{al2000parallel} proposed that one can build the first $O(\log p)$ level of tree using $p$ non-shared memory machines in parallel, then construct the remaining levels locally. A similarly idea is presented in~\cite{agarwal2016parallel} as well.

	For those data structures supporting dynamic batch update and parallel rebalance, the BDL-tree~\cite{yesantharao2021parallel, wang2022pargeo} parallelizes the Bkd-tree and performs batch insertion/deletion within $O(B\log^2(n+B))$ amortized work and $O(\log (n+B)\log\log(n+B))$ depth, where $n$ is tree size before update and $B$ is the batch size.
	There is variant of \kdtree{}, called \zdtree~\cite{blelloch2022parallel}, which parallelizes the quad/oct-tree based on the Morton order. \zdtree supports tree construction in $O(n)$ work and $O(n^\epsilon)$ span for $\epsilon<1$. The batch update of size $m$ to \zdtree requires $O(m\log n/m)$ work with $O(m^\epsilon+\text{polylog(n)})$ span.
	The $ikd$-tree inserts batch points into the tree one by one and uses one main thread rebuild imbalanced sub-trees whose size is larger than a given threshold and another single thread rebuild the others, while their parallel reconstruction is not lock-free. The progressive kd-tree in~\cite{jo2017progressive} pipelines the query and insertion, the reconstruction is done in a parallel/interleaved task using a build queue, the most imbalanced tree in the queue is replaced by a new constructed tree.

	\myparagraph{Comparison to \cite{agarwal2016parallel}.}
	Agarwal et al.~\cite{agarwal2016parallel} also showed a parallel \kdtree construction based on sampling.
	Their work is purely theoretical and the main goal is to optimize the number of synchronization rounds in a distributed system.
	Meanwhile, it is unclear how to run multiple constructions simultaneously since their algorithm is bulk synchronized.
	Regarding the sampling phase, our result is also stronger.
	For instance, our oversampling rate is polylogarithmic for the tree height of $\log_2 n+O(1)$ in \cref{lem:tree-height}, while it can be polynomial in \cite{agarwal2016parallel}.
	While theoretically it makes no asymptotic difference, sorting these samples in practice can lead to significant overhead and hamper parallelism.

}

\section{Conclusion\label{sec:conclusion}}
We present \ourtree{}, a parallel \kdtree{} that has strong theoretical guarantees in work, span, and cache complexity for tree construction and batch update,
as well as high performance in practice.
Our main techniques include sampling, multi-level construction, the sieving algorithm, and the weight-balance scheme
to holistically optimize the work, span, cache-efficiency in both constructions and updates.
In this way, our approach relaxes the balancing criteria by a controllable manner, 
which allows for overall good performance considering construction, update, and various queries. 
In our experiments, the \ourtree{} significantly outperforms all the existing parallel \kdtree{} implementations on construction and updates, with competitive or better query performance. 


\section{Acknowledgements}
This work is supported by NSF grants CCF-2103483, TI-2346223 and IIS-2227669, NSF CAREER Awards CCF-2238358 and CCF-2339310, the UCR Regents Faculty Development Award, and the Google Research Scholar Program.

\bibliographystyle{ACM-Reference-Format}
\balance
\bibliography{bib/strings, bib/main}


\begin{thebibliography}{88}


\ifx \showCODEN    \undefined \def \showCODEN     #1{\unskip}     \fi
\ifx \showDOI      \undefined \def \showDOI       #1{#1}\fi
\ifx \showISBNx    \undefined \def \showISBNx     #1{\unskip}     \fi
\ifx \showISBNxiii \undefined \def \showISBNxiii  #1{\unskip}     \fi
\ifx \showISSN     \undefined \def \showISSN      #1{\unskip}     \fi
\ifx \showLCCN     \undefined \def \showLCCN      #1{\unskip}     \fi
\ifx \shownote     \undefined \def \shownote      #1{#1}          \fi
\ifx \showarticletitle \undefined \def \showarticletitle #1{#1}   \fi
\ifx \showURL      \undefined \def \showURL       {\relax}        \fi
\providecommand\bibfield[2]{#2}
\providecommand\bibinfo[2]{#2}
\providecommand\natexlab[1]{#1}
\providecommand\showeprint[2][]{arXiv:#2}

\bibitem[Agarwal et~al\mbox{.}(2016)]%
        {agarwal2016parallel}
\bibfield{author}{\bibinfo{person}{Pankaj Agarwal}, \bibinfo{person}{Kyle Fox}, \bibinfo{person}{Kamesh Munagala}, {and} \bibinfo{person}{Abhinandan Nath}.} \bibinfo{year}{2016}\natexlab{}.
\newblock \showarticletitle{Parallel algorithms for constructing range and nearest-neighbor searching data structures}. In \bibinfo{booktitle}{\emph{Principles of Database Systems (PODS)}}. \bibinfo{pages}{429--440}.
\newblock


\bibitem[Agarwal et~al\mbox{.}(2003)]%
        {agarwal2003cache}
\bibfield{author}{\bibinfo{person}{Pankaj~K Agarwal}, \bibinfo{person}{Lars Arge}, \bibinfo{person}{Andrew Danner}, {and} \bibinfo{person}{Bryan Holland-Minkley}.} \bibinfo{year}{2003}\natexlab{}.
\newblock \showarticletitle{Cache-oblivious data structures for orthogonal range searching}. In \bibinfo{booktitle}{\emph{Proceedings of the nineteenth annual symposium on Computational geometry}}. \bibinfo{pages}{237--245}.
\newblock


\bibitem[Aggarwal and Vitter(1988)]%
        {aggarwal1988input}
\bibfield{author}{\bibinfo{person}{Alok Aggarwal} {and} \bibinfo{person}{S Vitter, Jeffrey}.} \bibinfo{year}{1988}\natexlab{}.
\newblock \showarticletitle{The input/output complexity of sorting and related problems}.
\newblock \bibinfo{journal}{\emph{Commun. ACM}} \bibinfo{volume}{31}, \bibinfo{number}{9} (\bibinfo{year}{1988}), \bibinfo{pages}{1116--1127}.
\newblock


\bibitem[Al-Furajh et~al\mbox{.}(2000)]%
        {al2000parallel}
\bibfield{author}{\bibinfo{person}{I Al-Furajh}, \bibinfo{person}{Srinivas Aluru}, \bibinfo{person}{Sanjay Goil}, {and} \bibinfo{person}{Sanjay Ranka}.} \bibinfo{year}{2000}\natexlab{}.
\newblock \showarticletitle{Parallel construction of multidimensional binary search trees}.
\newblock \bibinfo{journal}{\emph{IEEE Transactions on Parallel and Distributed Systems}} \bibinfo{volume}{11}, \bibinfo{number}{2} (\bibinfo{year}{2000}), \bibinfo{pages}{136--148}.
\newblock


\bibitem[Anderson et~al\mbox{.}(2022)]%
        {anderson2022problem}
\bibfield{author}{\bibinfo{person}{Daniel Anderson}, \bibinfo{person}{Guy~E Blelloch}, \bibinfo{person}{Laxman Dhulipala}, \bibinfo{person}{Magdalen Dobson}, {and} \bibinfo{person}{Yihan Sun}.} \bibinfo{year}{2022}\natexlab{}.
\newblock \showarticletitle{The problem-based benchmark suite (PBBS), V2}. In \bibinfo{booktitle}{\emph{{ACM} Symposium on Principles and Practice of Parallel Programming (PPOPP)}}. \bibinfo{pages}{445--447}.
\newblock


\bibitem[Andersson(1989)]%
        {andersson1989improving}
\bibfield{author}{\bibinfo{person}{Arne Andersson}.} \bibinfo{year}{1989}\natexlab{}.
\newblock \showarticletitle{Improving partial rebuilding by using simple balance criteria}. In \bibinfo{booktitle}{\emph{Workshop on Algorithms and Data Structures (WADS)}}. Springer, \bibinfo{pages}{393--402}.
\newblock


\bibitem[Arge et~al\mbox{.}(2004)]%
        {arge2004cache}
\bibfield{author}{\bibinfo{person}{Lars Arge}, \bibinfo{person}{Gerth~St{\o}lting Brodal}, {and} \bibinfo{person}{Rolf Fagerberg}.} \bibinfo{year}{2004}\natexlab{}.
\newblock \showarticletitle{Cache-Oblivious Data Structures}.
\newblock \bibinfo{journal}{\emph{Handbook of Data Structures and Applications}}  \bibinfo{volume}{27} (\bibinfo{year}{2004}).
\newblock


\bibitem[Arge et~al\mbox{.}(2002)]%
        {arge2002efficient}
\bibfield{author}{\bibinfo{person}{Lars Arge}, \bibinfo{person}{Klaus~H Hinrichs}, \bibinfo{person}{Jan Vahrenhold}, {and} \bibinfo{person}{Jeffrey~Scott Vitter}.} \bibinfo{year}{2002}\natexlab{}.
\newblock \showarticletitle{Efficient bulk operations on dynamic R-trees}.
\newblock \bibinfo{journal}{\emph{Algorithmica}}  \bibinfo{volume}{33} (\bibinfo{year}{2002}), \bibinfo{pages}{104--128}.
\newblock


\bibitem[Arora et~al\mbox{.}(2001)]%
        {arora2001thread}
\bibfield{author}{\bibinfo{person}{Nimar~S Arora}, \bibinfo{person}{Robert~D Blumofe}, {and} \bibinfo{person}{C~Greg Plaxton}.} \bibinfo{year}{2001}\natexlab{}.
\newblock \showarticletitle{Thread scheduling for multiprogrammed multiprocessors}.
\newblock \bibinfo{journal}{\emph{Theory of Computing Systems (TOCS)}} \bibinfo{volume}{34}, \bibinfo{number}{2} (\bibinfo{year}{2001}), \bibinfo{pages}{115--144}.
\newblock


\bibitem[Bender et~al\mbox{.}(2000)]%
        {bender2000cache}
\bibfield{author}{\bibinfo{person}{Michael~A Bender}, \bibinfo{person}{Erik~D Demaine}, {and} \bibinfo{person}{Martin Farach-Colton}.} \bibinfo{year}{2000}\natexlab{}.
\newblock \showarticletitle{Cache-oblivious B-trees}. In \bibinfo{booktitle}{\emph{focs}}. IEEE, \bibinfo{pages}{399--409}.
\newblock


\bibitem[Bentley(1975)]%
        {bentley1975multidimensional}
\bibfield{author}{\bibinfo{person}{Jon~Louis Bentley}.} \bibinfo{year}{1975}\natexlab{}.
\newblock \showarticletitle{Multidimensional binary search trees used for associative searching}.
\newblock \bibinfo{journal}{\emph{Commun. {ACM}}} \bibinfo{volume}{18}, \bibinfo{number}{9} (\bibinfo{year}{1975}), \bibinfo{pages}{509--517}.
\newblock


\bibitem[Bentley(1979)]%
        {bentley1979decomposable}
\bibfield{author}{\bibinfo{person}{Jon~Louis Bentley}.} \bibinfo{year}{1979}\natexlab{}.
\newblock \showarticletitle{Decomposable searching problems}.
\newblock \bibinfo{journal}{\emph{Inform. Process. Lett.}} \bibinfo{volume}{8}, \bibinfo{number}{5} (\bibinfo{year}{1979}), \bibinfo{pages}{244--251}.
\newblock


\bibitem[Blelloch(1989)]%
        {Blelloch89}
\bibfield{author}{\bibinfo{person}{Guy~E. Blelloch}.} \bibinfo{year}{1989}\natexlab{}.
\newblock \showarticletitle{Scans as Primitive Parallel Operations}.
\newblock \bibinfo{journal}{\emph{IEEE Trans. on Comput.}} \bibinfo{volume}{38}, \bibinfo{number}{11} (\bibinfo{year}{1989}).
\newblock


\bibitem[Blelloch et~al\mbox{.}(2020a)]%
        {blelloch2020parlaylib}
\bibfield{author}{\bibinfo{person}{Guy~E. Blelloch}, \bibinfo{person}{Daniel Anderson}, {and} \bibinfo{person}{Laxman Dhulipala}.} \bibinfo{year}{2020}\natexlab{a}.
\newblock \showarticletitle{ParlayLib --- a toolkit for parallel algorithms on shared-memory multicore machines}. In \bibinfo{booktitle}{\emph{{ACM} Symposium on Parallelism in Algorithms and Architectures (SPAA)}}. \bibinfo{pages}{507--509}.
\newblock


\bibitem[Blelloch and Dobson(2022)]%
        {blelloch2022parallel}
\bibfield{author}{\bibinfo{person}{Guy~E Blelloch} {and} \bibinfo{person}{Magdalen Dobson}.} \bibinfo{year}{2022}\natexlab{}.
\newblock \showarticletitle{Parallel Nearest Neighbors in Low Dimensions with Batch Updates}. In \bibinfo{booktitle}{\emph{Algorithm Engineering and Experiments (ALENEX)}}. SIAM, \bibinfo{pages}{195--208}.
\newblock


\bibitem[Blelloch et~al\mbox{.}(2016)]%
        {blelloch2016just}
\bibfield{author}{\bibinfo{person}{Guy~E. Blelloch}, \bibinfo{person}{Daniel Ferizovic}, {and} \bibinfo{person}{Yihan Sun}.} \bibinfo{year}{2016}\natexlab{}.
\newblock \showarticletitle{Just Join for Parallel Ordered Sets}. In \bibinfo{booktitle}{\emph{{ACM} Symposium on Parallelism in Algorithms and Architectures (SPAA)}}.
\newblock


\bibitem[Blelloch et~al\mbox{.}(2020b)]%
        {blelloch2020optimal}
\bibfield{author}{\bibinfo{person}{Guy~E. Blelloch}, \bibinfo{person}{Jeremy~T. Fineman}, \bibinfo{person}{Yan Gu}, {and} \bibinfo{person}{Yihan Sun}.} \bibinfo{year}{2020}\natexlab{b}.
\newblock \showarticletitle{Optimal parallel algorithms in the binary-forking model}. In \bibinfo{booktitle}{\emph{{ACM} Symposium on Parallelism in Algorithms and Architectures (SPAA)}}. \bibinfo{pages}{89--102}.
\newblock


\bibitem[Blelloch et~al\mbox{.}(2010)]%
        {blelloch2010low}
\bibfield{author}{\bibinfo{person}{Guy~E. Blelloch}, \bibinfo{person}{Phillip~B. Gibbons}, {and} \bibinfo{person}{Harsha~Vardhan Simhadri}.} \bibinfo{year}{2010}\natexlab{}.
\newblock \showarticletitle{Low depth cache-oblivious algorithms}. In \bibinfo{booktitle}{\emph{{ACM} Symposium on Parallelism in Algorithms and Architectures (SPAA)}}.
\newblock


\bibitem[Blelloch and Gu(2020)]%
        {BG2020}
\bibfield{author}{\bibinfo{person}{Guy~E. Blelloch} {and} \bibinfo{person}{Yan Gu}.} \bibinfo{year}{2020}\natexlab{}.
\newblock \showarticletitle{Improved Parallel Cache-Oblivious Algorithms for Dynamic Programming}. In \bibinfo{booktitle}{\emph{{SIAM} Symposium on Algorithmic Principles of Computer Systems (APOCS)}}.
\newblock


\bibitem[Blelloch et~al\mbox{.}(2018)]%
        {blelloch2018geometry}
\bibfield{author}{\bibinfo{person}{Guy~E. Blelloch}, \bibinfo{person}{Yan Gu}, \bibinfo{person}{Julian Shun}, {and} \bibinfo{person}{Yihan Sun}.} \bibinfo{year}{2018}\natexlab{}.
\newblock \showarticletitle{Parallel Write-Efficient Algorithms and Data Structures for Computational Geometry}. In \bibinfo{booktitle}{\emph{{ACM} Symposium on Parallelism in Algorithms and Architectures (SPAA)}}.
\newblock


\bibitem[Blonder et~al\mbox{.}(2018)]%
        {blonder2018new}
\bibfield{author}{\bibinfo{person}{Benjamin Blonder}, \bibinfo{person}{Cecina~Babich Morrow}, \bibinfo{person}{Brian Maitner}, \bibinfo{person}{David~J Harris}, \bibinfo{person}{Christine Lamanna}, \bibinfo{person}{Cyrille Violle}, \bibinfo{person}{Brian~J Enquist}, {and} \bibinfo{person}{Andrew~J Kerkhoff}.} \bibinfo{year}{2018}\natexlab{}.
\newblock \showarticletitle{New approaches for delineating n-dimensional hypervolumes}.
\newblock \bibinfo{journal}{\emph{Methods in Ecology and Evolution}} \bibinfo{volume}{9}, \bibinfo{number}{2} (\bibinfo{year}{2018}), \bibinfo{pages}{305--319}.
\newblock


\bibitem[Blumofe and Leiserson(1998)]%
        {BL98}
\bibfield{author}{\bibinfo{person}{Robert~D. Blumofe} {and} \bibinfo{person}{Charles~E. Leiserson}.} \bibinfo{year}{1998}\natexlab{}.
\newblock \showarticletitle{Space-Efficient Scheduling of Multithreaded Computations}.
\newblock \bibinfo{journal}{\emph{{SIAM} J. on Computing}} \bibinfo{volume}{27}, \bibinfo{number}{1} (\bibinfo{year}{1998}).
\newblock


\bibitem[B{\"o}hm et~al\mbox{.}(2001)]%
        {bohm2001searching}
\bibfield{author}{\bibinfo{person}{Christian B{\"o}hm}, \bibinfo{person}{Stefan Berchtold}, {and} \bibinfo{person}{Daniel~A Keim}.} \bibinfo{year}{2001}\natexlab{}.
\newblock \showarticletitle{Searching in high-dimensional spaces: Index structures for improving the performance of multimedia databases}.
\newblock \bibinfo{journal}{\emph{ACM Computing Surveys (CSUR)}} \bibinfo{volume}{33}, \bibinfo{number}{3} (\bibinfo{year}{2001}), \bibinfo{pages}{322--373}.
\newblock


\bibitem[Brown(2014)]%
        {brown2014building}
\bibfield{author}{\bibinfo{person}{Russell~A Brown}.} \bibinfo{year}{2014}\natexlab{}.
\newblock \showarticletitle{Building a balanced kd tree in o (kn log n) time}.
\newblock \bibinfo{journal}{\emph{arXiv preprint arXiv:1410.5420}} (\bibinfo{year}{2014}).
\newblock


\bibitem[Cai et~al\mbox{.}(2021)]%
        {cai2021ikd}
\bibfield{author}{\bibinfo{person}{Yixi Cai}, \bibinfo{person}{Wei Xu}, {and} \bibinfo{person}{Fu Zhang}.} \bibinfo{year}{2021}\natexlab{}.
\newblock \showarticletitle{ikd-tree: An incremental kd tree for robotic applications}.
\newblock \bibinfo{journal}{\emph{arXiv preprint arXiv:2102.10808}} (\bibinfo{year}{2021}).
\newblock


\bibitem[Cao et~al\mbox{.}(2020)]%
        {cao2020improved}
\bibfield{author}{\bibinfo{person}{Yu Cao}, \bibinfo{person}{Xiaojiang Zhang}, \bibinfo{person}{Boheng Duan}, \bibinfo{person}{Wenjing Zhao}, {and} \bibinfo{person}{Huizan Wang}.} \bibinfo{year}{2020}\natexlab{}.
\newblock \showarticletitle{An improved method to build the KD tree based on presorted results}. In \bibinfo{booktitle}{\emph{International Conference on Software Engineering and Service Science (ICSESS)}}. IEEE, \bibinfo{pages}{71--75}.
\newblock


\bibitem[Chatterjee et~al\mbox{.}(2018)]%
        {chatterjee2018concurrent}
\bibfield{author}{\bibinfo{person}{Bapi Chatterjee}, \bibinfo{person}{Ivan Walulya}, {and} \bibinfo{person}{Philippas Tsigas}.} \bibinfo{year}{2018}\natexlab{}.
\newblock \showarticletitle{Concurrent linearizable nearest neighbour search in lock free-kd-tree}. In \bibinfo{booktitle}{\emph{Proceedings of the 19th International Conference on Distributed Computing and Networking}}. \bibinfo{pages}{1--10}.
\newblock


\bibitem[Chen et~al\mbox{.}(2016)]%
        {chen2016gene}
\bibfield{author}{\bibinfo{person}{Yifei Chen}, \bibinfo{person}{Yi Li}, \bibinfo{person}{Rajiv Narayan}, \bibinfo{person}{Aravind Subramanian}, {and} \bibinfo{person}{Xiaohui Xie}.} \bibinfo{year}{2016}\natexlab{}.
\newblock \showarticletitle{Gene expression inference with deep learning}.
\newblock \bibinfo{journal}{\emph{Bioinformatics}} \bibinfo{volume}{32}, \bibinfo{number}{12} (\bibinfo{year}{2016}), \bibinfo{pages}{1832--1839}.
\newblock


\bibitem[Choi et~al\mbox{.}(2010)]%
        {choi2010parallel}
\bibfield{author}{\bibinfo{person}{Byn Choi}, \bibinfo{person}{Rakesh Komuravelli}, \bibinfo{person}{Victor Lu}, \bibinfo{person}{Hyojin Sung}, \bibinfo{person}{Robert~L Bocchino~Jr}, \bibinfo{person}{Sarita~V Adve}, {and} \bibinfo{person}{John~C Hart}.} \bibinfo{year}{2010}\natexlab{}.
\newblock \showarticletitle{Parallel SAH kD tree construction}. In \bibinfo{booktitle}{\emph{High performance graphics}}. Citeseer, \bibinfo{pages}{77--86}.
\newblock


\bibitem[Deng et~al\mbox{.}(2016)]%
        {deng2016efficient}
\bibfield{author}{\bibinfo{person}{Zhenyun Deng}, \bibinfo{person}{Xiaoshu Zhu}, \bibinfo{person}{Debo Cheng}, \bibinfo{person}{Ming Zong}, {and} \bibinfo{person}{Shichao Zhang}.} \bibinfo{year}{2016}\natexlab{}.
\newblock \showarticletitle{Efficient kNN classification algorithm for big data}.
\newblock \bibinfo{journal}{\emph{Neurocomputing}}  \bibinfo{volume}{195} (\bibinfo{year}{2016}), \bibinfo{pages}{143--148}.
\newblock


\bibitem[Dhulipala et~al\mbox{.}(2022)]%
        {dhulipala2022pac}
\bibfield{author}{\bibinfo{person}{Laxman Dhulipala}, \bibinfo{person}{Guy~E. Blelloch}, \bibinfo{person}{Yan Gu}, {and} \bibinfo{person}{Yihan Sun}.} \bibinfo{year}{2022}\natexlab{}.
\newblock \showarticletitle{{PaC-trees}: Supporting Parallel and Compressed Purely-Functional Collections}. In \bibinfo{booktitle}{\emph{ACM Conference on Programming Language Design and Implementation (PLDI)}}.
\newblock


\bibitem[Dong et~al\mbox{.}(2024)]%
        {dong2024parallel}
\bibfield{author}{\bibinfo{person}{Xiaojun Dong}, \bibinfo{person}{Laxman Dhulipala}, \bibinfo{person}{Yan Gu}, {and} \bibinfo{person}{Yihan Sun}.} \bibinfo{year}{2024}\natexlab{}.
\newblock \showarticletitle{Parallel Integer Sort: Theory and Practice}. In \bibinfo{booktitle}{\emph{{ACM} Symposium on Principles and Practice of Parallel Programming (PPOPP)}}.
\newblock


\bibitem[Dong et~al\mbox{.}(2023)]%
        {dong2023high}
\bibfield{author}{\bibinfo{person}{Xiaojun Dong}, \bibinfo{person}{Yunshu Wu}, \bibinfo{person}{Zhongqi Wang}, \bibinfo{person}{Laxman Dhulipala}, \bibinfo{person}{Yan Gu}, {and} \bibinfo{person}{Yihan Sun}.} \bibinfo{year}{2023}\natexlab{}.
\newblock \showarticletitle{High-Performance and Flexible Parallel Algorithms for Semisort and Related Problems}. In \bibinfo{booktitle}{\emph{{ACM} Symposium on Parallelism in Algorithms and Architectures (SPAA)}}.
\newblock


\bibitem[Fonollosa et~al\mbox{.}(2015)]%
        {fonollosa2015reservoir}
\bibfield{author}{\bibinfo{person}{Jordi Fonollosa}, \bibinfo{person}{Sadique Sheik}, \bibinfo{person}{Ram{\'o}n Huerta}, {and} \bibinfo{person}{Santiago Marco}.} \bibinfo{year}{2015}\natexlab{}.
\newblock \showarticletitle{Reservoir computing compensates slow response of chemosensor arrays exposed to fast varying gas concentrations in continuous monitoring}.
\newblock \bibinfo{journal}{\emph{Sensors and Actuators B: Chemical}}  \bibinfo{volume}{215} (\bibinfo{year}{2015}), \bibinfo{pages}{618--629}.
\newblock


\bibitem[Friedman et~al\mbox{.}(1977)]%
        {friedman1977algorithm}
\bibfield{author}{\bibinfo{person}{Jerome~H Friedman}, \bibinfo{person}{Jon~Louis Bentley}, {and} \bibinfo{person}{Raphael~Ari Finkel}.} \bibinfo{year}{1977}\natexlab{}.
\newblock \showarticletitle{An algorithm for finding best matches in logarithmic expected time}.
\newblock \bibinfo{journal}{\emph{ACM Transactions on Mathematical Software (TOMS)}} \bibinfo{volume}{3}, \bibinfo{number}{3} (\bibinfo{year}{1977}), \bibinfo{pages}{209--226}.
\newblock


\bibitem[Frigo et~al\mbox{.}(1999)]%
        {Frigo99}
\bibfield{author}{\bibinfo{person}{Matteo Frigo}, \bibinfo{person}{Charles~E. Leiserson}, \bibinfo{person}{Harald Prokop}, {and} \bibinfo{person}{Sridhar Ramachandran}.} \bibinfo{year}{1999}\natexlab{}.
\newblock \showarticletitle{Cache-Oblivious Algorithms}. In \bibinfo{booktitle}{\emph{{IEEE} Symposium on Foundations of Computer Science (FOCS)}}.
\newblock


\bibitem[Galperin and Rivest(1993)]%
        {galperin1993scapegoat}
\bibfield{author}{\bibinfo{person}{Igal Galperin} {and} \bibinfo{person}{Ronald Rivest}.} \bibinfo{year}{1993}\natexlab{}.
\newblock \showarticletitle{Scapegoat Trees.}. In \bibinfo{booktitle}{\emph{{ACM-SIAM} Symposium on Discrete Algorithms (SODA)}}, Vol.~\bibinfo{volume}{93}. \bibinfo{pages}{165--174}.
\newblock


\bibitem[Gan and Tao(2017)]%
        {gan2017hardness}
\bibfield{author}{\bibinfo{person}{Junhao Gan} {and} \bibinfo{person}{Yufei Tao}.} \bibinfo{year}{2017}\natexlab{}.
\newblock \showarticletitle{On the hardness and approximation of Euclidean DBSCAN}.
\newblock \bibinfo{journal}{\emph{ACM Transactions on Database Systems (TODS)}} \bibinfo{volume}{42}, \bibinfo{number}{3} (\bibinfo{year}{2017}), \bibinfo{pages}{1--45}.
\newblock


\bibitem[Garanzha et~al\mbox{.}(2011)]%
        {garanzha2011grid}
\bibfield{author}{\bibinfo{person}{Kirill Garanzha}, \bibinfo{person}{Simon Premo{\v{z}}e}, \bibinfo{person}{Alexander Bely}, {and} \bibinfo{person}{Vladimir Galaktionov}.} \bibinfo{year}{2011}\natexlab{}.
\newblock \showarticletitle{Grid-based SAH BVH construction on a GPU}.
\newblock \bibinfo{journal}{\emph{The Visual Computer}}  \bibinfo{volume}{27} (\bibinfo{year}{2011}), \bibinfo{pages}{697--706}.
\newblock


\bibitem[Graefe(1993)]%
        {Graefe93}
\bibfield{author}{\bibinfo{person}{Goetz Graefe}.} \bibinfo{year}{1993}\natexlab{}.
\newblock \showarticletitle{Query Evaluation Techniques for Large Databases}.
\newblock \bibinfo{journal}{\emph{ACM Comput. Surv.}} \bibinfo{volume}{25}, \bibinfo{number}{2} (\bibinfo{year}{1993}), \bibinfo{pages}{73--170}.
\newblock


\bibitem[Gu et~al\mbox{.}(2023)]%
        {gu2023parallel}
\bibfield{author}{\bibinfo{person}{Yan Gu}, \bibinfo{person}{Ziyang Men}, \bibinfo{person}{Zheqi Shen}, \bibinfo{person}{Yihan Sun}, {and} \bibinfo{person}{Zijin Wan}.} \bibinfo{year}{2023}\natexlab{}.
\newblock \showarticletitle{Parallel Longest Increasing Subsequence and van Emde Boas Trees}. In \bibinfo{booktitle}{\emph{{ACM} Symposium on Parallelism in Algorithms and Architectures (SPAA)}}.
\newblock


\bibitem[Gu et~al\mbox{.}(2022)]%
        {gu2022analysis}
\bibfield{author}{\bibinfo{person}{Yan Gu}, \bibinfo{person}{Zachary Napier}, {and} \bibinfo{person}{Yihan Sun}.} \bibinfo{year}{2022}\natexlab{}.
\newblock \showarticletitle{Analysis of Work-Stealing and Parallel Cache Complexity}. In \bibinfo{booktitle}{\emph{{SIAM} Symposium on Algorithmic Principles of Computer Systems (APOCS)}}. SIAM, \bibinfo{pages}{46--60}.
\newblock


\bibitem[Guo et~al\mbox{.}(2013)]%
        {guo2013rotational}
\bibfield{author}{\bibinfo{person}{Yulan Guo}, \bibinfo{person}{Ferdous Sohel}, \bibinfo{person}{Mohammed Bennamoun}, \bibinfo{person}{Min Lu}, {and} \bibinfo{person}{Jianwei Wan}.} \bibinfo{year}{2013}\natexlab{}.
\newblock \showarticletitle{Rotational projection statistics for 3D local surface description and object recognition}.
\newblock \bibinfo{journal}{\emph{International journal of computer vision}}  \bibinfo{volume}{105} (\bibinfo{year}{2013}), \bibinfo{pages}{63--86}.
\newblock


\bibitem[G{\"u}ting(1994)]%
        {guting1994introduction}
\bibfield{author}{\bibinfo{person}{Ralf~Hartmut G{\"u}ting}.} \bibinfo{year}{1994}\natexlab{}.
\newblock \showarticletitle{An introduction to spatial database systems}.
\newblock \bibinfo{journal}{\emph{the VLDB Journal}}  \bibinfo{volume}{3} (\bibinfo{year}{1994}), \bibinfo{pages}{357--399}.
\newblock


\bibitem[Guttman(1984)]%
        {guttman1984r}
\bibfield{author}{\bibinfo{person}{Antonin Guttman}.} \bibinfo{year}{1984}\natexlab{}.
\newblock \showarticletitle{R-trees: A dynamic index structure for spatial searching}. In \bibinfo{booktitle}{\emph{ACM SIGMOD International Conference on Management of Data (SIGMOD)}}. \bibinfo{pages}{47--57}.
\newblock


\bibitem[Haklay and Weber(2008)]%
        {haklay2008openstreetmap}
\bibfield{author}{\bibinfo{person}{Mordechai Haklay} {and} \bibinfo{person}{Patrick Weber}.} \bibinfo{year}{2008}\natexlab{}.
\newblock \showarticletitle{Openstreetmap: User-generated street maps}.
\newblock \bibinfo{journal}{\emph{IEEE Pervasive computing}} \bibinfo{volume}{7}, \bibinfo{number}{4} (\bibinfo{year}{2008}), \bibinfo{pages}{12--18}.
\newblock


\bibitem[Hebrail and Berard(2012)]%
        {household}
\bibfield{author}{\bibinfo{person}{Georges Hebrail} {and} \bibinfo{person}{Alice Berard}.} \bibinfo{year}{2012}\natexlab{}.
\newblock \bibinfo{title}{{Individual household electric power consumption}}.
\newblock \bibinfo{howpublished}{UCI Machine Learning Repository}.
\newblock
\newblock
\shownote{{DOI}: https://doi.org/10.24432/C58K54}.


\bibitem[Huerta et~al\mbox{.}(2016)]%
        {huerta2016online}
\bibfield{author}{\bibinfo{person}{Ramon Huerta}, \bibinfo{person}{Thiago Mosqueiro}, \bibinfo{person}{Jordi Fonollosa}, \bibinfo{person}{Nikolai~F Rulkov}, {and} \bibinfo{person}{Irene Rodriguez-Lujan}.} \bibinfo{year}{2016}\natexlab{}.
\newblock \showarticletitle{Online decorrelation of humidity and temperature in chemical sensors for continuous monitoring}.
\newblock \bibinfo{journal}{\emph{Chemometrics and Intelligent Laboratory Systems}}  \bibinfo{volume}{157} (\bibinfo{year}{2016}), \bibinfo{pages}{169--176}.
\newblock


\bibitem[Hunt et~al\mbox{.}(2006)]%
        {hunt2006fast}
\bibfield{author}{\bibinfo{person}{Warren Hunt}, \bibinfo{person}{William~R Mark}, {and} \bibinfo{person}{Gordon Stoll}.} \bibinfo{year}{2006}\natexlab{}.
\newblock \showarticletitle{Fast kd-tree construction with an adaptive error-bounded heuristic}. In \bibinfo{booktitle}{\emph{IEEE Symposium on Interactive Ray Tracing}}. IEEE, \bibinfo{pages}{81--88}.
\newblock


\bibitem[Ichnowski and Alterovitz(2020)]%
        {ichnowski2020concurrent}
\bibfield{author}{\bibinfo{person}{Jeffrey Ichnowski} {and} \bibinfo{person}{Ron Alterovitz}.} \bibinfo{year}{2020}\natexlab{}.
\newblock \showarticletitle{Concurrent nearest-neighbor searching for parallel sampling-based motion planning in SO (3), SE (3), and euclidean spaces}. In \bibinfo{booktitle}{\emph{Algorithmic Foundations of Robotics XIII: Proceedings of the 13th Workshop on the Algorithmic Foundations of Robotics 13}}. Springer, \bibinfo{pages}{69--85}.
\newblock


\bibitem[{Intel Corporation}(2024)]%
        {IntelVTune2024}
\bibfield{author}{\bibinfo{person}{{Intel Corporation}}.} \bibinfo{year}{2024}\natexlab{}.
\newblock \bibinfo{title}{VTune Profiler}.
\newblock
\newblock
\urldef\tempurl%
\url{https://www.intel.com/content/www/us/en/developer/tools/oneapi/vtune-profiler.html}
\showURL{%
\tempurl}


\bibitem[Intel Threading Building Blocks({[n.\,d.]})]%
        {TBB}
Intel Threading Building Blocks \bibinfo{year}{[n.\,d.]}\natexlab{}.
\newblock \bibinfo{title}{Intel Threading Building Blocks ({TBB})}.
\newblock \bibinfo{howpublished}{\url{https://www.threadingbuildingblocks.org}}.
\newblock


\bibitem[Jo et~al\mbox{.}(2017)]%
        {jo2017progressive}
\bibfield{author}{\bibinfo{person}{Jaemin Jo}, \bibinfo{person}{Jinwook Seo}, {and} \bibinfo{person}{Jean-Daniel Fekete}.} \bibinfo{year}{2017}\natexlab{}.
\newblock \showarticletitle{A progressive kd tree for approximate k-nearest neighbors}. In \bibinfo{booktitle}{\emph{2017 IEEE Workshop on Data Systems for Interactive Analysis (DSIA)}}. IEEE, \bibinfo{pages}{1--5}.
\newblock


\bibitem[Kamel and Faloutsos(1992)]%
        {kamel1992parallel}
\bibfield{author}{\bibinfo{person}{Ibrahim Kamel} {and} \bibinfo{person}{Christos Faloutsos}.} \bibinfo{year}{1992}\natexlab{}.
\newblock \showarticletitle{Parallel R-trees}.
\newblock \bibinfo{journal}{\emph{ACM SIGMOD International Conference on Management of Data (SIGMOD)}} \bibinfo{volume}{21}, \bibinfo{number}{2} (\bibinfo{year}{1992}), \bibinfo{pages}{195--204}.
\newblock


\bibitem[Kanungo et~al\mbox{.}(2002)]%
        {kanungo2002efficient}
\bibfield{author}{\bibinfo{person}{Tapas Kanungo}, \bibinfo{person}{David~M Mount}, \bibinfo{person}{Nathan~S Netanyahu}, \bibinfo{person}{Christine~D Piatko}, \bibinfo{person}{Ruth Silverman}, {and} \bibinfo{person}{Angela~Y Wu}.} \bibinfo{year}{2002}\natexlab{}.
\newblock \showarticletitle{An efficient k-means clustering algorithm: Analysis and implementation}.
\newblock \bibinfo{journal}{\emph{IEEE transactions on pattern analysis and machine intelligence}} \bibinfo{volume}{24}, \bibinfo{number}{7} (\bibinfo{year}{2002}), \bibinfo{pages}{881--892}.
\newblock


\bibitem[Li et~al\mbox{.}(2018)]%
        {li2018so}
\bibfield{author}{\bibinfo{person}{Jiaxin Li}, \bibinfo{person}{Ben~M Chen}, {and} \bibinfo{person}{Gim~Hee Lee}.} \bibinfo{year}{2018}\natexlab{}.
\newblock \showarticletitle{So-net: Self-organizing network for point cloud analysis}. In \bibinfo{booktitle}{\emph{Proceedings of the IEEE conference on computer vision and pattern recognition}}. \bibinfo{pages}{9397--9406}.
\newblock


\bibitem[Li et~al\mbox{.}(2019)]%
        {li2019graph}
\bibfield{author}{\bibinfo{person}{Yujia Li}, \bibinfo{person}{Chenjie Gu}, \bibinfo{person}{Thomas Dullien}, \bibinfo{person}{Oriol Vinyals}, {and} \bibinfo{person}{Pushmeet Kohli}.} \bibinfo{year}{2019}\natexlab{}.
\newblock \showarticletitle{Graph matching networks for learning the similarity of graph structured objects}. In \bibinfo{booktitle}{\emph{International conference on machine learning}}. PMLR, \bibinfo{pages}{3835--3845}.
\newblock


\bibitem[Likas et~al\mbox{.}(2003)]%
        {likas2003global}
\bibfield{author}{\bibinfo{person}{Aristidis Likas}, \bibinfo{person}{Nikos Vlassis}, {and} \bibinfo{person}{Jakob~J Verbeek}.} \bibinfo{year}{2003}\natexlab{}.
\newblock \showarticletitle{The global k-means clustering algorithm}.
\newblock \bibinfo{journal}{\emph{Pattern recognition}} \bibinfo{volume}{36}, \bibinfo{number}{2} (\bibinfo{year}{2003}), \bibinfo{pages}{451--461}.
\newblock


\bibitem[Ma et~al\mbox{.}(2018)]%
        {ma2018query}
\bibfield{author}{\bibinfo{person}{Lin Ma}, \bibinfo{person}{Dana Van~Aken}, \bibinfo{person}{Ahmed Hefny}, \bibinfo{person}{Gustavo Mezerhane}, \bibinfo{person}{Andrew Pavlo}, {and} \bibinfo{person}{Geoffrey~J Gordon}.} \bibinfo{year}{2018}\natexlab{}.
\newblock \showarticletitle{Query-based workload forecasting for self-driving database management systems}. In \bibinfo{booktitle}{\emph{Proceedings of the 2018 International Conference on Management of Data}}. \bibinfo{pages}{631--645}.
\newblock


\bibitem[Malkov and Yashunin(2018)]%
        {malkov2018efficient}
\bibfield{author}{\bibinfo{person}{Yu~A Malkov} {and} \bibinfo{person}{Dmitry~A Yashunin}.} \bibinfo{year}{2018}\natexlab{}.
\newblock \showarticletitle{Efficient and robust approximate nearest neighbor search using hierarchical navigable small world graphs}.
\newblock \bibinfo{journal}{\emph{IEEE transactions on pattern analysis and machine intelligence}} \bibinfo{volume}{42}, \bibinfo{number}{4} (\bibinfo{year}{2018}), \bibinfo{pages}{824--836}.
\newblock


\bibitem[McInnes and Healy(2017)]%
        {mcinnes2017accelerated}
\bibfield{author}{\bibinfo{person}{Leland McInnes} {and} \bibinfo{person}{John Healy}.} \bibinfo{year}{2017}\natexlab{}.
\newblock \showarticletitle{Accelerated hierarchical density based clustering}. In \bibinfo{booktitle}{\emph{2017 IEEE International Conference on Data Mining Workshops (ICDMW)}}. IEEE, \bibinfo{pages}{33--42}.
\newblock


\bibitem[Men et~al\mbox{.}(2024)]%
        {pkdCode}
\bibfield{author}{\bibinfo{person}{Ziyang Men}, \bibinfo{person}{Zheqi Shen}, \bibinfo{person}{Yan Gu}, {and} \bibinfo{person}{Yihan Sun}.} \bibinfo{year}{2024}\natexlab{}.
\newblock \bibinfo{title}{Parallel $k$d-tree with Batch Updates}.
\newblock \bibinfo{howpublished}{\url{https://github.com/ucrparlay/Pkd-tree}}.
\newblock


\bibitem[Muja and Lowe(2014)]%
        {muja2014scalable}
\bibfield{author}{\bibinfo{person}{Marius Muja} {and} \bibinfo{person}{David~G Lowe}.} \bibinfo{year}{2014}\natexlab{}.
\newblock \showarticletitle{Scalable nearest neighbor algorithms for high dimensional data}.
\newblock \bibinfo{journal}{\emph{IEEE transactions on pattern analysis and machine intelligence}} \bibinfo{volume}{36}, \bibinfo{number}{11} (\bibinfo{year}{2014}), \bibinfo{pages}{2227--2240}.
\newblock


\bibitem[Overmars(1983)]%
        {overmars1983design}
\bibfield{author}{\bibinfo{person}{Mark~H Overmars}.} \bibinfo{year}{1983}\natexlab{}.
\newblock \bibinfo{booktitle}{\emph{The design of dynamic data structures}}. Vol.~\bibinfo{volume}{156}.
\newblock \bibinfo{publisher}{Springer Science \& Business Media}.
\newblock


\bibitem[Overmars and Van~Leeuwen(1981)]%
        {overmars1981maintenance}
\bibfield{author}{\bibinfo{person}{Mark~H Overmars} {and} \bibinfo{person}{Jan Van~Leeuwen}.} \bibinfo{year}{1981}\natexlab{}.
\newblock \showarticletitle{Maintenance of configurations in the plane}.
\newblock \bibinfo{journal}{\emph{Journal of computer and System Sciences}} \bibinfo{volume}{23}, \bibinfo{number}{2} (\bibinfo{year}{1981}), \bibinfo{pages}{166--204}.
\newblock


\bibitem[Prasad et~al\mbox{.}(2015)]%
        {prasad2015gpu}
\bibfield{author}{\bibinfo{person}{Sushil~K Prasad}, \bibinfo{person}{Michael McDermott}, \bibinfo{person}{Xi He}, {and} \bibinfo{person}{Satish Puri}.} \bibinfo{year}{2015}\natexlab{}.
\newblock \showarticletitle{GPU-based Parallel R-tree Construction and Querying}. In \bibinfo{booktitle}{\emph{2015 IEEE International Parallel and Distributed Processing Symposium Workshop}}. IEEE, \bibinfo{pages}{618--627}.
\newblock


\bibitem[Procopiuc et~al\mbox{.}(2003)]%
        {procopiuc2003bkd}
\bibfield{author}{\bibinfo{person}{Octavian Procopiuc}, \bibinfo{person}{Pankaj~K Agarwal}, \bibinfo{person}{Lars Arge}, {and} \bibinfo{person}{Jeffrey~Scott Vitter}.} \bibinfo{year}{2003}\natexlab{}.
\newblock \showarticletitle{Bkd-tree: A dynamic scalable kd-tree}. In \bibinfo{booktitle}{\emph{International Symposium on Spatial and Temporal Databases (SSTD)}}. Springer, \bibinfo{pages}{46--65}.
\newblock


\bibitem[Rajasekaran and Reif(1989)]%
        {RR89}
\bibfield{author}{\bibinfo{person}{Sanguthevar Rajasekaran} {and} \bibinfo{person}{John~H. Reif}.} \bibinfo{year}{1989}\natexlab{}.
\newblock \showarticletitle{Optimal and sublogarithmic time randomized parallel sorting algorithms}.
\newblock \bibinfo{journal}{\emph{{SIAM} J. on Computing}} \bibinfo{volume}{18}, \bibinfo{number}{3} (\bibinfo{year}{1989}), \bibinfo{pages}{594--607}.
\newblock


\bibitem[Reif and Neumann(2022)]%
        {reif2022scalable}
\bibfield{author}{\bibinfo{person}{Maximilian Reif} {and} \bibinfo{person}{Thomas Neumann}.} \bibinfo{year}{2022}\natexlab{}.
\newblock \showarticletitle{A scalable and generic approach to range joins}.
\newblock \bibinfo{journal}{\emph{Proceedings of the VLDB Endowment}} \bibinfo{volume}{15}, \bibinfo{number}{11} (\bibinfo{year}{2022}), \bibinfo{pages}{3018--3030}.
\newblock


\bibitem[Robinson(1981)]%
        {robinson1981kdb}
\bibfield{author}{\bibinfo{person}{John~T Robinson}.} \bibinfo{year}{1981}\natexlab{}.
\newblock \showarticletitle{The KDB-tree: a search structure for large multidimensional dynamic indexes}. In \bibinfo{booktitle}{\emph{ACM SIGMOD International Conference on Management of Data (SIGMOD)}}. \bibinfo{pages}{10--18}.
\newblock


\bibitem[Sch{\"a}ling(2011)]%
        {schaling2011boost}
\bibfield{author}{\bibinfo{person}{Boris Sch{\"a}ling}.} \bibinfo{year}{2011}\natexlab{}.
\newblock \bibinfo{booktitle}{\emph{The boost C++ libraries}}.
\newblock \bibinfo{publisher}{Boris Sch{\"a}ling}.
\newblock


\bibitem[Schubert et~al\mbox{.}(2017)]%
        {schubert2017dbscan}
\bibfield{author}{\bibinfo{person}{Erich Schubert}, \bibinfo{person}{J{\"o}rg Sander}, \bibinfo{person}{Martin Ester}, \bibinfo{person}{Hans~Peter Kriegel}, {and} \bibinfo{person}{Xiaowei Xu}.} \bibinfo{year}{2017}\natexlab{}.
\newblock \showarticletitle{DBSCAN revisited, revisited: why and how you should (still) use DBSCAN}.
\newblock \bibinfo{journal}{\emph{ACM Transactions on Database Systems (TODS)}} \bibinfo{volume}{42}, \bibinfo{number}{3} (\bibinfo{year}{2017}), \bibinfo{pages}{1--21}.
\newblock


\bibitem[Scoville et~al\mbox{.}(2007)]%
        {scoville2007cosmic}
\bibfield{author}{\bibinfo{person}{Nick Scoville}, \bibinfo{person}{H Aussel}, \bibinfo{person}{Marcella Brusa}, \bibinfo{person}{Peter Capak}, \bibinfo{person}{C~Marcella Carollo}, \bibinfo{person}{M Elvis}, \bibinfo{person}{M Giavalisco}, \bibinfo{person}{L Guzzo}, \bibinfo{person}{G Hasinger}, \bibinfo{person}{C Impey}, {et~al\mbox{.}}} \bibinfo{year}{2007}\natexlab{}.
\newblock \showarticletitle{The cosmic evolution survey (COSMOS): overview}.
\newblock \bibinfo{journal}{\emph{The Astrophysical Journal Supplement Series}} \bibinfo{volume}{172}, \bibinfo{number}{1} (\bibinfo{year}{2007}), \bibinfo{pages}{1}.
\newblock


\bibitem[Shakhnarovich et~al\mbox{.}(2005)]%
        {shakhnarovich2005nearest}
\bibfield{author}{\bibinfo{person}{Gregory Shakhnarovich}, \bibinfo{person}{Trevor Darrell}, {and} \bibinfo{person}{Piotr Indyk}.} \bibinfo{year}{2005}\natexlab{}.
\newblock \bibinfo{booktitle}{\emph{Nearest-neighbor methods in learning and vision: theory and practice}}. Vol.~\bibinfo{volume}{3}.
\newblock \bibinfo{publisher}{MIT press Cambridge, MA, USA:}.
\newblock


\bibitem[Shevtsov et~al\mbox{.}(2007)]%
        {shevtsov2007highly}
\bibfield{author}{\bibinfo{person}{Maxim Shevtsov}, \bibinfo{person}{Alexei Soupikov}, {and} \bibinfo{person}{Alexander Kapustin}.} \bibinfo{year}{2007}\natexlab{}.
\newblock \showarticletitle{Highly parallel fast KD-tree construction for interactive ray tracing of dynamic scenes}. In \bibinfo{booktitle}{\emph{Computer Graphics Forum}}, Vol.~\bibinfo{volume}{26}. Wiley Online Library, \bibinfo{pages}{395--404}.
\newblock


\bibitem[Silva et~al\mbox{.}(2013)]%
        {silva2013data}
\bibfield{author}{\bibinfo{person}{Jonathan~A Silva}, \bibinfo{person}{Elaine~R Faria}, \bibinfo{person}{Rodrigo~C Barros}, \bibinfo{person}{Eduardo~R Hruschka}, \bibinfo{person}{Andr{\'e} CPLF~de Carvalho}, {and} \bibinfo{person}{Jo{\~a}~o Gama}.} \bibinfo{year}{2013}\natexlab{}.
\newblock \showarticletitle{Data stream clustering: A survey}.
\newblock \bibinfo{journal}{\emph{ACM Computing Surveys (CSUR)}} \bibinfo{volume}{46}, \bibinfo{number}{1} (\bibinfo{year}{2013}), \bibinfo{pages}{1--31}.
\newblock


\bibitem[Sleator and Tarjan(1985)]%
        {Sleator85}
\bibfield{author}{\bibinfo{person}{Daniel~D. Sleator} {and} \bibinfo{person}{Robert~E. Tarjan}.} \bibinfo{year}{1985}\natexlab{}.
\newblock \showarticletitle{Amortized Efficiency of List Update and Paging Rules}.
\newblock \bibinfo{journal}{\emph{Commun. {ACM}}} \bibinfo{volume}{28}, \bibinfo{number}{2} (\bibinfo{year}{1985}), \bibinfo{numpages}{7}~pages.
\newblock
\showISSN{0001-0782}
\urldef\tempurl%
\url{https://doi.org/10.1145/2786.2793}
\showDOI{\tempurl}


\bibitem[Smith et~al\mbox{.}(2016)]%
        {smith2016structure}
\bibfield{author}{\bibinfo{person}{Mark~William Smith}, \bibinfo{person}{Jonathan~L Carrivick}, {and} \bibinfo{person}{Duncan~J Quincey}.} \bibinfo{year}{2016}\natexlab{}.
\newblock \showarticletitle{Structure from motion photogrammetry in physical geography}.
\newblock \bibinfo{journal}{\emph{Progress in physical geography}} \bibinfo{volume}{40}, \bibinfo{number}{2} (\bibinfo{year}{2016}), \bibinfo{pages}{247--275}.
\newblock


\bibitem[Sun et~al\mbox{.}(2018)]%
        {sun2018pam}
\bibfield{author}{\bibinfo{person}{Yihan Sun}, \bibinfo{person}{Daniel Ferizovic}, {and} \bibinfo{person}{Guy~E Blelloch}.} \bibinfo{year}{2018}\natexlab{}.
\newblock \showarticletitle{{PAM}: Parallel Augmented Maps}. In \bibinfo{booktitle}{\emph{{ACM} Symposium on Principles and Practice of Parallel Programming (PPOPP)}}.
\newblock


\bibitem[Tang et~al\mbox{.}(2016)]%
        {tang2016visualizing}
\bibfield{author}{\bibinfo{person}{Jian Tang}, \bibinfo{person}{Jingzhou Liu}, \bibinfo{person}{Ming Zhang}, {and} \bibinfo{person}{Qiaozhu Mei}.} \bibinfo{year}{2016}\natexlab{}.
\newblock \showarticletitle{Visualizing large-scale and high-dimensional data}. In \bibinfo{booktitle}{\emph{Proceedings of the 25th international conference on world wide web}}. \bibinfo{pages}{287--297}.
\newblock


\bibitem[{The CGAL Project}(2020)]%
        {cgal51}
\bibfield{author}{\bibinfo{person}{{The CGAL Project}}.} \bibinfo{year}{2020}\natexlab{}.
\newblock \bibinfo{booktitle}{\emph{{CGAL} User and Reference Manual} (\bibinfo{edition}{{5.1}} ed.)}.
\newblock \bibinfo{publisher}{{CGAL Editorial Board}}.
\newblock
\urldef\tempurl%
\url{https://doc.cgal.org/5.1/Manual/packages.html}
\showURL{%
\tempurl}


\bibitem[van Kreveld and Overmars(1991)]%
        {van1991divided}
\bibfield{author}{\bibinfo{person}{Marc~J van Kreveld} {and} \bibinfo{person}{Mark~H Overmars}.} \bibinfo{year}{1991}\natexlab{}.
\newblock \showarticletitle{Divided kd trees}.
\newblock \bibinfo{journal}{\emph{Algorithmica}}  \bibinfo{volume}{6} (\bibinfo{year}{1991}), \bibinfo{pages}{840--858}.
\newblock


\bibitem[Wang et~al\mbox{.}(2022)]%
        {wang2022pargeo}
\bibfield{author}{\bibinfo{person}{Yiqiu Wang}, \bibinfo{person}{Shangdi Yu}, \bibinfo{person}{Laxman Dhulipala}, \bibinfo{person}{Yan Gu}, {and} \bibinfo{person}{Julian Shun}.} \bibinfo{year}{2022}\natexlab{}.
\newblock \showarticletitle{ParGeo: a library for parallel computational geometry}. In \bibinfo{booktitle}{\emph{European Symposium on Algorithms (ESA)}}.
\newblock


\bibitem[Wang et~al\mbox{.}(2021)]%
        {wang2021fast}
\bibfield{author}{\bibinfo{person}{Yiqiu Wang}, \bibinfo{person}{Shangdi Yu}, \bibinfo{person}{Yan Gu}, {and} \bibinfo{person}{Julian Shun}.} \bibinfo{year}{2021}\natexlab{}.
\newblock \showarticletitle{Fast parallel algorithms for euclidean minimum spanning tree and hierarchical spatial clustering}. In \bibinfo{booktitle}{\emph{ACM SIGMOD International Conference on Management of Data (SIGMOD)}}. \bibinfo{pages}{1982--1995}.
\newblock


\bibitem[Yamasaki et~al\mbox{.}(2018)]%
        {yamasaki2018parallelizing}
\bibfield{author}{\bibinfo{person}{Hiroki Yamasaki}, \bibinfo{person}{Atsushi Nunome}, {and} \bibinfo{person}{Hiroaki Hirata}.} \bibinfo{year}{2018}\natexlab{}.
\newblock \showarticletitle{Parallelizing the Construction of a k-Dimensional Tree}. In \bibinfo{booktitle}{\emph{2018 IEEE International Conference on Big Data, Cloud Computing, Data Science \& Engineering (BCD)}}. IEEE, \bibinfo{pages}{23--30}.
\newblock


\bibitem[You et~al\mbox{.}(2013)]%
        {you2013parallel}
\bibfield{author}{\bibinfo{person}{Simin You}, \bibinfo{person}{Jianting Zhang}, {and} \bibinfo{person}{Le Gruenwald}.} \bibinfo{year}{2013}\natexlab{}.
\newblock \showarticletitle{Parallel spatial query processing on gpus using r-trees}. In \bibinfo{booktitle}{\emph{Proceedings of the 2Nd ACM SIGSPATIAL international workshop on analytics for big geospatial data}}. \bibinfo{pages}{23--31}.
\newblock


\bibitem[Yue et~al\mbox{.}(2016)]%
        {yue2016healthcare}
\bibfield{author}{\bibinfo{person}{Xiao Yue}, \bibinfo{person}{Huiju Wang}, \bibinfo{person}{Dawei Jin}, \bibinfo{person}{Mingqiang Li}, {and} \bibinfo{person}{Wei Jiang}.} \bibinfo{year}{2016}\natexlab{}.
\newblock \showarticletitle{Healthcare data gateways: found healthcare intelligence on blockchain with novel privacy risk control}.
\newblock \bibinfo{journal}{\emph{Journal of medical systems}}  \bibinfo{volume}{40} (\bibinfo{year}{2016}), \bibinfo{pages}{1--8}.
\newblock


\bibitem[Zheng et~al\mbox{.}(2008)]%
        {geolife}
\bibfield{author}{\bibinfo{person}{Yu Zheng}, \bibinfo{person}{Like Liu}, \bibinfo{person}{Longhao Wang}, {and} \bibinfo{person}{Xing Xie}.} \bibinfo{year}{2008}\natexlab{}.
\newblock \showarticletitle{Learning transportation mode from raw gps data for geographic applications on the web}. In \bibinfo{booktitle}{\emph{International World Wide Web Conference (WWW)}}. \bibinfo{pages}{247--256}.
\newblock


\end{thebibliography}
\iffullversion{
	\clearpage
	\appendix

\section{Proof for Tree Height\label{app:treeHeightSupport}}
\begin{lemma}
	Function $f(n)=-\log n/\log(1/2+1/\log n)-\log n=O(1)$ for $n> 4$.
\end{lemma}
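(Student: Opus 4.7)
The plan is to substitute $t = \log_2 n$ and recast $f$ in a form where an $O(1)$ bound is manifest. Using the identity $\tfrac12 + \tfrac1t = \tfrac{t+2}{2t}$, I would rewrite $-\log_2(\tfrac12+\tfrac1t) = 1 - \log_2(1+2/t)$ and denote this positive quantity by $g(t)$. Then
$$ f(n) \;=\; \frac{t}{g(t)} - t \;=\; \frac{t\,(1-g(t))}{g(t)} \;=\; \frac{t\log_2(1+2/t)}{g(t)}. $$

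Next I would bound the numerator and denominator separately. For the numerator, the elementary inequality $\ln(1+x)\le x$ with $x = 2/t$ yields $\log_2(1+2/t) \le 2/(t\ln 2)$, hence $t\log_2(1+2/t) \le 2/\ln 2$ uniformly in $t$. For the denominator, $g(t)$ is increasing in $t$ (because $\log_2(1+2/t)$ is decreasing) and $g(t)\to 1$ as $t\to\infty$; fixing any threshold $t_0$ with $\log_2(1+2/t_0) \le 1/2$, for instance $t_0 = 5$ since $\log_2(7/5) < 1/2$, gives $g(t)\ge 1/2$ for all $t \ge t_0$. Combining, $f(n) \le 4/\ln 2 < 6$ for every $n \ge 2^{t_0} = 32$. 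The finite range $4 < n < 32$ contributes only a constant because only finitely many values are involved and each is finite (equivalently, in this regime the tree has constant height and $\delta = h - \log n$ is bounded trivially).

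The main subtlety, rather than a technical obstacle, is that $g(t)\to 0$ as $t\to 2^+$, so $f(n)$ is genuinely unbounded as $n\to 4^+$; the $O(1)$ claim must therefore be understood asymptotically (equivalently, as holding uniformly once $n$ exceeds an absolute constant, with the small-$n$ regime absorbed into the implicit constant via the trivial bound above). No derivative calculation is needed for this bound; however, if one wants to follow the paper's suggested route of showing monotonic decrease, the key computation is $g'(t) = 2/\bigl(t(t+2)\ln 2\bigr)$, which reduces $f'(t)<0$ to the inequality $g(t)\log(1+2/t) < 2/(t+2)$, verifiable for $t$ beyond a small threshold and recovering the sharper asymptotic constant $f(t)\to 2/\ln 2$ via a Taylor expansion at infinity.
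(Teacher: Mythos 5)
Your proof is correct, and it takes a genuinely different and more direct route than the paper's. The paper argues indirectly: it shows $f(t)>0$ for $t>2$ and then proves $f$ is monotonically decreasing through a fairly involved chain of derivative computations (substituting $k=(t+2)/t$, then differentiating repeatedly to eliminate the logarithms), so that $f$ is dominated by its value at the smallest admissible argument. You instead rewrite $f$ as the ratio $t\log_2(1+2/t)/g(t)$ and bound numerator and denominator separately using only the elementary inequality $\ln(1+x)\le x$, which yields an explicit constant ($4/\ln 2$ for $n\ge 32$) with no calculus at all; the algebraic identities ($\tfrac12+\tfrac1t=\tfrac{t+2}{2t}$, $1-g(t)=\log_2(1+2/t)$), the monotonicity of $g$, and the choice $t_0=5$ all check out. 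Your approach also surfaces a point the paper elides: as a function of a real variable, $f(n)\to\infty$ as $n\to 4^+$, so the paper's opening assertion that ``$f(t)=O(1)$ when $t\to 2^+$'' is not literally true, and the statement must be read as a uniform bound once $n$ exceeds a fixed constant, with the finitely many small integer values absorbed into the implicit constant --- exactly as you handle it. The only nitpick concerns your closing aside: the reduction of $f'(t)<0$ to $g(t)\ln(1+2/t)<2/(t+2)$ is right (reading the logarithm there as natural), but that inequality does not follow from $\ln(1+x)\le x$ alone, since that only gives the weaker bound $2/t$; one needs a second-order estimate of $\ln(1+2/t)$, which is consistent with your ``beyond a small threshold'' caveat and is in any case not needed for your main argument.
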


\begin{proof}
	Let $t=\log n$, we have:
	\begin{align*}
		f(t) & = - \frac{t}{\log(t+2)-\log 2t}- t                \\
		     & = - \left(\frac{t}{\log(t+2)-\log t-1} + t\right) \\
		     & = -h(t)
	\end{align*}
	Clearly, $f(t)=O(1)$ when $t\to 2^+$. We then show $f(t)>0$ holds for $t> 2$, which is
	:
	\begin{align*}
		\log\frac{t+2}{2t}>\log\frac{1}{2} & \implies \frac{1}{\log(t+2)-\log2t}+1<0 \\
		                                   & \implies h(t)<0
	\end{align*}
	as desired.
	The remaining is to show $f(t)$ is decreasing, which equivalents to show $h(t)$ is increasing over $t>2$. The derivative of $h(t)$ is
	\begin{align*}
		h'(t) & = \frac{\log(t+2)-\log t -1 -t\left(\frac{c}{t+2}-\frac{c}{t}\right)}{\left(\log(t+2)-\log t-1\right)^2}+1 \\
		      & = \frac{\log\frac{t+2}{t}-1 - \frac{c\cdot t}{t+2}+c}{\left(\log\frac{t+2}{t}-1\right)^2}+1
	\end{align*}
	where $c=1/\ln 2$. Let $k = (t+2)/t$. We wish to show that $h'(k)>0$ holds for $k\to 1^+$, namely,
	\begin{align*}
		                     & \frac{\log k-1 -c/k +c}{\left(\log k -1\right)^2} +1 >0 \\
		\Longleftrightarrow~ & \log^2 k - \log k -c/k+c>0                              \\
		\Longleftrightarrow~ & g(k) >0
	\end{align*}
	Since $g(1^+) > 0$, therefore, it is sufficient to show $g'(k)>0$ holds for $k$, i.e.,
	\begin{align*}
		                     & 2c^2\cdot \ln k/k -c/k+c/k^2 >0 \\
		\Longleftrightarrow~ & 2c\cdot k\ln k-k >-1            \\
		\Longleftrightarrow~ & k(2c\ln k -1) > -1
	\end{align*}
	The function w.r.t $k$ in LHS is increasing and equals to $-1$ (the RHS) when $k=1$. Proof follows then.
\end{proof}

\section{Proof for Batch Updates\label{app:batchUpdate}}
\begin{theorem}[Updates]
	A batch update (insertions or deletions) of a batch of size $m$ on a \ourkdtree of size $n$ has $O(\log n\log_M n)$ span \whp; the amortized work and cache complexity per element in the batch is $O(\log^2 n)$ and $O(\log(n/m)+(\log n\log_M n)/B)$ \whp, respectively.
\end{theorem}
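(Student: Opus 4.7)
The plan is to decompose each bound into two ingredients: the per-level cost of traversing the skeleton and sieving the batch, and the (potentially expensive) cost of reconstructing subtrees that become unbalanced. The traversal/sieve parts follow directly from \cref{thm:constr} and \cref{thm:constr-polylog}, while the reconstruction parts require an amortization argument based on \cref{lem:sampling}.

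For the span, I would observe that the batch update algorithm recurses on the skeleton levels using the \textsc{Sieve} primitive from \cref{algo:constr}. On any root-to-leaf path of the recursion, a full reconstruction (\cref{line:update-build-tree}) can fire at most once, since the call returns immediately after \build{}. Hence the span is bounded by the sum of one sieving step per level of the recursion plus one \build{} call, each of which has the bounds established in \cref{thm:constr} or \cref{thm:constr-polylog}. Combined with the fact that the recursion has depth $O(\log n)/\lambda$ \whp{} (by \cref{lem:tree-height}), this yields the stated span.

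The work and cache bounds follow from amortization. The key observation is: whenever a subtree of size $n'$ is rebuilt, \cref{lem:sampling} with $\epsilon=\balpara$ guarantees that after the rebuild the two children sizes are within $(1/2 \pm \balpara/4)\cdot n'$ \whp, so at least $\Theta(\balpara\cdot n')$ further insertions or deletions into that subtree are required before the weight-balance condition can be violated again. The cost of the rebuild is $O(n'\log n')$ work and $O((n'/B)\log_M n')$ cache misses by \cref{thm:constr}. Charging this cost to the $\Theta(\balpara\cdot n')$ updates that triggered it gives amortized $O(\log n')$ work and $O((\log_M n')/B)$ cache misses per update on that single tree level. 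Summing over the $O(\log n)$ tree levels (each point can be charged at most once per level on its root-to-leaf path) produces the stated per-element bounds of $O(\log^2 n)$ work and $O((\log n \log_M n)/B)$ cache misses. For the cache bound there is an additional additive term for locating the affected leaves: traversing $m$ leaves in a tree of size $n$ touches $O(m\log(n/m))$ nodes~\cite{blelloch2016just}, contributing the $O(\log(n/m))$ amortized term per element.

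The main obstacle is making the amortization argument rigorous in the batch-parallel setting with high probability. Unlike the sequential point-update case, multiple rebuilds may be triggered simultaneously in a single batch, and the bound from \cref{lem:sampling} holds \whp{} for each rebuild; a union bound over all rebuilds is needed, which is fine since the total number of rebuilds is polynomially bounded in $n$. One also has to be careful that the ``potential'' released by a rebuild of size $n'$ is at least $\Omega(\balpara\cdot n')$ even when the batch contains a mix of insertions and deletions; this requires noting that deletions of points outside the subtree do not count, which is handled by the two-round approach in batch deletion that first sieves and discards absent points before assessing imbalance.
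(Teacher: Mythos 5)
Your proposal is correct and follows essentially the same route as the paper's proof: span via one sieve per skeleton level plus at most one reconstruction on any root-to-leaf path, and work/cache via charging each rebuild of a size-$n'$ subtree (costing $O(n'\log n')$ work and $O((n'/B)\log_M n')$ cache misses by \cref{thm:constr}) to the $\Theta(\balpara n')$ updates that \cref{lem:sampling} guarantees must accumulate before the next rebuild, summed over $O(\log n)$ levels, plus the $O(m\log(n/m))$ leaf-location term from~\cite{blelloch2016just}. Your added remarks on the union bound over rebuilds and on discarding absent points before assessing imbalance in deletions are sound refinements of details the paper leaves implicit.
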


\begin{proof}
	We will start with the span bound.
	According to \cref{thm:constr-polylog}, the sieve process and trees rebuilding (rebalancing) all have $O(\log n\log_M n)$ span \whp.
	Note that the tree rebuilding (\cref{line:update-build-tree}) can only be triggered once on any tree path.
	The span for other parts is $O(\log n)$---the \textsc{InsertToSkeleton} function can be recursively call for $\levels = O(\log n)$ levels each with constant cost.
	In total, the span is the same as the construction algorithm, since in the extreme case, the entire tree can be rebuilt.

	Then we show the work bound.
	The cost to traverse the \ourtree and find the corresponding leaves to update is $O(\log n)$ per point \whp{}, proportional to the tree height.
	Once a rebuild is triggered (on \cref{line:update-build-tree}), the cost is $O(n'\log n')$ where $n'$ is the subtree size.
	After that (or the original construction), each subtree will contains $(1/2\pm \sqrt{(12c\log n)/\os}/4)n'=(1/2\pm \balpara/4)n'$ points \whp (\cref{lem:sampling}).
	We need to insert at least another $3\balpara n'/4$ points for this subtree to be sufficiently imbalance that triggers the next rebuilding of this subtree.
	The amortized cost per point in this subtree is hence $O(\log n'/\balpara) = O(\log n')$ on this tree node assuming $\balpara$ is a constant.
	Note that \ourtree has the tree height of $O(\log n)$, so overall amortized work per inserted/deleted point is $O(\log^2 n)$.

	We can analyze the cache complexity similarly.
	We first show the rebuilding cost.
	For a subtree of size $n'$, the cost is $O((n'/B)\log_M n')$ (\cref{thm:constr}).
	The amortized cost per updated point, using the same analysis above, is $O((1/B)\log_M n')$.
	Again since the tree height is $O(\log n)$, the overall amortized work per inserted/deleted point is $O((\log n\log_M n)/B)$.
	Then, we consider the cost to traverse the tree and find the corresponding leaves to update.
	Finding $m$ leaves in a tree of size $n$ will touch $O(m\log(n/m))$ tree nodes~\cite{blelloch2016just}, so the amortized block transfer per point is $O(\log(n/m))$.
	Putting both cost together gives the stated cache complexity.
\end{proof}

\hide{
	\begin{proof}
		Let $t=\log n$, proving the lemma is to show function
		\begin{align*}
			f(t) & = - \frac{t}{\log(t-2)-\log 2t}- t \\
			     & = \frac{t}{1+\log t-\log(t-2)} - t
		\end{align*}
		is decreasing when $t\to\infty$, which is equivalent to prove that $f'(t)<0$ holds for $t\to\infty$, namely:
		\begin{align*}
			f'(t) = \frac{1+\log \frac{t}{t-2}-c+c\cdot \frac{t}{t-2}}{(1+\log\frac{t}{t-2})^2}-1<0\label{firstDiff}
		\end{align*}
		where $c=1/\ln2$. Let $k=t/(t-2)$, then \cref{firstDiff} can be transformed to:
		\begin{align*}
			1+\log k-c+c\cdot k<(1+\log k)^2
		\end{align*}
		which equivalents to show:
		\begin{align*}
			g(k) = k -c\ln^2k-\ln k<1 \label{secondIneqaul}
		\end{align*}
		holds for $k\to1^+$. Note that $g(k) = 1$ when $k=1$, thus it is sufficient to show $g'(k)<0$ for $k\in(1,1+\epsilon)$, where $\epsilon\to 0$. Formally:
		\begin{align*}
			g'(k) = 1-\frac{2c}{k}\ln k-\frac{1}{k} < 0
		\end{align*}
		Since $k>1$, this equivalents to show:
		\begin{align*}
			h(k) = k-2c\ln k  < 1
		\end{align*}
		Again, when $k=1$, $h(k) = 0$, we then show $h'(k)< 0$ holds, clearly:
		\begin{align*}
			\lim_{k\to1^+}h'(k) = \lim_{k\to1^+}1-\frac{2c}{k}= 1-\frac{2}{\ln 2} < 0
		\end{align*}
		Proof finish.
	\end{proof}
}

\section{Handling of Duplicates\label{app:heavyleaf}}
One issue we observed in some existing \kdtree implementations (e.g., \cgal{}) is
the inefficiency in dealing with duplicate points.
Because many points may fall onto the split hyperplane and a default approach will put all of them on one side of
the tree, the tree height (and thus the query performance) may degenerate significantly.
\ourtree uses a special \emph{heavy leaf} to handle this.
When all points in a node are duplicates, we use
a heavy leaf to store the coordinate and the count.

\section{Comparison to the \rtree{}\label{app:rtree}}
The \rtree{} is a commonly seen spatial index structure in practice.
The Boost library~\cite{schaling2011boost} provides an optimized sequential implementation of the \rtree{}.
We compare the \ourlib{} with the Boost \rtree{} in terms of tree construction, \knn{}, and range report.
The \rtree{} in Boost only supports sequential tree construction and updates.
For \knn{} and range report, we parallelize all queries for both \ourtree{s} and \rtree{s}.
We note that the \rtree{} supports more general functionalities than \kdtree{s}, such as range queries with arbitrary shapes, spatial overlap/intersection queries, handling non-point and high-dimensional objects, and may not be specifically optimized for \knn{} or range queries.
\cref{table:app_rtree_zdtree} summarizes the results.

With 96 cores, our parallel \ourlib{} is 91.9--115$\times$ faster than the sequential Boost \rtree{} in construction,
30--580$\times$ faster in batch insertions, and 133--2259$\times$ faster in deletion.
Since Boost \rtree{} only supports point updates, the speedup of \ourtree{} comes from both handling multiple updates in a batch
and parallelism.
Regarding the \knn{} query, the \ourlib{} is 10.4--14.8$\times$ faster than the Boost \rtree{}.
For the range report query, the \ourlib{} is 4.13--4.17$\times$ faster.
One possible reason for the speedup is that \ourtree{} can output the candidates of a subtree in parallel when its associated subspace is fully contained in the query box by running a parallel flatten.

\begin{table*}[t]
	\centering
	\small
	\setlength\tabcolsep{6pt} 
	\renewcommand{\arraystretch}{0.8} 

	\begin{tabular}{cc|c|cccc|cccc|c|c}
		\toprule
		\textbf{Benchmark}                    & \multirow{2}[2]{*}{\textbf{Baselines}} & \multirow{2}[2]{*}{\textbf{Build}} & \multicolumn{4}{c|}{\textbf{Batch Insert}} & \multicolumn{4}{c|}{\textbf{Batch Delete}} & \textbf{10-NN }  & \textbf{Range Report}                                                                                                                     \\
		\textbf{(1000M-2D)}                   &                                        &                                    & \textbf{0.01\%}                            & \textbf{0.1\%}                             & \textbf{1\%}     & \textbf{10\%}         & \textbf{0.01\%}  & \textbf{0.1\%}   & \textbf{1\%}     & \textbf{10\%}    & \textbf{(1\%)}   & \textbf{(10K, 1M]} \\
		\midrule
		\multirow{3}[2]{*}{\textbf{\uniform}} & Ours                                   & \underline{3.15}                   & \underline{.004}                           & \underline{.020}                           & \underline{.104} & \underline{.495}      & \underline{.004} & \underline{.022} & \underline{.121} & \underline{.526} & \underline{.381} & \underline{.391}   \\
		                                      & \rtree{} (seq.)                        & 363                                & .282                                       & 2.86                                       & 28.6             & 287                   & 1.10             & 11.6             & 121              & 1188             & 5.64             & 1.62               \\
		                                      & \zdtree                                & 6.70                               & n.a.                                       & n.a.                                       & n.a.             & n.a.                  & n.a.             & n.a.             & n.a.             & n.a.             & .870             & n.a.               \\
		\midrule
		\multirow{3}[2]{*}{\textbf{\varden}}  & Ours                                   & \underline{3.66}                   & \underline{.002}                           & \underline{.007}                           & \underline{.055} & \underline{.473}      & \underline{.002} & \underline{.006} & \underline{.049} & \underline{.477} & \underline{.172} & \underline{.382}   \\
		                                      & \rtree{} (seq.)                        & 336                                & .060                                       & .606                                       & 6.21             & 64.7                  & .267             & 2.23             & 24.7             & 308              & 1.79             & 1.60               \\
		                                      & \zdtree                                & 6.70                               & n.a.                                       & n.a.                                       & n.a.             & n.a.                  & n.a.             & n.a.             & n.a.             & n.a.             & .331             & n.a.               \\
		\bottomrule
	\end{tabular}%

	\vspace{.2em}
	\caption{
		\textbf{Running time (in seconds) for the \ourtree{} and other baselines on $10^9$ points in 2 dimensions. Lower is better.} \normalfont
		``\rtree{} (seq.)'': the serial \rtree{} implementation from Boost~\cite{schaling2011boost}.
		``\zdtree'': the parallel Quad/Oct-tree implementation using the Morton order from~\cite{blelloch2022parallel}.
		``10-NN'': 10-nearest-neighbor queries on $10^7$ points.
		``Range report'': orthogonal range report queries on $10^4$ rectangles, with output sizes in $10^4$--$10^6$.
		The fastest time for each test is underlined.
		``n.a.'': not applicable.
	}
	\label{table:app_rtree_zdtree}%
\end{table*}%

\section{Comparison to the \zdtree{}\label{app:zdtree}}
We also compare the \ourlib with the \zdtree~\cite{blelloch2022parallel}, which is a parallel quad/octree based on the Morton order (aka. the space-filling curve).
The \zdtree maps each point to an integer by interleaving the bits of the coordinates and uses integer sort as preprocessing
so that the tree construction has $O(n)$ work and $O(n^\epsilon)$ span, where $n$ is the input size and $\epsilon<1$ a constant.

The implementation of the \zdtree is part of the PBBS~\cite{anderson2022problem}, which should support parallel tree construction, batch updates, and \knn{}.
Due to integer precision issues, \zdtree{'s} implementation only supports inputs in 2 or 3 dimensions.
We omit the evaluation for batch updates since their code has problems that are causing it to produce incorrect results.
\cref{table:app_rtree_zdtree} summarizes the comparison between the \ourlib{} and the \zdtree{}.

For construction, the \ourlib{} is $1.34$--$2.12\times$ faster than the \zdtree{} on both datasets and dimensions.
This is surprising as \zdtree{} is a quad/octree based on the space-filling curve, which handles multi-dimensional points as integers.
Therefore, the computation is simpler than the \kdtree{-based} structures in construction and updates
(and is thus reasonable to achieve higher performance).
We believe the reason is the I/O optimizations we designed in \ourlib{s}, and applying them to quad/octree can be an interesting future work.
For \knn{} query, the \ourlib{} is $1.22$--$2.40\times$ faster than the \zdtree{} on both datasets.
The reason is that the \ourlib{} uses object median as a cutting plane, which enables more efficient special prune during the searches.
Besides, the low memory usage of the tree structure of the \ourlib{} achieves higher cache utilization and thus better query performance.


\begin{table}[t]
	\centering
	\small
	\setlength\tabcolsep{2.4pt}
	\renewcommand{\arraystretch}{1.0}

	\begin{tabular}{c|c|ccc|ccc}
		\toprule
		\multirow{3}[2]{*}{\textbf{Tree}}     & \multirow{3}[2]{*}{\textbf{Baselines}} & \multicolumn{6}{c}{\textbf{Query points}}                                                                                                                    \\
		\cline{3-8}
		                                      &                                        & \multicolumn{3}{c|}{\textbf{\uniform}}    & \multicolumn{3}{c}{\textbf{\varden}}                                                                             \\
		                                      &                                        & \textbf{1-NN}                             & \textbf{10-NN}                       & \textbf{100-NN}  & \textbf{1-NN}    & \textbf{10-NN}   & \textbf{100-NN}  \\
		\midrule
		\multirow{4}[2]{*}{\textbf{\uniform}} & Ours                                   & \underline{.209}                          & \underline{.765}                     & \underline{2.66} & \underline{.086} & \underline{.162} & \underline{.774} \\
		                                      & \logtree{}                             & 2.43                                      & 4.48                                 & 16.9             & 1.21             & 2.00             & 15.7             \\
		                                      & \bhltree{}                             & .315                                      & 1.02                                 & 4.24             & .204             & .911             & 8.89             \\
		                                      & \cgal{}                                & .333                                      & 2.32                                 & 13.0             & .108             & .223             & 1.57             \\
		\midrule
		\multirow{4}[2]{*}{\textbf{\varden}}  & Ours                                   & \underline{.938}                          & \underline{1.71}                     & \underline{4.15} & \underline{.056} & \underline{.190} & \underline{.888} \\
		                                      & \logtree{}                             & t.o.                                      & t.o.                                 & t.o.             & 2.43             & 4.48             & 16.9             \\
		                                      & \bhltree{}                             & t.o.                                      & t.o.                                 & t.o.             & .315             & 1.02             & 4.24             \\
		                                      & \cgal{}                                & 1.68                                      & 3.07                                 & 7.80             & .083             & .219             & 1.28             \\
		\bottomrule
	\end{tabular}%

	\caption{
		\textbf{In-distribution and out-of-distribution \knn{} query time (in seconds) for \ourlib{} and other baselines on synthetic datasets with 3 dimensions. Lower is better.} The tree contains $10^9$ points, and the query points contains $10^7$ candidates. ``t.o.'': time out after 600s.}
	\label{table:ood}%

	\small
	\setlength\tabcolsep{2.4pt}
	\renewcommand{\arraystretch}{1.0}
	\begin{tabular}{c|cc|cccc|cccc}
		\toprule
		\multirow{2}[2]{*}{\textbf{Bench.}}                                                                       & \multirow{2}[2]{*}{\textbf{Baselines}}                & \multirow{2}[2]{*}{{$\boldsymbol{\alpha}$}} & \multicolumn{4}{c|}{\textbf{Batch Insert (1\%)}} & \multicolumn{4}{c}{\textbf{Batch Delete (1\%)}}                                                                                                                   \\
&                                              &                              & \textbf{2}                                       & \textbf{3}                                      & \textbf{5}                & \textbf{9}                & \textbf{2}                & \textbf{3}                & \textbf{5}                & \textbf{9}                \\
		\midrule
		\multicolumn{1}{c|}{\multirow{6}[2]{*}{\begin{tabular}[c]{@{}c@{}}Uniform\\ 1000M\end{tabular}}} & \multicolumn{1}{c}{\multirow{3}[1]{*}{Ours}} & 0.03                         & 2.95                                    & 4.23                                   & 6.05             & 10.3             & 3.46             & 4.60             & 6.66             & 10.9             \\
		                                                                                                 &                                              & 0.1                          & .686                                    & .799                                   & 1.36             & 1.97             & .776             & .986             & 1.52             & 2.19             \\
		                                                                                                 &                                              & 0.3                          & \underline{.104}                        & \underline{.107}                       & \underline{.123} & \underline{.152} & \underline{.121} & \underline{.134} & \underline{.171} & \underline{.232} \\
		                                                                                                 & Log-tree                                     & -                            & 2.16                                    & 2.66                                   & 3.67             & 6.19             & .396             & .485             & 1.94             & 2.39             \\
		                                                                                                 & BHL-tree                                     & -                            & 31.4                                    & 40.3                                   & 57.1             & 103              & 30.9             & 39.3             & 68.7             & 114              \\
		                                                                                                 & CGAL                                         & -                            & 1660                                    & 1815                                   & 1863             & 2145             & 41.2             & 41.3             & 45.0             & 40.2             \\
		\midrule
		\multicolumn{1}{c|}{\multirow{6}[2]{*}{\begin{tabular}[c]{@{}c@{}}Varden\\ 1000M\end{tabular}}}  & \multicolumn{1}{c}{\multirow{3}[1]{*}{Ours}} & 0.03                         & 2.81                                    & 3.05                                   & 4.21             & 9.65             & 7.79             & 10.7             & 10.9             & 20.5             \\
		                                                                                                 &                                              & 0.1                          & .485                                    & .690                                   & .978             & 1.83             & 2.09             & 2.09             & 2.65             & 4.11             \\
		                                                                                                 &                                              & 0.3                          & \underline{.055}                        & \underline{.107}                       & \underline{.157} & \underline{.350} & \underline{.049} & \underline{.112} & \underline{.127} & \underline{.237} \\
		                                                                                                 & Log-tree                                     & -                            & 2.01                                    & 2.60                                   & 3.72             & 6.07             & 1.06             & 1.14             & 1.92             & 2.30             \\
		                                                                                                 & BHL-tree                                     & -                            & 29.4                                    & 39.1                                   & 57.3             & 102              & 29.0             & 38.4             & 67.0             & 123              \\
		                                                                                                 & CGAL                                         & -                            & 849                                     & 700                                    & 582              & 599              & 13.0             & 9.53             & 23.1             & 3.90             \\
		\bottomrule
	\end{tabular}%

	\caption{
		\textbf{Batch update time (in seconds) for \ourlib{} and other baselines on synthetic datasets with dimensions 2, 3, 5, and 9. Lower is better.} The tree contains $10^9$ points, and the batch contains $10^7$ points from the same distribution as the points in the tree. Parameter $\alpha$ is the imbalance ratio used in \ourlib{}.}
	\label{table:3inba}%
\end{table}%

\section{Out-of-distribution Queries\label{app:ood}}
In the context of tree queries, the out-of-distribution (OOD) query refers to the query with a distribution that is significantly different from the data distribution used to construct the tree. This can lead to inefficiencies because the tree structure may not be optimized for such queries.
We perform both the in-distribution and out-of-distribution query for all baselines by first constructing the tree using the points from one distribution and then performing the \knn{} query using points from another distribution.
\cref{table:ood} illustrates the results.

In general, \ourtree{} still achieves the best performance.
Both the \ourlib{} and \cgal{} are relatively resistant to OOD queries, which is reasonable as the \kdtree{s} use of the object median as the cutting plane is less sensitive to the data distribution.
However, both the \logtree{} and the \bhltree{} suffer from timeouts on one of the OOD queries.
This is due to the prune heuristic used in their implementation, which skips a (sub-)tree only when the distance between the query point and the cutting plane (rather than the bounding box) is larger than the current best candidate.
While this is a commonly used heuristic that trades off more tree nodes traversed for faster computing per node, we note that this heuristic leads to much worse performance for some OOD queries.

\section{Imbalance Ratios for Batch Update\label{app:inbaratios}}
We measure the batch update time of the \ourlib{} with different values of imbalance ratio $\alpha$,
which serves as a complementary for \cref{table:summary}.
The results are summarized in \cref{table:3inba}.
Note that we omit the tree construction and queries since these metrics tested in \cref{table:summary} are not affected by the imbalance ratio.

When $\alpha=0.03$, batch insertion in the \ourlib{} almost always rebuilds the entire tree, as does batch deletion.
However, the \ourlib{} is still two orders of magnitude faster than the \bhltree{}, which rebuilds the whole tree for batch insertion as well.
The \ourlib{} is also one order of magnitude faster than the \bhltree{} for batch deletions.
When the imbalance is more tolerated and $\alpha=0.1$, the \ourlib{} becomes the fastest among all baselines for batch insertion.
For batch deletion, however, the \logtree{} is faster than the \ourlib{} on 6 out of 8 instances by a factor of 1.39--2.03$\times$, as the \logtree{} only needs to build a small portion of the tree when the batch size is small.
Finally, with $\alpha$ set to 0.3, the \ourlib{} achieves a speedup of 5.24--67.6$\times$ for batch insertion and 6.42--159$\times$ for batch deletion, compared to $\alpha=0.03$ or $\alpha=0.1$. With $\alpha=0.3$, the \ourtree{} is also the fastest among all baselines.

\begin{table}[t]

	\small
	\setlength\tabcolsep{2.2pt}
	\renewcommand{\arraystretch}{1.2}

	\begin{tabular}{cc|ccccccc}
		\toprule
		                                                       & \textbf{Tree} & \textbf{CC(M)}      & \textbf{Inst(M)}   & \textbf{IPC}     & \textbf{CRs(M)} & \textbf{CMs(M)} & \textbf{BR(M)}    & \textbf{BMs(M)}  \\
		\midrule
		\multirow{2}[2]{*}{\begin{sideways}HT\end{sideways}}   & \ours{}       & 95,300              & 31,084             & .326             & 926             & 508             & 6,832             & 58.0             \\
		                                                       & \oursbb{}     & \underline{33,885}  & \underline{17,143} & \underline{.506} & \underline{245} & \underline{134} & \underline{2,279} & \underline{18.0} \\
		\midrule
		\multirow{2}[2]{*}{\begin{sideways}HH\end{sideways}}   & \ours{}       & 69,078              & 22,127             & .320             & 557             & 361             & 4,114             & 117              \\
		                                                       & \oursbb{}     & \underline{27,987}  & \underline{10,831} & \underline{.387} & \underline{242} & \underline{150} & \underline{1,636} & \underline{44.0} \\
		\midrule
		\multirow{2}[2]{*}{\begin{sideways}CHEM\end{sideways}} & \ours{}       & 243,014             & 65,919             & \underline{.271} & 1,662           & 1,450           & 14,318            & 170              \\
		                                                       & \oursbb{}     & \underline{139,701} & \underline{32,887} & .235             & \underline{954} & \underline{820} & \underline{6,583} & \underline{107}  \\
		\midrule
		\multirow{2}[2]{*}{\begin{sideways}GL\end{sideways}}   & \ours{}       & 71,844              & 21,767             & \underline{.303} & 439             & 345             & 3,376             & 118              \\
		                                                       & \oursbb{}     & \underline{62,637}  & \underline{13,438} & .215             & \underline{407} & \underline{317} & \underline{1,981} & \underline{101}  \\
		\midrule
		\multirow{2}[2]{*}{\begin{sideways}CM\end{sideways}}   & \ours{}       & \underline{120,478} & 22,407             & \underline{.186} & \underline{692} & \underline{649} & 3,397             & \underline{173}  \\
		                                                       & \oursbb{}     & 133,104             & \underline{21,913} & .165             & 752             & 703             & \underline{3,296} & 176              \\
		\midrule
		\multirow{2}[2]{*}{\begin{sideways}OSM\end{sideways}}  & \ours{}       & \underline{71,679}  & 12,247             & \underline{.171} & \underline{460} & \underline{426} & 1,366             & 50.0             \\
		                                                       & \oursbb{}     & 75,185              & \underline{11,124} & .148             & 473             & 441             & \underline{1,178} & \underline{48.0} \\
		\bottomrule
	\end{tabular}%

	\caption{
		\textbf{Hardware profiling of vanilla \ourlib{} (\ours{}) and a variant with bounding box optimizations (\oursbb{}) for range report query on real-world datasets.
			Underlined values indicate better performance.
		}
		The range report query contains $10^4$ rectangles each with output size $10^4$--$10^6$. Different queries are performed in parallel, and each query searches the tree in serial.
		``CC'': Cycles, ``Inst'': Instructions, ``IPC'': Instructions per cycle, ``CR'': Cache reference, ``CMs'': Cache misses, ``BR'': Branches, ``BMs'': Branch misses.
	}

	\label{table:appendix:perf-hardware}%

	\small
	\setlength\tabcolsep{6pt}
	\renewcommand{\arraystretch}{1.2}
	\begin{tabular}{cc|ccccc}
		\toprule
		                                                       & \multirow{2}[2]{*}{\textbf{Tree}} & \multirow{2}[2]{*}{\textbf{Time (sec.)}} & \multicolumn{4}{c}{\textbf{Average \# of nodes proceed}}                                                            \\
		                                                       &                                   &                                          & \textbf{Leaf}                                            & \textbf{Interior} & \textbf{Skip}     & \textbf{Flatten} \\
		\midrule
		\multirow{2}[2]{*}{\begin{sideways}HT\end{sideways}}   & \ours{}                           & .587                                     & 2,675                                                    & 3,957             & \underline{1,282} & 0                \\
		                                                       & \oursbb{}                         & \underline{.268}                         & \underline{207}                                          & \underline{891}   & 605               & \underline{79}   \\
		\midrule
		\multirow{2}[2]{*}{\begin{sideways}HH\end{sideways}}   & \ours{}                           & .385                                     & 2,817                                                    & 3,621             & \underline{802}   & 3                \\
		                                                       & \oursbb{}                         & \underline{.192}                         & \underline{615}                                          & \underline{1,135} & 455               & \underline{65}   \\
		\midrule
		\multirow{2}[2]{*}{\begin{sideways}CHEM\end{sideways}} & \ours{}                           & 1.15                                     & 4,276                                                    & 5,701             & \underline{1,425} & 0                \\
		                                                       & \oursbb{}                         & \underline{.837}                         & \underline{1,330}                                        & \underline{2,506} & 1,091             & \underline{84}   \\
		\midrule
		\multirow{2}[2]{*}{\begin{sideways}GL\end{sideways}}   & \ours{}                           & .329                                     & 3,268                                                    & 5,478             & \underline{2,042} & 167              \\
		                                                       & \oursbb{}                         & \underline{.291}                         & \underline{1,285}                                        & \underline{3,484} & 1,993             & \underline{206}  \\
		\midrule
		\multirow{2}[2]{*}{\begin{sideways}CM\end{sideways}}   & \ours{}                           & \underline{.531}                         & 2,456                                                    & 3,939             & 1,053             & 429              \\
		                                                       & \oursbb{}                         & .577                                     & \underline{2,195}                                        & \underline{3,785} & \underline{1,120} & \underline{470}  \\
		\midrule
		\multirow{2}[2]{*}{\begin{sideways}OSM\end{sideways}}  & \ours{}                           & \underline{.326}                         & 529                                                      & 1,243             & 435               & 278              \\
		                                                       & \oursbb{}                         & .363                                     & \underline{236}                                          & \underline{959}   & \underline{439}   & \underline{284}  \\
		\bottomrule
	\end{tabular}%

	\caption{
		\textbf{Algorithmic statistic of vanilla \ourlib{} (\ours{}) and with bounding-box optimizations (\oursbb{}) for range report on real-world datasets. Underlined values indicate better performance.}
		The range report query contains $10^4$ rectangles each with output size $10^4$--$10^6$. Different queries are performed in parallel, and each query searches the tree in serial.
		``Leaf'': average number of leaf nodes visited per query, ``Interior'': average number of non-leaf nodes visited per query, ``Skip'': average number of nodes skipped per query, i.e., the associated sub-space does not intersect with the query box, ``Flatten'': average number of nodes flattened per query, i.e., the associated sub-space is fully contained in the query box.
	}
	\label{table:appendix:perf-alg}%
\end{table}%

\section{Profiling for Range Report Queries\label{app:profiling}}
As we mentioned, \ourlib{} does not store the bounding box for subtrees in each node, but instead computes them on-the-fly during the query.
This reduces the memory usage of the tree, while increasing the computation cost and potentially more number of nodes to be explored during the query.
To verify this trade-off, we perform the hardware profiling, as well as summarizing the algorithmic statistics, for both the \ourlib and the variant with the bounding box stored in each node.
The results are illustrated in \cref{table:appendix:perf-hardware} and \cref{table:appendix:perf-alg}, respectively.

Storing the bounding box within the node can facilitate pruning, resulting in fewer nodes being explored during searches.
However, this advantage diminishes as the dimensionality of the input points decreases.
On the contrary, reduced memory usage enhances search speed more, as the dynamically computed bounding box is nearly as tight as the exact one when the dimensionality of inputs is low.
We have discussed this in the main paper, and show the full results here in \cref{table:appendix:perf-hardware} and \cref{table:appendix:perf-alg}.

\section{High-dimensional Synthetic Dataset\label{app:synthetic}}
\begin{table*}[h]
	\centering
	\small
	\setlength\tabcolsep{6pt} 
	\renewcommand{\arraystretch}{0.8} 

	\begin{tabular}{cc|c|cccc|cccc|c|c}
		\toprule
		\textbf{Benchmark}                    & \multirow{2}[2]{*}{\textbf{Baselines}} & \multirow{2}[2]{*}{\textbf{Build}} & \multicolumn{4}{c|}{\textbf{Batch Insert}} & \multicolumn{4}{c|}{\textbf{Batch Delete}} & \textbf{10-NN }  & \textbf{Range Report}                                                                                                                                                                                 \\
		\textbf{(100M-12D)}                   &                                        &                                    & \textbf{0.01\%}                            & \textbf{0.1\%}                             & \textbf{1\%}     & \textbf{10\%}         & \textbf{0.01\%}  & \textbf{0.1\%}   & \textbf{1\%}     & \textbf{10\%}    & \boldmath{}\textbf{$10^7$ queries}\unboldmath{} & \boldmath{}\textbf{$10^4$ queries}\unboldmath{} \\
		\midrule
		\multirow{4}[2]{*}{\textbf{\uniform}} & Ours                                   & \underline{.938}                   & \underline{.001}                           & \underline{.003}                           & \underline{.017} & \underline{.115}      & \underline{.002} & \underline{.005} & \underline{.029} & \underline{.148} & \underline{51.4}                                & \underline{17.4}                                \\
		                                      & \logtree                               & 19.9                               & .013                                       & .017                                       & 4.09             & 4.22                  & .345             & .347             & .345             & .427             & t.o.                                            & s.f.                                            \\
		                                      & \bhltree                               & 14.4                               & 13.1                                       & 13.0                                       & 13.3             & 14.6                  & 10.9             & 11.1             & 11.4             & 13.1             & t.o.                                            & s.f.                                            \\
		                                      & \cgal                                  & 112                                & 164                                        & 168                                        & 157              & 171                   & .032             & .278             & 2.83             & 30.0             & 63.7                                            & 109                                             \\
		\midrule
		\multirow{4}[2]{*}{\textbf{\varden}}  & Ours                                   & \underline{1.07}                   & \underline{.002}                           & \underline{.004}                           & \underline{.025} & \underline{.269}      & \underline{.002} & \underline{.005} & \underline{.023} & \underline{.169} & .048                                            & \underline{10.5}                                \\
		                                      & \logtree                               & 19.7                               & .012                                       & .013                                       & 3.99             & 4.00                  & .342             & .349             & .346             & .427             & t.o.                                            & s.f.                                            \\
		                                      & \bhltree                               & 14.4                               & 12.8                                       & 12.7                                       & 12.8             & 14.1                  & 11.3             & 11.8             & 11.9             & 13.0             & t.o.                                            & s.f.                                            \\
		                                      & \cgal                                  & 46.7                               & 70.7                                       & 63.7                                       & 59.5             & 63.9                  & .003             & .032             & .452             & 4.82             & \underline{.046}                                & 114                                             \\
		\bottomrule
	\end{tabular}%

	\caption{
		\textbf{Running time (in seconds) for the \ourlib{} and other baselines on $10^8$ points in 12 dimensions. Lower is better.}
			The 10-NN queries $10^7$ points in parallel, and the range report contains $10^4$ rectangle range queries in parallel with output size $10^4$ to $10^6$.
			All queries searches the tree in serial.
			``t.o.'': time out after 600s.
			``s.f.'': segmentation fault.
	}
	\label{table:highdimsyn}%
\end{table*}%

We compare the \ourlib{} with other baselines on synthetic datasets with 12 dimensions on 100 million points, and present the results in \cref{table:highdimsyn}.
\ourlib{} remains its advantage on tree construction and batch update, due to higher cache efficiency.
For tree construction, the \ourlib is 18.3--21.2$\times$ faster than the \logtree, 13.4--15.4$\times$ faster than the \bhltree and 43.5--120$\times$ faster than the \cgal.
Regarding batch insertion, \ourlib achieves 3.27--240$\times$ speedup than \logtree{}, and orders of magnitude faster than \bhltree{} and \cgal{}.
For batch deletion, compared with the best baseline, \ourlib{} is 1.5--202$\times$ faster than the \cgal{}, 2.53--172$\times$ faster than \logtree{} and orders of magnitude faster than \bhltree{}.

For 10-NN queries, \ourlib is 1.24$\times$ faster than \cgal on \uniform{}, while it is slightly slower on \varden{}, where \cgal{} is 1.04$\times$ faster.
Regarding the range report, \ourlib is 6.27--10.9$\times$ faster than \cgal{}.
Both \logtree{} and \bhltree{} timeouts on \knn{} queries as explained in \cref{app:ood}. They also experience the segmentation fault on the range report for the same reason.

}

\end{document}
\endinput